\documentclass{article}

\usepackage{amsmath, amssymb, amsfonts}
\usepackage{natbib}
\usepackage{amsthm}
\usepackage[english]{babel}
\usepackage{color}

\usepackage{geometry}
\geometry{margin=1in}

\usepackage[varg]{txfonts}
\newtheorem{theorem}{Theorem}
\newtheorem{definition}{Definition}
\newtheorem{corollary}{Corollary}
\newtheorem{proposition}{Proposition}
\newtheorem{lemma}{Lemma}
\theoremstyle{remark}
\newtheorem{remark}{Remark}

\begin{document}

\raggedbottom
\allowdisplaybreaks

\title{\bfseries Sequential Posted Price Mechanisms with Correlated Valuations}
\author{Marek Adamczyk \\ \small{\texttt{adamczyk@dis.uniroma1.it}}  \\ \small{Sapienza University of Rome} \and Allan Borodin \\ \small{\texttt{bor@cs.toronto.edu}} \\ \small{University of Toronto} \and Diodato Ferraioli \\ \small{\texttt{dferraioli@unisa.it}} \\ \small{University of Salerno} \and Bart de Keijzer \\ \small{\texttt{dekeijzer@dis.uniroma1.it}} \\ \small{Sapienza University of Rome} \and Stefano Leonardi \\ \small{\texttt{leonardi@dis.uniroma1.it}} \\ \small{Sapienza University of Rome}}
\sloppy 
\date{}

\maketitle

\begin{abstract}
We study the revenue performance of sequential posted price mechanisms and some natural extensions, for a general setting where the valuations of the buyers are drawn from a correlated distribution. Sequential posted price mechanisms are conceptually simple mechanisms that work by proposing a ``take-it-or-leave-it'' offer to each buyer. We apply sequential posted price mechanisms to single-parameter multi-unit settings in which each buyer demands only one item and the mechanism can assign the service to at most $k$ of the buyers. For standard sequential posted price mechanisms, we prove that with the valuation distribution having finite support, no sequential posted price mechanism can extract a constant fraction of the optimal expected revenue, even with unlimited supply. We extend this result to the the case of a continuous valuation distribution when various standard assumptions hold simultaneously (i.e., everywhere-supported, continuous, symmetric, and normalized (conditional) distributions that satisfy \emph{regularity}, the \emph{MHR condition}, and \emph{affiliation}). In fact, it turns out that the best fraction of the optimal revenue that is extractable by a sequential posted price mechanism is proportional to ratio of the highest and lowest possible valuation. We prove that for two simple generalizations of these mechanisms, a better revenue performance can be achieved: if the sequential posted price mechanism has for each buyer the option of \emph{either} proposing an offer \emph{or} asking the buyer for its valuation, then a $\Omega(1/\max\{1,d\})$ fraction of the optimal revenue can be extracted, where $d$ denotes the degree of dependence of the valuations, ranging from complete independence ($d=0$) to arbitrary dependence ($d = n-1$). Moreover, when we generalize the sequential posted price mechanisms further, such that the mechanism has the ability to make a take-it-or-leave-it offer to the $i^{\text{th}}$ buyer that depends on the valuations of all buyers except $i$'s, we prove that a constant fraction $(2 - \sqrt{e})/4 \approx 0.088$ of the optimal revenue can be always extracted. 
\end{abstract}

\section{Introduction}
A large body of literature in the field of mechanism design focuses on the design of auctions that are optimal with respect some given objective function,
such as maximizing the social welfare or the auctioneer's revenue. This literature mainly considered direct revelation mechanisms, in which each buyer submits a bid that represents his valuation for getting the service, and the mechanism determines the winners and how much they are forced to pay. The reason for this is the \emph{revelation principle} (see, e.g., \citep{boergersmechdesign}), which implies that one may resort to studying only direct revelation mechanisms for many purposes, such as maximizing the social welfare or the auctioneer's revenue. Some of the most celebrated mechanisms follow this approach,
such as the Vickrey-Clark-Groves mechanism \citep{vickrey,clarke,groves} and the Myerson mechanism \citep{myerson}.

A natural assumption behind these mechanisms is that buyers will submit truthfully
whenever the utility they take with the truthful bid is at least as high as
the utility they may take with a different bid.
However, it has often been acknowledged that such an assumption may be too strong in a real world setting.
In particular, \citet{sandholmgilpin} highlight that this assumption usually fails because of:
1) a buyer's unwillingness to fully specify their values,
2) a buyer's unwillingness to participate in ill understood, complex, unintuitive auction mechanisms,
and 3) irrationality of a buyer, which leads him to underbid even when it is known that there is nothing to be gained from this behavior.

This has recently motivated the research about auction mechanisms that are conceptually simple. 
Among these, the class of \emph{sequential posted price mechanisms} \citep{chawla2010} is particularly attractive.
First studied by \citet{sandholmgilpin} (and called ``take-it-or-leave-it mechanisms''), these mechanisms 
work by iteratively selecting a buyer that has not been selected previously,
and offering him a price.
The buyer may then accept or reject that price.
When the buyer accepts, he is allocated the service.
Otherwise, the mechanism does not allocate the service to the buyer.
In the sequential posted-price mechanism we allow both the choice of buyer
and the price offered to that buyer to depend on the decisions of the previously selected buyers
(and the prior knowledge about the buyers' valuations).
Also, randomization in the choice of the buyer and in the charged price is allowed.
Sequential posted price mechanisms are thus conceptually simple and buyers do not have to reveal their valuations for the service.
Moreover, they possess a trivial dominant strategy (i.e., do not require 
any strategic decisions on the part of a buyer) and are individually rational (i.e., informally, participation in such an auction is never harmful to the buyer).

Sequential posted price mechanisms have been mainly studied for the setting
where the valuations of the buyers are each drawn independently from publicly known buyer-specific distributions,
called the \emph{independent values} setting.
In this paper, we study a much more general setting, and assume that the entire vector of valuations is drawn
from one publicly known distribution, which allows for arbitrarily complex dependencies among the valuations of the buyers.
This setting is commonly known as the \emph{correlated values} setting.
Our goal is to investigate questions related to the existence of sequential posted price mechanisms that achieve a high revenue.
That is, we quantify the quality of a mechanism by comparing its expected revenue to that of the \emph{optimal mechanism}: the mechanism that achieves the highest expected revenue among all dominant strategy incentive compatible and ex-post individually rational mechanisms (see the definitions in Section \ref{sec:preliminaries} below).

We assume a standard Bayesian, transferable, quasi-linear utility model and
we study a \emph{unit demand, single parameter, multi-unit} setting:
there is one service (or type of item) being provided by the auctioneer, any buyer is interested in receiving the service once, 
and the \emph{valuation} of each buyer consists of a single number
that reflects to what extent a buyer would profit from receiving the service provided by the auctioneer. The auctioneer can charge a price to a bidder, so that the utility of a bidder is his valuation (in case he gets the service), minus the charged price.
We focus in this paper on the $k$-limited supply setting, where service can be provided to at most $k$ of the buyers.
This is an important setting because it is a natural constraint in many realistic scenarios,
and it contains two fundamental special cases: the \emph{unit supply} setting (where $k = 1$), and the \emph{unlimited supply} setting where $k = n$.

\paragraph{Related work}
There has been recent substantial work on the subject of revenue performance for \emph{simple} mechanisms \citep{HartlineR09, HartN12, BabaioffILW14, RubinsteinW15,DevanurMSW15}.
In particular, \citet{BabaioffILW14} highlight the importance of understanding
what is the relative strength of simple versus complex mechanisms with regard to revenue maximization.

As described above, sequential posted price mechanisms are an example of such a simple class of mechanisms.
\citet{sandholmgilpin} have been the first ones to study sequential posted price mechanisms.
They give experimental results for the case in which values are independently drawn from the uniform distribution in $[0,1]$.
Moreover, they consider the case where multiple offers can be made to a bidder, and study the equilibria that arise from this.
\citet{blumrosenholenstein} compare fixed price (called symmetric auctions),
sequential posted price (called discriminatory auctions) and the optimal mechanism for valuations drawn from a wide class of i.i.d distributions.
\citet{BabaioffDKS12} consider \emph{prior-independent} posted price mechanisms with $k$-limited supply for the setting where
the only information known about the valuation distribution is that all valuations are independently drawn from the same distribution with support $[0,1]$. 
Posted-price mechanisms have also been previously studied in \citep{kleinbergleighton,blumhartline,blumkumar}, albeit for a non-Bayesian, on-line setting.
In a recent work \cite{feldmangravinlucier} study ``on-line'' posted price mechanisms for combinatorial auctions when valuations are independently drawn.

The works of \citet{chawla2010} and \citet{guptanagarajan} are closest to our present work,
although they only consider sequential posted price mechanisms in the independent values setting.
In particular, \citet{chawla2010} prove that such mechanisms can extract a constant factor of the optimal revenue 
for single and multiple parameter settings under various constraints on the allocations.
They also consider \emph{order-oblivious} (i.e., ``on-line'') sequential posted price mechanisms in which the order in which the order of the buyers is fixed and adversarially determined. They use order-oblivious mechanisms in order to establish some results for the more general multi-parameter case.
\citet{yan} builds on this work and strengthens some of the results of \citet{chawla2010}. Moreover, \citet{kleinbergweinberg} prove results that imply a strengthening of some of the results of \citet{chawla2010}. 

\citet{guptanagarajan} introduce a more abstract stochastic probing problem
that includes Bayesian sequential posted price mechanisms as well as
the stochastic matching problem introduced by~\citet{ChenIKKR09}. Their approximation bounds were later improved
by \citet{AdamczykSW14} who in particular matched the approximation of \citet{chawla2010} for single matroid settings. 

All previous work only consider the independent setting.
In this work we instead focus on the correlated setting.
The lookahead mechanism of \cite{ronen} is a fundamental reference for the
correlated setting and resembles our blind auction mechanism but is
different in substantial ways as we will soon indicate.
\citet{CremMcLe} made a fundamental contribution to auction theory in the correlated value setting, by characterizing exactly for which valuation distributions it is possible to extract the full optimal social welfare as revenue. They do this for the ex-post IC, interim IR mechanisms and for the dominant strategy IC, interim IR mechanisms.
%
\citet{segal} give a characterization of optimal ex-post incentive compatible and ex-post individually rational optimal mechanisms.
\citet{roughgarden-talgamcohen} study optimal mechanism design in the even more general \emph{interdependent} setting.
They show how to extend the Myerson mechanism to this setting for various assumptions on the valuation distribution.
There is now a substantial  literature \citep{dobzinski2011, roughgarden-talgamcohen, chawla2014approximate},
that develop mechanisms with good approximation guarantees for revenue maximization in the correlated setting.
These mechanisms build on the lookahead mechanism of \cite{ronen}
and thus they also differ from the mechanisms proposed in this work.

\paragraph{Contributions and outline}
We define some preliminaries and notation in Section \ref{sec:preliminaries}. In Section \ref{sec:nonexistence} we give a simple sequence of instances which demonstrates that 
for unrestricted correlated distributions the expected revenue of the best sequential posted price mechanism does not approximate within any constant factor the expected revenue of the optimal dominant strategy incentive compatible and ex-post individually rational mechanism. This holds for any value of $k$ (i.e, the size of the supply). We extend this impossibility result by proving that a constant approximation is impossible to achieve even when we assume that the valuation distribution is continuous and satisfies all of the following conditions simultaneously: the valuation distribution is supported everywhere, is entirely symmetric, satisfies \emph{regularity}, satisfies the \emph{monotone hazard rate} condition, satisfies \emph{affiliation}, all the induced marginal distributions have finite expectation, and all the conditional marginal distributions are non-zero everywhere.

The maximum revenue that a sequential posted price mechanism can generate on our examples is shown to be characterized by the logarithm of the ratio between the highest and lowest valuations in the support of the distribution. We show in Section \ref{sec:sppmpositive} that this approximation ratio is essentially tight.

Given these negative results, we consider a generalization of sequential posted price mechanisms that are more suitable for settings with limited dependence among the buyers' valuations: \emph{enhanced sequential posted price mechanisms}.
An enhanced sequential posted price mechanism
works by iteratively selecting a buyer that has not been selected previously.
The auctioneer can either offer the selected buyer a price or ask him to report his valuation.
As in sequential posted price mechanisms, if the buyer is offered a price, then he may accept or reject that price.
When the buyer accepts, he is allocated the service.
Otherwise, the mechanism does not allocate the service to the buyer.
If instead, the buyer is asked to report his valuation, then the mechanism does not allocate him the service.
Note that the enhanced sequential posted price mechanism requires that some 
fraction of buyers reveal their valuation truthfully.
Thus, the original property that the bidders not have to reveal their preferences is \emph{partially} sacrificed, in return for a more powerful class of mechanisms and (as we will see) a better revenue performance.
For practical implementation such mechanisms can be slightly adjusted by providing a bidder with a small monetary reward in case he is asked to reveal his valuation\footnote{We also note that in some realistic scenarios, the valuation of some buyers may be known a priori to the auctioneer
(through, for example, the repetition of auctions or accounting and profiling operations), which can be exploited accordingly in the design of the enhanced posted price mechanism.}.

For the enhanced sequential posted price mechanisms, we prove that again there are instances in which the revenue is not within a constant fraction of the optimal revenue. However, we show that this class of mechanisms can extract a fraction $\Theta(1/n)$ of the optimal revenue, i.e., a fraction that is independent of the valuation distribution.

This result seems to suggest that to achieve a constant approximation of the optimal revenue it is \emph{necessary} to collect all the bids truthfully.
Consistent with this hypothesis, we prove that a constant fraction of the optimal revenue can be extracted by a dominant strategy IC \emph{blind offer mechanisms}: these mechanisms inherit all the limitations of sequential posted price mechanisms (i.e., buyers are considered sequentially in an order independent of the bids; it is possible to offer a price to a buyer only when selected; and the buyer gets the service only if it accepts the offered price), except that the price offered to a bidder $i$ may now depend on the bids submitted by all players other than $i$. 
This generalization sacrifices entirely the property that buyers valuations need not be revealed, and the class of blind offer mechanisms are thus necessarily direct revelation mechanisms. However, this comes with the reward of a revenue that is only a constant factor away from optimal. 
Also, blind offer mechanisms preserve the conceptual simplicity of sequential posted price mechanisms, and are easy to grasp for the buyers participating in the auction: in particular, buyers have a conceptually simple and practical strategy, to accept the price if and only if it is not above their valuation,
regardless of how the prices are computed.
Unfortunately it turns out that blind offer mechanisms are not inherently truthful, although this is only for a subtle reason: while it is true that a buyer cannot influence the price he is offered nor the decision of the mechanism to pick him, he can still be incentivized to misreport in order to influence the probability that the supply has not run out before the buyer is picked. However, this is not a big obstruction, as we will show that there is a straightforward way to turn any blind offer mechanism into an incentive compatible one.
We stress that, even if blind offer mechanisms sacrifice some simplicity (and practicality),
we still find it theoretically interesting that a mechanism that on-line allocates items to buyers ({i.e., {\it in any order} )
and thus not necessarily allocating the items to agents maximizing their profit,say as in \cite{ronen} and  \cite{chawla2010})
is able to achieve a constant approximation of the optimal revenue even with correlated valuations. 
Moreover, our result for blind offer mechanisms has a constructive purpose: 
it provides the intermediate step en route to 
establishing revenue approximation bounds for other mechanisms.
We will show how blind offer mechanisms serve to this purpose in Section~\ref{sec:generalization} for the construction of enhanced sequential posted 
price mechanisms. 

We highlight that our positive results do not make any assumptions on the marginal valuation distributions of the buyers nor the type of correlation among the buyers. However, in Section \ref{sec:generalization} we consider the case in which the degree of dependence among the buyers is limited. In particular, we introduce the notion of \emph{$d$-dimensionally dependent distributions}. This notion informally requires that for each buyer $i$ there is a set $S_i$ of $d$ other buyers such that the distribution of $i$'s valuation when conditioning on the vector of other buyers' valuations can likewise be obtained by only conditioning on the valuations of $S_i$. Thus, this notion induces a hierarchy of $n$ classes of valuation distributions with increasing degrees of dependence among the buyers: for $d = 0$ the buyers have independent valuations, while the other extreme $d = n-1$ implies that the valuations may be dependent in arbitrarily complex ways. Note that $d$-dimensional dependence does not require that the marginal valuation distributions of the buyers themselves satisfy any particular property, and neither does it require anything from the type of correlation that may exist among the buyers. This stands in contrast with commonly made assumptions such as \emph{symmetry}, \emph{affiliation}, the \emph{monotone-hazard rate assumption}, and \emph{regularity}, that are often encountered in the auction theory and mechanism design literature.

Our main positive result for enhanced sequential posted price mechanisms then states that when the valuation distribution is $d$-dimensionally dependent, there exists an enhanced sequential posted price mechanism that extracts an $\Omega(1/d)$ fraction of the optimal revenue. The proof of this result consists of three key ingredients:
\begin{itemize}
\item An upper bound on the optimal ex-post IC, ex-post IR revenue in terms of the solution of a linear program. This upper bound has the form of a relatively simple expression that is important for the definition and analysis of a blind offer mechanism that we define subsequently. This part of the proof generalizes a linear programming characterization introduced by \citet{guptanagarajan} for 
the independent distribution setting.
\item A proof for the fact that incentive compatible blind offer mechanisms are powerful enough to extract a constant fraction of the optimal revenue of any instance. This makes crucial use of the linear program mentioned above. 
\item A conversion lemma showing that blind offer mechanisms can be turned into enhanced sequential posted price mechanisms while maintaining a fraction $\Omega(1/d)$ of the revenue of the blind offer mechanism\footnote{The sequential posted price mechanism constructed in this lemma depends on a parameter $q$ which can be adjusted to trade off the amount of valuation elicitation against the hidden constant in the $\Omega(1/d)$ expression.}.
\end{itemize}

While our focus is on proving the \emph{(non-)existence} of simple mechanisms that perform well in terms of revenue, we note the following about the computational complexity of our mechanisms: all the mechanisms that we use in  our positive results run in polynomial time when the valuation distribution is given as a description of the valuation vectors together with their probability mass\footnote{When the valuation distribution is not accessible in such a form, and can instead only be sampled from, then standard sampling techniques can be used in order to obtain an estimate of the distribution. When this estimate is reasonably accurate (which should be the case when the distribution in question does not have extreme outliers), then the mechanisms in our paper can still be used with only a small additional loss in revenue.}.

Additionally, we note that all of our negative results hold for randomized mechanisms. On the other hand: our positive results only require randomization in a limited way. For our positive results for classical sequential posted price mechanisms, only the offered prices need to be randomly chosen, while the order in which the agents are picked is arbitrary. This makes these positive results hold through for the \emph{order-oblivious} (i.e., on-line) setting in which the mechanism has no control over the agent that is picked in each iteration. Our positive result for blind offer mechanisms only requires randomized pricing in case $k < n$ and works for any ordering in which the agents are picked, as long as the mechanisms knows the ordering in advance. Our positive result for enhanced sequential posted price mechanisms requires randomized pricing and the assumption that the mechanism can pick a uniformly random ordering of the agents (i.e., holds
in the \emph{random order model} ROM of arrivals). 

Some of the proofs have been omitted in the main body of the paper. In some occasions we have replaced them by proof sketches. In all of these cases the full versions of the proofs can be found in Appendix \ref{sec:missingproofs}.

\section{Preliminaries and notation}\label{sec:preliminaries}
For $a \in \mathbb{N}$, we write $[a]$ to denote the set $\{1,\ldots,a\}$. We write $\mathbf{1}[X]$ to denote the indicator function for property $X$ (i.e., it evaluates to $1$ if $X$ holds, and to $0$ otherwise). When $\vec{v}$ is a vector and $a$ is an arbitrary element, we denote by $(a,\vec{v}_{-i})$ the vector obtained by replacing $v_i$ with $a$.

We face a setting where an auctioneer provides a service to $n$ buyers, and is able to serve at most $k$ of the buyers. As mentioned in the introduction, the buyers have valuations for the service offered, which are drawn from a \emph{valuation distribution}, defined as follows.
\begin{definition}[Valuation distribution]
A \emph{valuation distribution} $\pi$ for $n$ buyers is a probability distribution on $\mathbb{R}_{\geq 0}^n$. 
\end{definition}
We will assume throughout this paper that $\pi$ is discrete, except for in Theorem~\ref{prop:spp_continuous}.
In that proposition we assume that some standard assumptions about continuous valuation distributions hold, such as \emph{regularity}, the \emph{monotone hazard rate (MHR)} condition and \emph{affiliation}. We refer the interested reader to Appendix~\ref{apx:cont_prop} for a definition and a brief discussion of these properties.

We will use the following notation for conditional and marginal probability distributions.
For an arbitrary probability distribution $\pi$, denote by $\text{supp}(\pi)$ the support of $\pi$.
Let $\pi$ be a discrete finite probability distribution on $\mathbb{R}^n$,
let $i \in [n]$, $S \subset [n]$ and $\vec{v} \in \mathbb{R}^{n}$.
We denote by $\vec{v}_S$ the vector obtained by removing from $\vec{v}$ the coordinates in $[n] \setminus S$.
We denote by $\pi_S$ the probability distribution induced by drawing a vector from $\pi$
and removing the coordinates corresponding to index set $[n] \setminus S$.
If $S = \{i\}$ is a singleton, we write $\pi_i$ instead of $\pi_{\{i\}}$
and if $S$ consists of all but one buyer $i$, we write $\pi_{-i}$ instead of $\pi_{[n]\setminus\{i\}}$.
We denote by $\pi_{\vec{v}_S}$ the probability distribution of $\pi$ conditioned on the event that
$\vec{v}_S$ is the vector of values on the coordinates corresponding to index set $S$.
We denote by $\pi_{i, \vec{v}_S}$ the marginal probability distribution of the coordinate of $\pi_{\vec{v}_S}$
that corresponds to buyer $i$. Again, in these cases, in the subscript we will also write $i$
instead of $\{i\}$ and $-i$ instead of $[n]\setminus\{i\}$.

Each of the buyers is interested in receiving the service at most once.
The auctioneer runs a \emph{mechanism} with which the buyers interact.
In general, a mechanism consists of a specification of (i.) the strategies available to the buyers,
(ii.) a function that maps each vector of strategies chosen by the buyers to an outcome.
The mechanism, when provided with a strategy profile of the buyers,
outputs an outcome that consists of a vector $\vec{x} = (x_1, \ldots, x_n)$ and a vector $\vec{p} = (p_1, \ldots, p_n)$:
vector $\vec{x}$ is the \emph{allocation} vector, i.e., the $(0,1)$-vector that indicates to which of the buyers the auctioneer allocates the service,
and $\vec{p} = (p_1, \ldots, p_n)$ is the vector of \emph{prices} that the auctioneer asks from the buyers.
For outcome $(\vec{x},\vec{p})$, the utility of a buyer $i \in [n]$ is $x_i v_i - p_i$.
The auctioneer is interested in maximizing the \emph{revenue} $\sum_{i \in [n]} p_i$,
and is assumed to have full knowledge of the valuation distribution, but not of the actual valuations of the buyers.

We formalize the above as follows.
\begin{definition}
An \emph{instance} is a triple $(n,\pi,k)$, where $n$ is the number of participating buyers, $\pi$ is the valuation distribution, and $k \in \mathbb{N}_{\geq 1}$ is the amount of services that the auctioneer may allocate to the buyers. A \emph{deterministic mechanism} $f$ is a function from $\times_{i \in [n]} \Sigma_i$ to $\{0,1\}^n \times \mathbb{R}_{\geq 0}^n$, for any choice of \emph{strategy sets} $\Sigma_i,i\in [n]$. When $\Sigma_i = \text{supp}(\pi_i)$ for all $i \in [n]$, mechanism $f$ is called a deterministic \emph{direct revelation mechanism}. A \emph{randomized mechanism} $M$ is a probability distribution over deterministic mechanisms. For $i \in [n]$ and $\vec{s} \in \times_{j \in [n]} \Sigma_j$, we will denote $i$'s \emph{expected allocation} $\mathbf{E}_{f \sim M}[f(\vec{s})_i]$ by $x_i(\vec{s})$ and $i$'s \emph{expected payment} $\mathbf{E}_{f \sim M}[f(\vec{s})_{n+i}]$ by $p_i(\vec{s})$. (When we use this notation, the mechanism $M$ will always be clear from context.)
\end{definition}
\begin{definition}
Let $(n,\pi,k)$ be an instance and $M$ be a randomized direct revelation mechanism for that instance. Mechanism $M$ is \emph{dominant strategy incentive compatible (dominant strategy IC)} iff for all $i \in [n]$ and $\vec{s} \in \times_{j \in [n]} \text{supp}(\pi_j)$ and $\vec{v} \in \text{supp}(\pi)$,
\begin{equation*}
x_i(v_i,\vec{s}_{-i})v_i - p_i(v_i,\vec{s}_{-i}) \geq x_i(\vec{s})v_i - p_i(\vec{s}).
\end{equation*}
Mechanism $M$ is \emph{ex-post individually rational (ex-post IR)} iff for all $i \in [n]$ and $\vec{s} \in \text{supp}(\pi)$,
\begin{equation*}
x_i(s)v_i - p_i(s) \geq 0.
\end{equation*}
\end{definition}
For convenience we usually will not treat a mechanism as a probability distribution over outcomes, but rather as the result of a randomized procedure that interacts in some way with the buyers. In this case we say that a mechanism is \emph{implemented by} that procedure. The sequential posted price mechanisms are defined to be the mechanisms that are implemented by a particular such procedure, defined as follows.
\begin{definition}
An \emph{sequential posted price mechanism} for an instance $(n, \pi,k)$ is any mechanism that is implementable by iteratively selecting a buyer $i \in [n]$ that has not been selected in a previous iteration, and proposing a price $p_i$ for the service, which the buyer may accept or reject. If $i$ accepts, he gets the service and pays $p_i$, resulting in a utility of $v_i - p_i$ for $i$. If $i$ rejects, he pays nothing and does not get the service, resulting in a utility of $0$ for $i$. Once the number of buyers that have accepted an offer equals $k$, the process terminates. Randomization in the selection of the buyers and prices is allowed.  
\end{definition}
We will initially be concerned with only sequential posted price mechanisms.
Later in the paper we define and study the two generalizations of sequential posted price mechanisms that we mentioned in the introduction.

Note that each buyer in a sequential posted price mechanism has an obvious dominant strategy:
He will accept whenever the price offered to him does not exceed his valuation, and he will reject otherwise.
Also, a buyer always ends up with a non-negative utility when participating in a sequential posted price mechanism.
Thus, by the revelation principle (see, e.g., \citep{boergersmechdesign}),
a sequential posted price mechanism can be straightforwardly converted into a dominant strategy IC
and ex-post IR direct revelation mechanism that achieves the same expected revenue.

Our interest lies in analyzing the revenue performance of sequential posted price mechanisms.
We do this by comparing the expected revenue of such mechanisms to the maximum expected revenue that can be obtained
by a mechanism that satisfies dominant strategy IC and ex-post IR.
Thus, for a given instance let $OPT$ be the maximum expected revenue that can be attained by a dominant strategy IC,
ex-post IR mechanism and let $REV(\mathcal{C})$ be the maximum expected revenue achievable by some class of mechanisms $\mathcal{C}$.
Our goal throughout this paper is to derive instance-independent lower and upper bounds on the ratio $REV(\mathcal{C})/OPT$,
when $\mathcal{C}$ is the class of sequential posted price mechanisms or one of the generalizations mentioned.

A more general class of mechanisms is formed by the \emph{ex-post incentive compatible}, ex-post individually rational mechanisms.
\begin{definition}
Let $(n,\pi,k)$ be an instance and $M$ be a randomized direct revelation mechanism for that instance. Mechanism $M$ is \emph{ex-post incentive compatible (ex-post IC)} iff for all $i \in [n]$, $s_i \in \text{supp}(\pi_i)$ and $\vec{v} \in \text{supp}(\pi)$,
\begin{equation*}
x_i(\vec{v})v_i - p_i(\vec{v}) \geq x_i(s_i,\vec{v}_{-i})v_i - p_i(s_i,\vec{v}_{-i}).
\end{equation*}
\end{definition}
In other words, a mechanism is \emph{ex-post IC} if it is a pure equilibrium for the buyers to always report their valuation.
In this work we sometimes compare the expected revenue of our (dominant strategy IC and ex-post IR) mechanisms
to the maximum expected revenue of the more general class of ex-post IC, ex-post IR mechanisms.
This strengthens our positive results.
We refer the interested reader to \citet{roughgarden-talgamcohen} for a further discussion of and comparison between various solution concepts.

\section{Sequential posted price mechanisms}
We are interested in designing a posted price mechanism that, for any given $n$ and valuation distribution $\pi$, achieves an expected revenue that lies only a constant factor away from the optimal expected revenue that can be achieved by a dominant strategy IC,  ex-post IR mechanism. In this section we show that this is unfortunately impossible. In fact, 
we will show that the approximation ratios 
established in this section are asymptotically optimal in terms of a 
distributional parameter  
to be defined in section \ref{sec:sppmpositive}.

\subsection{Non-existence of good posted price mechanisms}\label{sec:nonexistence}
We next prove the following theorem.
\begin{theorem}
 For all $n \in \mathbb{N}_{\geq 2}$, there exists a valuation distribution $\pi$ such that for all $k \in [n]$ there does not exist a sequential posted price mechanism for instance $(n,\pi,k)$ that extracts a constant fraction of the expected revenue of the optimal dominant strategy IC, ex-post IR mechanism. 
\end{theorem}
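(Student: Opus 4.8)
The plan is to exhibit, for each $n \ge 2$, a one-parameter family of \emph{perfectly correlated} valuation distributions on which the ratio between the best sequential posted price revenue and $OPT$ tends to $0$; since the construction will work uniformly over $k$, this establishes the claimed non-existence. Fix a size parameter $m$ and let $\pi = \pi_m$ be supported on the diagonal $\{(t,t,\dots,t) : t \in \{2^0,2^1,\dots,2^{m-1}\}\}$, where the common value $t$ follows the \emph{equal-revenue} law $\Pr[t \ge 2^j] = 2^{-j}$ for $j \in \{0,\dots,m-1\}$. Thus all buyers always share a single value $v$, the identity $2^j\,\Pr[v \ge 2^j] = 1$ holds for every ladder threshold $2^j$, while $\mathbf{E}[v] = \Theta(m)$. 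I would lower bound $OPT$ by $\Theta(km)$ and upper bound the revenue of every sequential posted price mechanism by a quantity depending only on $n$ and $k$ (not on $m$); letting $m \to \infty$ then drives the ratio to $0$ for every fixed $n$ and every $k \in [n]$ simultaneously.

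For the lower bound on $OPT$ I would give one explicit dominant strategy IC, ex-post IR direct revelation mechanism. Process the buyers in the fixed order $1,\dots,n$; offer buyer $i$ the price $p_i = \max_{j \ne i} r_j$ (which is independent of $i$'s own report $r_i$), and allocate the service to $i$ precisely when $r_i \ge p_i$ and fewer than $k$ of the buyers $1,\dots,i-1$ have already been served. Because neither $p_i$ nor the remaining-capacity test depends on $r_i$, buyer $i$ effectively faces a fixed take-it-or-leave-it price determined by the others, so reporting truthfully is dominant and the mechanism is ex-post IR. Under truthful play all reports equal the common value $v$, so $p_i = v$ and the first $k$ buyers each accept and pay $v$, giving expected revenue $k\,\mathbf{E}[v] = \Theta(km)$. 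As this already equals the full social welfare achievable with $k$ units, and revenue never exceeds welfare under ex-post IR, we in fact obtain $OPT = \Theta(km)$.

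For the upper bound I would argue that a sequential posted price mechanism learns about $v$ only through the accept/reject answers to at most $n$ threshold queries. Writing $P^\ast$ for the largest price that is ever accepted during a run (and $0$ if there is none), every accepted offer costs at most $P^\ast$ and at most $k$ offers are accepted, so the revenue is at most $k P^\ast$ pointwise. It therefore suffices to bound $\mathbf{E}[P^\ast]$ independently of $m$. The key point is that, although the mechanism may adaptively refine its knowledge of $v$ (in the extreme, by a binary-search-like schedule of prices), it has only $n$ offers available, so it can pin down the magnitude of $v$ only up to a multiplicative factor controlled by $n$; combined with the equal-revenue identity $2^j\Pr[v \ge 2^j] = 1$, this yields a bound of the form $\mathbf{E}[P^\ast] \le g(n)$ with $g$ independent of $m$ (a short recursion on the number of remaining buyers is expected to give $g(n)$ of order $2^n$). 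Consequently every sequential posted price mechanism has expected revenue at most $k\,g(n)$.

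Combining the two bounds, for the distribution $\pi_m$ and every $k \in [n]$ the best achievable ratio is at most $k\,g(n)/\Theta(km) = O(g(n)/m)$, which tends to $0$ as $m \to \infty$ with $n$ fixed; hence no sequential posted price mechanism extracts a constant fraction of $OPT$. I expect the genuinely delicate step to be the upper bound on $\mathbf{E}[P^\ast]$: the naive tail bound $\Pr[P^\ast \ge 2^c] \le \Pr[v \ge 2^c] = 2^{-c}$ alone only gives $\mathbf{E}[P^\ast] = O(m)$, which is useless, so one must crucially exploit the limited number of price offers (equivalently, the limited number of buyers) to beat it, and it is exactly here that the adaptivity of the mechanism must be handled with care.
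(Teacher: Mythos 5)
Your construction and your lower bound on $OPT$ are fine (the ``charge each buyer the maximum of the others' reports'' mechanism is indeed dominant strategy IC and ex-post IR on your diagonal support, and extracts $k\,\mathbf{E}[v]=\Theta(km)$), and your overall strategy is sound. The genuine gap is exactly the step you yourself flag: the bound $\mathbf{E}[P^\ast]\le g(n)$ is asserted, not proved, and it is the entire content of the argument. It is not a routine omission, because the naive per-offer bound really does fail here: with $n$ buyers on the diagonal an adaptive mechanism can binary-search $v$ using the first $n-1$ responses and then charge the last buyer the bottom of the identified bucket, extracting close to $2^{n-1}$ from that buyer alone even though every single fixed price yields expected revenue at most $1$. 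So any correct proof must quantify the information gained from at most $n$ adaptive threshold queries. The lemma you need is true and can be proved in a few lines along the lines you anticipate: condition on the mechanism's internal coins and on the history before buyer $i$ is approached; each history is an event of the form $v\in I_H$ for an interval $I_H$, there are at most $2^{i-1}$ of them, and each contributes at most $p(H)\,\mathbf{Pr}[v\ge p(H),\,v\in I_H]\le p(H)\,\mathbf{Pr}[v\ge p(H)]\le 1$ to buyer $i$'s expected payment. Summing gives expected revenue at most $\sum_{i=1}^{n}2^{i-1}=2^{n}-1$ (which also dominates $k\,\mathbf{E}[P^\ast]$ up to constants), and hence a ratio of $O(2^{n}/(km))\to 0$. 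Until that computation is written out, the proof is incomplete at its crux.

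It is worth contrasting this with the paper's route, which sidesteps adaptivity altogether. The paper uses only \emph{two} non-degenerate buyers sharing the common value $1/a$ with $a$ uniform on $[m]$ (so every price $p$ satisfies $p\,\mathbf{Pr}[v\ge p]\le 1/m$ while $\mathbf{E}[v]=H_m/m$), and pads with dummy buyers of valuation $0$ to reach $n$ buyers and arbitrary $k$. With two real buyers the price offered to the second buyer depends on at most one bit of history, so the per-offer bound applies term by term and the total revenue is at most $4/m$ against $OPT=\Theta(H_m/m)$; no recursion over histories is needed. Your approach is more ambitious --- it yields a fully symmetric, fully supported $n$-buyer instance and makes visible that the true gap scales like $\log r_\pi$ versus the number of offers --- but it buys this generality at the price of the adaptive-revenue lemma, which you must actually supply.
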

\begin{proof}
We first consider the unit supply setting, i.e., instances of the form $(n,\pi,1)$. 
As a first step, we show that it is impossible to achieve a constant factor approximation when we compare a posted price mechanism to the expected expected \emph{optimal social welfare}, defined as:
$$
OSW = \mathbf{E}_{\vec{v} \sim \pi}[\max\{v_i \colon i \in [n]\}].
$$

Let $OR$ be the optimal revenue that a dominant strategy IC and ex-post IR mechanism can achieve. (Of course $OR$ depends on the valuation distribution $\pi$, but we assume that the valuation distribution is given, and implicit from context.)
It is clear that $OSW$ is an upper bound to $OR$ regardless of $\pi$, since a dominant strategy IC and ex-post IR mechanism will not charge (in expectation) any buyer a higher price than its expected valuation.

Fix $m \in \mathbb{N}_{\geq 1}$ arbitrarily, and consider the case where $n = 1$ and the valuation $v_1$ of the single buyer is taken from $\{1/a \colon a \in [m]\}$ distributed such that $\pi_1(1/a) = 1/m$ for all $a \in [m]$. In this setting, a posted price mechanism will offer the buyer a price $p$, which the buyer subsequently accepts iff $v_1 \geq p$. After that, the mechanism terminates.

Note that 
$
OSW = \frac{1}{m} \sum_{a = 1}^m \frac{1}{a}.
$
The expected revenue of the mechanism is
\begin{equation} \label{eq:rm}
RM = p\mathbf{Pr}_{v_1 \sim \pi_1}[v_1 \geq p] = p\frac{|\{a \colon 1/a \geq p\}|}{m} = \frac{|\{a \colon 1/a \geq 1/p^{-1}\}|}{mp^{-1}} = \frac{p^{-1}}{mp^{-1}} = \frac{1}{m}.
\end{equation}
Therefore:
$$
\lim_{m \rightarrow \infty} \frac{RM}{OSW} = \lim_{m \rightarrow \infty} \frac{1}{\sum_{a \in [m]} 1/a} = \frac{1}{H(m)} = 0.
$$
So, no posted price mechanism can secure in expectation a revenue that lies a constant factor away from the expected optimal social welfare. (Because our analysis is for an instance instance with only one buyer, this inapproximability result also holds for instances with independent valuations.)

We extend the above example in a simple way to a setting where the expected revenue of the optimal dominant strategy IC, ex-post IR mechanism is equal to the expected optimal social welfare.

Fix $m \in \mathbb{N}_{\geq 1}$ and consider a setting with $2$ buyers, where the type vector $(v_1,v_2)$ takes values in $\{(1/a,1/a) \colon a \in [m]\}$ according to the probability distribution where $\pi((1/a,1/a)) = 1/m$ for all $a \in [m]$. A mechanism that always gives buyer 1 the service, and charges buyer 1 the bid of buyer 2, is clearly dominant strategy IC and also clearly achieves a revenue equal to the optimal social welfare. 

In this two buyer setting, the value $OSW$ is again $OSW = \frac{1}{m} \sum_{a = 1}^m \frac{1}{a}$.
By symmetry, we may assume without loss of generality that a posted price mechanism works by first proposing a price $p_1$ to buyer $1$, and then proposing a price $p_2$ to buyer $2$ if buyer $1$ rejected the offer.
Using arguments similar to (\ref{eq:rm}), we derive that the revenue of this mechanism is:
\begin{align*}
RM & = p_1\mathbf{Pr}_{(v_1,v_2) \sim \pi}[v_1 \geq p_1] + p_2\mathbf{Pr}_{(v_1,v_2) \sim \pi}[v_1 < p_1 \cap v_2 \geq p_2] \\
 & = \frac{1}{m} + p_2\mathbf{Pr}_{(v_1,v_2) \sim \pi}[v_1 < p_1 \cap v_2 \geq p_2]
 \leq \frac{1}{m} + p_2\mathbf{Pr}_{(v_1,v_2) \sim \pi}[v_2 \geq p_2]
 = \frac{2}{m}.
\end{align*}
Therefore:
$$
\lim_{m \rightarrow \infty} \frac{RM}{OR} = \lim_{m \rightarrow \infty}\frac{RM}{OSW} \leq \lim_{m \rightarrow \infty} \frac{2}{\sum_{a \in [m]} 1/a} = 0.
$$

The above example establishes the non-existence of a good sequential posted price mechanism in the case where the service has to be provided to a single buyer.
Suppose now that the service can be provided to $2$ buyers, and each buyer gets the service at most once.
Consider again two buyers whose values are drawn from the probability distribution $\pi$ as defined above.
As above, by symmetry we may assume that our posted price mechanism first proposes price $p_1$ to buyer $1$,
and then proposes either price $p_2$ or $p_2'$ to buyer $2$: $p_2$ is proposed in case the offer was rejected by buyer $1$, and $p_2'$ is proposed otherwise. The difference with the previous analysis for the unit supply case is that the mechanism proposes a price to buyer $2$ regardless of whether buyer $1$ accepted the offer or not.

We derive:
\begin{align*}
RM & = p_1\mathbf{Pr}_{(v_1,v_2) \sim \pi}[v_1 \geq p_1 \cap v_2 < p_2'] + p_2\mathbf{Pr}_{(v_1,v_2) \sim \pi}[v_1 < p_1 \cap v_2 \geq p_2] \\
& \qquad  + (p_1 + p_2')\mathbf{Pr}_{(v_1,v_2) \sim \pi}[v_1 \geq p_1 \cap v_2 \geq p_2'] \\
 & \leq \frac{2}{m} + (p_1 + p_2')\mathbf{Pr}_{(v_1,v_2) \sim \pi}[v_1 \geq p_1 \cap v_2 \geq p_2'] \\
 & \leq \frac{2}{m} + 2\max\{p_1,p_2'\}\mathbf{Pr}_{(v_1,v_2) \sim \pi}[v_1 \geq \max\{p_1,p_2'\}] \leq \frac{4}{m}. 
\end{align*}
The optimal incentive compatible mechanism works by giving the service to both buyers while charging the bid of buyer 1 to buyer 2, and charging the bid of buyer 2 to buyer 1. The resulting expected revenue is exactly the expected optimal social welfare:
$
OR = OSW = \frac{1}{m} \sum_{a = 1}^m \frac{2}{a}.
$
We therefore obtain
$$
\lim_{m \rightarrow \infty} \frac{RM}{OR} = \lim_{m \rightarrow \infty}\frac{RM}{OSW} \leq \lim_{m \rightarrow \infty} \frac{4}{\sum_{a \in [m]} 2/a} = 0.
$$
The above yields an impossibility result for $2$-limited supply. By adding to this instance dummy buyers that always have valuation $0$, we obtain an impossibility result for $k$-limited supply, where $k \in \mathbb{N}$.
\end{proof}

We prove that the above impossibility result holds also in the continuous case even if we assume that all of the following conditions simultaneously hold: the valuation distribution is supported everywhere, is entirely symmetric, satisfies \emph{regularity}, satisfies the \emph{monotone hazard rate} condition, satisfies \emph{affiliation}, all the induced marginal distributions have finite expectation, and all the conditional marginal distributions are non-zero everywhere. We remark that \citet{roughgarden-talgamcohen} showed that when all these assumptions are simultaneously satisfied, the optimal ex-post IC and ex-post IR mechanism is the Myerson mechanism, that is the same mechanism that is optimal in the independent value setting. Thus, these conditions make the correlated setting very similar to the independent one with respect to revenue maximization.
Yet our result show that, whereas posted price mechanism can achieve a constant approximation revenue in the latter setting,
this result does not extend to the former one.

\begin{theorem}
\label{prop:spp_continuous}
There exists a valuation distribution $\pi$ such that
\begin{enumerate}
 \item $\pi$ has support $\left[0, 1\right]^n$;
 \item the expectation $\mathbf{E}_{\vec{v} \sim \pi}[v_i]$ is finite for any $i \in [n]$;
 \item $\pi$ is symmetric in all its arguments;
 \item $\pi$ is continuous and nowhere zero on $[0,1]^n$;
 \item the conditional marginal densities $\pi_{i \mid \vec{v}_{-i}}$ are nowhere zero for any $\vec{v}_{-i} \in [0,1]^{n-1}$ and any $i \in [n]$;
 \item $\pi$ has a monotone hazard rate and is regular;
 \item $\pi$ satisfies affiliation.
\end{enumerate}
for which there does not exist a sequential posted price mechanism in which valuations are distributed according to $\pi$ that extracts more than a constant fraction of the expected revenue of the optimal dominant strategy IC, ex-post IR mechanism.
\end{theorem}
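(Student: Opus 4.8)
The plan is to mirror the structure of the discrete impossibility argument of the previous theorem: reduce to the unit-supply case and recover general $k$ by padding the instance with dummy buyers of value $0$; write the target quantity as $RM/OR$, where $RM$ is the revenue of an arbitrary sequential posted price mechanism and $OR$ is the optimal dominant strategy IC, ex-post IR revenue; lower bound $OR$, upper bound $RM$, and drive the ratio to $0$ along a family of instances. The genuinely new work is to replace the perfectly correlated ``diagonal'' distribution on $\{(1/a,\dots,1/a)\}$ — which grossly violates continuity, everywhere-support, and in particular the monotone hazard rate and regularity conditions — by an honestly continuous, everywhere-supported, symmetric, affiliated distribution on $[0,1]^n$ that still forces the same qualitative gap.

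First I would fix the family. A natural candidate retains the spirit of the diagonal construction by introducing a \emph{common latent component}: draw a latent level $\theta$, and let the valuations be conditionally i.i.d.\ from a smooth, nowhere-zero conditional density on $[0,1]$ whose location is governed by $\theta$. Sharing $\theta$ across buyers simultaneously yields affiliation, nowhere-zero conditional marginals, and full symmetry (conditions $3$, $5$, $7$), while smoothness and full support of the conditional densities give conditions $1$ and $4$. The distribution of $\theta$ and the conditional family are then tuned so that the realized valuations are spread across an increasing range of scales as the family parameter (the number of buyers, hence the spread of $\theta$) grows, while the induced symmetric marginal $\pi_i$ still satisfies regularity, finite expectation, and the (appropriately defined multivariate) monotone hazard rate condition of conditions $2$ and $6$. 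Checking these analytic conditions together is where most of the technical effort lies.

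Next I would bound the two revenues. For the lower bound on $OR$, I would invoke the result of \citet{roughgarden-talgamcohen} quoted above: under conditions $1$--$7$ the optimal mechanism is the Myerson mechanism, so it suffices to lower bound \emph{its} expected revenue. Alternatively, and more in the style of the previous proof, I would exhibit directly a mechanism that prices each buyer $i$ using the reports of the remaining buyers, which through affiliation and the shared $\theta$ pin down $i$'s conditional value and allow near-efficient extraction, exactly as charging ``the other buyer's bid'' did in the two-buyer discrete instance. For the upper bound on $RM$, by symmetry I may assume a fixed order of offers, and since a plain posted price mechanism observes only the accept/reject history rather than reported values, its revenue decomposes, as in the displayed computations of the previous theorem, into a sum of terms of the form (price)$\,\times\,\mathbf{Pr}[\text{reach and accept}]$. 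Here affiliation works \emph{against} the mechanism: a rejection signals that $\theta$ is low, so all subsequent offers can only collect small amounts, capping the total expected revenue at a constant multiple of the marginal monopoly revenue $\max_p p\,\mathbf{Pr}_{v \sim \pi_i}[v \geq p]$.

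Combining the bounds, $RM/OR$ is controlled by the ratio of the marginal monopoly revenue to the much larger correlated optimal revenue, which tends to $0$ as the valuations are spread over more scales; this is what recovers the characterization of the extractable fraction by the logarithm of the ratio between the largest and smallest valuations carrying mass. I expect the main obstacle to be precisely this construction step, and within it the delicate point is conceptual rather than computational: the monotone hazard rate condition, imposed per buyer, tends to make a \emph{single} posted price competitive, so the failure of sequential posted pricing cannot come from a heavy-tailed marginal and must instead be engineered through the correlation structure together with the limited-supply constraint. Reconciling this with everywhere-support, affiliation, symmetry, nowhere-zero conditionals, and regularity — all at once — is the crux; a secondary difficulty is making the accept/reject revenue decomposition of the discrete proof fully rigorous under a merely affiliated continuous distribution rather than exact correlation.
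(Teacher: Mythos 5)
Your plan coincides with the paper's proof: the authors also use a common latent component $V$ (drawn from the scale-spread, equal-revenue-type density $f_V(x)=1/((m-1)x^2)$ on $[1/m,1]$) with conditionally independent, truncated-exponential noise added per buyer to secure continuity, full support, symmetry, affiliation, and MHR of the \emph{conditional} marginals, they lower bound $OR$ by pricing buyer $1$ at buyer $2$'s value minus $\epsilon$, and they upper bound each posted-price offer by the marginal monopoly revenue $O(1/m)$ against an optimum of $\Theta(\log m/m)$. The one caveat is that what you defer as ``checking the analytic conditions'' is in fact the bulk of the paper's proof (the MHR and affiliation verifications require the constraint $m\geq(c-\log c)/2$ and a delicate computation showing $\gamma(x)/g(x)\geq\ln(c)$), so the proposal is a correct outline of the same route rather than a complete argument.
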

\begin{proof}
Consider $c > 1$ and $m \geq (c - \log c)/2$ and set $M = 1 + 1/m$.
Let $V$ be a random variable whose value is drawn over the support $[1/m,1]$
according to the probability density function
$$
 f_V(x) = \frac{1}{(m-1)x^2}.
$$
Let $N_1$ and $N_2$ be two random variables whose values are independently drawn over the support $[0, M - v]$
according to the conditional density function
$$
 f_{N\mid V}(z \mid V = v) = \frac{c^{-z} \ln(c)}{Z(v)}
$$
with $c > 1$ and $Z(v) = 1 - c^{-(M-v)}$.
Finally, let $f$ be the probability density function of the pair $(X, Y) = (V+N_1-1/m, V+N_2-1/m)$.

Properties 1, 2 and 3 are trivial and can be immediately checked.
 
For $v \in [0,1]$, let $f_{X|V}(\cdot | V = v)$ and $f_{Y|V}(\cdot | V = v)$ be respectively the probability density functions of $X$ and $Y$ conditioned on the event that $V = v$. In order to establish the remaining properties, observe that
$f_{X \mid V}(x \mid V = v) = f_{N\mid V}(x - v + 1/m \mid V = v)$ if $x + 1/m \geq v$ and 0 otherwise.
Equivalently, $f_{Y \mid V}(y \mid V = v) = f_{N\mid V}(y - v + 1/m \mid V = v)$ if $y + 1/m \geq v$ and 0 otherwise.
Consider now the triple $(X, Y, V)$. The joint density function of this triple is
$$
 f_{X,Y,V}(x,y,v) = f_{X \mid Y,V}(x \mid Y = y, V = v) \cdot f_{Y\mid V}(y \mid V = v) \cdot f_V(v).
$$
Note that $f_{X \mid Y,V}(x \mid Y = y, V = v) = f_{X \mid V}(x \mid V = v)$ if $\min\{x,y\} + 1/m \geq v$ and 0 otherwise.
Then
$$
 f_{X,Y,V}(x,y,v) = f_{N\mid V}(x - v + 1/m\mid V = v) \cdot f_{N\mid V}(y - v + 1/m \mid V = v) \cdot f_{V}(v).
$$
if $\min\{x,y\} + 1/m \geq v$ and 0 otherwise. Hence, we can compute $f$ as follows:
$$
 f(x,y) = \int_{1/m}^1 f_{X,Y,V}(x,y,v) dv = \frac{\ln^2(c)}{m-1} \cdot c^{-(x+y)} \cdot \int_{1/m}^{\alpha+1/m} \frac{c^{2v}}{v^2 Z(v)^2} dv,
$$
where $\alpha = \min\left\{1-1/m, x, y\right\}$.
Note that the integrated function is continuous and positive in the interval in which it is integrated.
Hence, the integral turns out to be non-zero.
From this, we observe that $f(x,y)$ is continuous and nowhere zero on $[0,1]^2$, satisfying Property 4.

Let us now derive the conditional probability density functions. By symmetry it will be sufficient to focus only on $f_{X \mid Y}$.
\begin{align*}
 f_{X \mid Y}(x \mid Y = y) & = \int_{1/m}^{1} f_{X \mid Y,V}(x \mid Y = y, V = v) \cdot f_V(v) dv = \frac{\ln(c)}{m-1} \cdot c^{-x} \cdot \int_{1/m}^{\alpha+1/m} \frac{c^v}{v^2 Z(v)} dv\\
 & = \frac{m^2 c^M \ln(c)}{m-1} \cdot c^{-x} \cdot \int_{0}^{\alpha} \frac{1}{(mz+1)^2 (c^{1-z} - 1)} dz.
\end{align*}
It is now obvious that the conditional probability density functions are continuous and nowhere zero, as desired by Property 5.
 
Let $\gamma(z) = 1/((mz+1)^2 (c^{1-z} - 1))$, $g(a) = \int_0^a \gamma(z) dz$ with $a \in [0,1]$ and let $\alpha' = \min\{y, 1-1/m\}$.
Then
$$
 f_{X \mid Y}(x \mid Y = y) = \frac{m^2 c^M \ln(c)}{m-1} \cdot c^{-x} \cdot \begin{cases}
                                                                     g(x), & \text{if } x < \alpha';\\
                                                                     g(\alpha'), & \text{otherwise}.
                                                                    \end{cases}
$$
Moreover, we have that
\begin{align*}
 & 1 - F_{X \mid Y}(x \mid Y = y) = \int_x^1 f_{X \mid Y}(z \mid Y = y) dz \\
 & \qquad = \frac{m^2 c^M \ln(c)}{m-1} \cdot \begin{cases}
                                                                                                     \int_x^{\alpha'} c^{-z} g(z) dz + g(\alpha') \int_{\alpha'}^1 c^{-z} dz, & \text{if } x < \alpha';\\
                                                                                                     g(\alpha') \int_x^1 c^{-z} dz = \frac{g(\alpha') (c^{-x} - c^{-1})}{\ln(c)}, & \text{otherwise}.
                                                                                                    \end{cases}
\end{align*}
Hence, the inverse hazard rate is
$$
 I(x) = \frac{1 - F_{X \mid Y}(x \mid Y = y)}{f_{X \mid Y}(x \mid Y = y)} = \begin{cases}
                                                                              \frac{\int_x^{\alpha'} c^{-z} g(z) dz + g(\alpha') \int_{\alpha'}^1 c^{-z} dz}{c^{-x} g(x)}, & \text{if } x < \alpha';\\
                                                                              \frac{1 - c^{x-1}}{\ln(c)}, & \text{otherwise}.
                                                                             \end{cases}
$$
We prove that $I(x)$ is non-increasing in $x$ in the interval $[0,1]$ and thus $f$ has the monotone hazard rate and is, as a consequence, regular, as required by Property 6.
 
Clearly, $I(x)$ is non-increasing in $x$ in the interval $[\alpha', 1]$ since in this case $I(x) = (1 - c^{x-1})/\ln(c)$.
Moreover, $I(x)$ does not have discontinuities for $x = \alpha'$.
So, it is sufficient to show that $I(x)$ in non-increasing also in the interval $[0, \alpha']$.
To this aim, observe that for $x < \alpha'$,
\begin{align*}
 \frac{d I(x)}{dx} & = \frac{d}{dx} \frac{\int_x^{\alpha'} c^{-z} g(z) dz + g(\alpha') \int_{\alpha'}^1 c^{-z} dz}{c^{-x} g(x)}\\
 & = \left(c^{-x} g(x) \frac{d}{dx} \int_x^{\alpha'} c^{-z} g(z) dz - \int_x^{\alpha'} c^{-z} g(z) dz \frac{d}{dx} c^{-x} g(x) \right. \\
 & \qquad \left. + c^{-x} g(x) g(\alpha') \frac{d}{dx} \int_{\alpha'}^1 c^{-z} dz - g(\alpha') \int_{\alpha'}^1 c^{-z} dz \frac{d}{dx} c^{-x} g(x)\right)/(c^{-x} g(x))^2.
\end{align*}
Observe that, according to the second fundamental theorem of calculus,
$$
 \frac{d}{dx} \int_x^{\alpha'} c^{-z} g(z) dz = - c^{-x} g(x),
$$
whereas
$$
 \frac{d}{dx} c^{-x} g(x) = c^{-x} \gamma(x) - c^{-x} g(x) \ln(c),
$$
and $\frac{d}{dx} \int_{\alpha'}^1 c^{-z} dz = 0$.
Then,
$$
 \frac{d I(x)}{dx} = -1 + \left(\ln(c) - \frac{\gamma(x)}{g(x)}\right) \frac{\int_x^{\alpha'} c^{-z} g(z) dz + g(\alpha') \int_{\alpha'}^1 c^{-z} dz}{c^{-x} g(x)} = -1 + \left(\ln(c) - \frac{\gamma(x)}{g(x)}\right) I(x).
$$
The result then follows by showing that $\gamma(x)/g(x) \geq \ln(c)$.
 
To this aim, let us consider the function $\gamma'(z) = c^{z}(c^{1-z}-1) (mz+1)^2$ for $z \in [0,x]$.
Note that
$$
 \frac{d \gamma'(z)}{dz} = 2mc(mz+1) \left(1 - \frac{2m + mz \log c + \log c}{c^{1-z}}\right) \leq 2mc(mz+1) \left(1 - \frac{2m + \log c}{c}\right) \leq 0,
$$
where we used the fact that $m \geq (c - \log c)/2$. Thus, $\gamma'(z)$ is non-increasing in its argument and, in particular,
$$
 \gamma'(z) \geq \gamma'(x) \geq (c^{x} - 1) (c^{1-x}-1) (mx+1)^2.
$$
By simple algebraic manipulation, it then follows that $\gamma(z) \leq \gamma(x) c^z/(c^{x} - 1) $.
Then
$$
 \frac{\gamma(x)}{g(x)} = \frac{\gamma(x)}{\int_0^x \gamma(z) dz} \geq \frac{\gamma(x)}{\int_0^x \frac{c^z}{c^{x} - 1} \gamma(x) dz} = \frac{c^{x} - 1}{\int_0^x c^z dz} = \ln(c),
$$
as desired.
 
Set now $C=\ln^2(c)/(m-1)$ and
let $h(a) = \int_{1/m}^{\min\{1-1/m, a\} + 1/m} \frac{c^{2v}}{v^2 Z(v)^2} dv$ with $a \in [0,1]$.
Note that integrated function is positive for any $v \in [1/m, 1]$.
Hence, the integral increases as the size of the interval in which it is defined increases.
In other word, the function $h(a)$ is non-decreasing in $a$.

Consider now the two pairs $(x,y)$ and $(x',y')$.
Moreover, let $\hat{x} = \max\{x, x'\}$ and $\check{x} = \min\{x, x'\}$ and, similarly, define $\hat{y}$ and $\check{y}$.
We show that $f(x,y) f(x',y') \leq f(\hat{x},\hat{y}) f(\check{x},\check{y})$,
satisfying in this way also Property 7.
 
Indeed,
$$
 f(x,y) f(x',y') = C^2 c^{-(x+y+x'+y')} h(\min\{x,y\})h(\min\{x',y'\}).
$$
If $x \geq x'$ and $y \geq y'$ ($x < x'$ and $y < y'$, respectively),
then $(\hat{x}, \hat{y}) = (x,y)$ ($(x',y')$, resp.) and $(\check{x}, \check{y}) = (x',y')$ ($(x,y)$, resp.),
and the desired result immediately follows.
Suppose instead that $(\hat{x}, \hat{y}) = (x, y')$ and $(\check{x}, \check{y}) = (x', y)$.
Then
$$
 f(\hat{x}, \hat{y}) f(\check{x}, \check{y}) = C^2 c^{-(x+y+x'+y')} h(\min\{x,y'\})h(\min\{x',y\}).
$$
We will prove that in this case $h(\min\{x,y\})h(\min\{x',y'\}) \leq h(\min\{x,y'\})h(\min\{x',y\})$.
First observe that on both sides one of the two factors must be $h(\min\{x,y,x',y'\}) = h(\min\{x',y\})$.
Suppose without loss of generality, that $\min\{x',y\} = y$.
Then it is sufficient to prove that $h(\min\{x',y'\}) \leq h(\min\{x,y'\})$,
or, since $h$ in non-decreasing, that $\min\{x',y'\} \leq \min\{x, y'\}$.
If $x \leq y'$, then $x' \leq x \leq y'$ by hypothesis and the claim follows.
If $y' < x$, then it immediately follows that $\min\{x',y'\} \leq y'$.
The case that $(\hat{x}, \hat{y}) = (x', y)$ and $(\check{x}, \check{y}) = (x, y')$ is similar and hence omitted.
 
Finally, observe that $\lim_{c \rightarrow \infty} f_{N \mid V}(0 \mid V = v) = 1$
and $\lim_{c \rightarrow \infty} f_{N \mid V}(z \mid V = v) = 0$ for any $z > 0$.
Hence, $\lim_{c \rightarrow \infty} X = \lim_{c \rightarrow \infty} Y = V - 1/m$.

Let us consider the case that the service can be offered to only one buyer.
In this setting, the following is a dominant strategy IC and ex-post IR mechanism: it offers
to buyer 1 the service at a price of the valuation of buyer 2 minus a fixed constant $\epsilon$. For small enough $c$, $\epsilon$ can be chosen arbitrarily small. Thus, for any $\epsilon$ there exists a choice of $c$ such that in expectation this mechanism extracts as revenue all but $\epsilon$ of the social welfare.
The expected optimal social welfare (and thus the optimal expected revenue) is:
$$
OR = OSW = \mathbf{E}_{v \sim f_V}[v - 1/m] = \int_{1/m}^1 \frac{v - 1/m}{(m-1)v^2} dv = \frac{\ln(m)}{m-1} - \frac{1}{m} \geq \frac{\ln(m - 1)}{m-1}.
$$
A posted price mechanism will offer buyer 1 a price $p_1\geq 0$, which the buyer subsequently accepts iff $X \geq p_1$. After that, if buyer 1 rejects, the mechanism offers a price $p_2$ to buyer 2. Thus, if $p_1 \in [0, 1-1/m]$, then
\begin{align*}
p_1\mathbf{Pr}_{v \sim f_V}[X \geq p_1] & = p_1 \mathbf{Pr}_{v \sim f_V}[v \geq p_1 + 1/m] \\
& = p_1 \int_{p_1+1/m}^{1} \frac{1}{(m-1) v^2} dv = \frac{p_1}{m-1} \left(\frac{1}{p_1 + 1/m} - 1\right) \leq \frac{1-p_1}{m-1} \leq \frac{1}{m-1}.
\end{align*}
Moreover, if $p_1 \geq 1 - 1/m$, then $p_1\mathbf{Pr}_{v \sim f}[X \geq p_1] = 0$. Hence,
$$
RM = p_1\mathbf{Pr}_{v \sim f}[v \geq p_1] + p_2\mathbf{Pr}_{v \sim f}[v < p_1 \cap v \geq p_2] \leq \frac{1}{m-1} + p_2\mathbf{Pr}_{v \sim f}[v \geq p_2] \leq  \frac{2}{m-1}.
$$
Therefore:
\[
\lim_{m \rightarrow \infty} \frac{RM}{OR} = \lim_{m \rightarrow \infty}\frac{RM}{OSW} \leq \lim_{m \rightarrow \infty} \frac{2}{\ln(m - 1)} = 0.
\]

The case in which it can be offered to both buyer is similar and omitted.
\end{proof}

We note that above theorem can be easily extended to a any number of buyers,
by adding dummy buyers whose valuation is independently drawn over the support $[0,1]$
according to the probability density function
$$
 f(x) = \frac{c^{-x} \ln c}{1 - c^{-1}}.
$$
Note that, with this extension, the resulting distribution $\pi$ does not satisfy the symmetry condition anymore. In the result of \citet{roughgarden-talgamcohen}, symmetry is not necessary for the optimality of the Myerson mechanism to hold.

\subsection{A revenue guarantee for sequential posted price mechanisms}
\label{sec:sppmpositive}
In the previous section we have demonstrated that it is impossible to have
a sequential posted price mechanism extract a constant fraction of the optimal revenue.
More precisely, in our example instances it was the case that the expected revenue
extracted by every posted price mechanism is a $\Theta(1/\log(r))$ fraction of the optimal expected revenue,
where $r$ is the ratio between the highest valuation and the lowest valuation in the support of the valuation distribution.
A natural question that arises is whether this is the worst possible instance in terms of revenue extracted, as a function of $r$.
We show here that this is indeed the case, asymptotically:
For every valuation distribution $\pi$, there exists a mechanism that extracts in expectation
at least a $\Theta(1/\log(r))$ fraction of the revenue of the optimal revenue.
We note that in many realistic scenarios, we do not expect the extremal valuations of the buyers to lie too far from each other,
because often the valuation of a buyer is strongly impacted by prior objective knowledge of the value of the service to be auctioned.
The results of this section are valuable when that is the case. 

We start with the unit supply case.
\begin{definition}
For a valuation distribution $\pi$ on $\mathbb{R}^n$, let $v_{\pi}^{\max}$ and $v_{\pi}^{\min}$ be $\max\{v_i \colon v \in \text{supp}(\pi), i \in [n]\}$ and $\min\{\max\{v_i \colon i \in [n]\} \colon v \in \text{supp}(\pi)\}$ respectively. Let $r_{\pi} = v_{\pi}^{\max}/v_{\pi}^{\min}$ be the ratio between the highest and lowest coordinate-wise maximum valuation in the support of $\pi$.
\end{definition}

\begin{proposition}\label{ubsppunitsupply}
Let $n \in \mathbb{N}_{\geq 1}$, and let $\pi$ be a probability distribution on $\mathbb{R}^n$.
For the unit supply case there exists a sequential posted price mechanism that, when run on instance $(n,\pi,1)$, extracts in expectation at least an $\Omega(1/\log(r_{\pi}))$ fraction of the expected revenue of the expected optimal social welfare (and therefore also of the expected revenue of the optimal dominant strategy IC and ex-post IR auction).
\end{proposition}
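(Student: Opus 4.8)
The plan is to offer a single, fixed price to the buyers one at a time (in an arbitrary order), stopping as soon as some buyer accepts, and to choose that price from a geometric grid spanning the range of possible winning valuations. Write $W = \max\{v_i \colon i \in [n]\}$, so that $OSW = \mathbf{E}_{\vec{v} \sim \pi}[W]$, and note that by the definitions of $v_{\pi}^{\min}$ and $v_{\pi}^{\max}$ the variable $W$ lies in $[v_{\pi}^{\min}, v_{\pi}^{\max}]$ with probability $1$. For a fixed price $p$, the mechanism just described collects exactly $p$ whenever some buyer has valuation at least $p$, i.e.\ whenever $W \geq p$, and nothing otherwise; hence its expected revenue is
\[
\rho(p) = p \cdot \mathbf{Pr}_{\vec{v} \sim \pi}[W \geq p].
\]
Since the price is fixed and the outcome depends only on $W$, this quantity is independent of the order in which the buyers are considered.

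The crux is to show $OSW \leq O(\log r_{\pi}) \cdot \max_p \rho(p)$. For this I would use the layer-cake identity $\mathbf{E}[W] = \int_0^{\infty} \mathbf{Pr}[W \geq t]\, dt$ together with a dyadic discretization of the price range. Put $p_j = v_{\pi}^{\min} 2^j$ for $j = 0, \ldots, J$ with $J = \lceil \log_2 r_{\pi} \rceil$, so that $p_0 = v_{\pi}^{\min}$ and $p_J \geq v_{\pi}^{\max}$. Using $\mathbf{Pr}[W \geq t] = 1$ for $t \leq v_{\pi}^{\min}$ and bounding the integrand on each block $[p_j, p_{j+1}]$ (of length $p_{j+1} - p_j = p_j$) by its left endpoint value $\mathbf{Pr}[W \geq p_j]$, each block contributes at most $p_j \mathbf{Pr}[W \geq p_j] = \rho(p_j)$. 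Since also $v_{\pi}^{\min} = \rho(p_0)$, summing over the at most $J$ blocks gives $OSW \leq 2 \sum_{j=0}^{J-1} \rho(p_j) \leq 2J \max_j \rho(p_j)$.

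This immediately yields the result: the auctioneer, who knows $\pi$, can compute the grid price $p_j$ maximizing $\rho(p_j)$ and offer it, extracting expected revenue at least $OSW / (2J) = \Omega(OSW / \log r_{\pi})$; a uniformly random grid price achieves the same bound in expectation, which explains why only randomized pricing (and no control of the order) is needed. Combining with the fact, already observed for arbitrary $\pi$, that $OSW \geq OR$ completes the argument. The only delicate point is the bookkeeping at the two ends of the grid—ensuring that the full probability mass below $v_{\pi}^{\min}$ is absorbed into the $\rho(p_0)$ term and that $p_J \geq v_{\pi}^{\max}$, so that $O(\log r_{\pi})$ grid points suffice; everything else is a routine estimate.
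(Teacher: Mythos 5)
Your proposal is correct and takes essentially the same approach as the paper: the paper uses the identical mechanism (a single price drawn from the dyadic grid $\{v_{\pi}^{\min}2^j\}$ offered to the buyers in arbitrary order) and the same bucketing estimate, phrased as ``with probability $1/|S|$ the uniformly random grid price lands in the bucket containing $W=\max_i v_i$ and then collects at least $W/2$,'' which is exactly your inequality $\sum_j \rho(p_j) \geq \mathbf{E}[W]/2$ derived instead via the layer-cake identity. The only cosmetic difference is your added observation that the best grid price can be selected deterministically rather than uniformly at random.
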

\begin{proof}
The proof use a standard bucketing trick (see, e.g., \cite{secretary}).
Specifically, let $M$ be the sequential posted price mechanism that draws a value $p$ uniformly at random from the set $S = \{v_{\pi}^{\min}2^k \colon k \in [\lceil \log(r_{\pi}) - 1 \rceil] \cup \{0\}\}$. $M$ offers price $p$ to all the bidders in an arbitrary order, until a bidder accepts.

Let $\pi_{\max}$ be the probability distribution of the coordinate-wise maximum of $\pi$.
Note that $|S|$ does not exceed $\log(r_{\pi}^{\max})$. Therefore the probability that $p$ is the highest possible value (among the values in $S$) that does not exceed the value drawn from $\pi_{\max}$, is equal to $1/\log(r_{\pi})$. More formally, let $\pi_S$ be the probability distribution from which $p$ is drawn; then
\begin{equation*}
\mathbf{Pr}_{v^{\max} \sim \pi_{\max}, p \sim \pi_S }[p \leq v^{\max} \cap (\not\exists p' \in S \colon p' > p \wedge p' \leq v^{\max})] \leq \frac{1}{\log(r_{\pi})}.
\end{equation*}
Thus, with probability $1/\log(r_{\pi})$, the mechanism generates a revenue of exactly $v_{\pi}^{\min}2^k$, where $k$ is the number such that the value drawn from $\pi_{\max}$ lies in between $v_{\pi}^{\min}2^k$ and $v_{\pi}^{\min}2^{k+1}$. This implies that with probability $1/\log(r_{\pi})$ the mechanism generates a revenue that lies a factor of at most $1/2$ away from the optimal social welfare $OPT(\vec{v})$ (i.e., the coordinate-wise maximum valuation):
\begin{equation*}
\mathbf{E}_{\vec{v} \sim \pi}[\text{revenue of }M(\vec{v})] \geq \frac{1}{\log(r_{\pi})}\frac{1}{2}\mathbf{E}_{\vec{v} \sim \pi}[OPT(\vec{v})] \geq \frac{1}{2\log(r_{\pi})}\mathbf{E}_{\vec{v} \sim \pi}[OPT(\vec{v})].\qedhere
\end{equation*}
\let\qed\relax
\end{proof}
This result can be generalized to yield revenue bounds for the case of $k$-limited supply, where $k > 1$.
We prove a more general variant of the above in Appendix~\ref{apx:bound_limited}.

The above result does not always guarantee a good revenue; for example in the extreme case where $v_{\pi}^{\min} = 0$.
However, it is straightforward to generalize the above theorem such that it becomes useful for a much bigger family of probability distributions:
let $\hat{v}$ and $\check{v}$ be two any two values in the support of $\pi_{\max}$,
and let $c(\hat{v}, \check{v}) = \mathbf{Pr}_{v^{\max} \sim \pi_{\max}}[\check{v} \leq v^{\max} \leq \hat{v}]$.
Then by replacing the values $v_{\pi}^{\max}$ and $v_{\pi}^{\min}$ in the above proof by respectively $\hat{v}$ and $\check{v}$,
we obtain a sequential posted price mechanism that extracts in expectation
a $c(\hat{v},\check{v})/(2\log( \hat{v}/ \check{v}))$ fraction of the optimal social welfare.
By choosing $\hat{v}$ and $\check{v}$ such that this ratio is maximized,
we obtain a mechanism that extracts a significant fraction of the optimal social welfare
in any setting where the valuation distribution of a buyer is concentrated in a relatively not too large interval.

A better result can be given for the unlimited supply case.
\begin{definition}
For a valuation distribution $\pi$ on $\mathbb{R}^n$ and any $i \in [n]$, let $v_{\pi,i}^{\max}$ and $v_{\pi,i}^{\min}$ be $\max\{v_i \colon \vec{v} \in \text{supp}(\pi)\}$ and $\min\{v_i \colon \vec{v} \in \text{supp}(\pi)\}$ respectively. Let $r_{\pi,i} = v_{\pi,i}^{\max}/v_{\pi,i}^{\min},$ be the ratio between the highest and lowest valuation of buyer $i$ in the support of $\pi$.
\end{definition}

\begin{proposition}\label{prop:logunlimited}
Let $n \in \mathbb{N}_{\geq 1}$, and let $\pi$ be a probability distribution on $\mathbb{R}^n$. 
There exists a sequential posted price mechanism that, when run on instance $(n, \pi, n)$, extracts in expectation at least an $\Omega(1/\log(\max\{r_{\pi,i} \colon i \in [n]\}))$ fraction of the expected revenue of the expected optimal social welfare (and therefore also the expected revenue of the optimal dominant strategy IC and ex-post IR mechanism).
\end{proposition}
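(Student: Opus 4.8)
The plan is to exploit the crucial feature of the unlimited supply setting: since $k = n$, no buyer ever exhausts a unit of supply, so the offer made to each buyer can be handled completely independently of the decisions of the others. This decouples the problem into $n$ single-buyer subproblems, to each of which I would apply the bucketing argument already used in the proof of Proposition~\ref{ubsppunitsupply}.

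First I would observe that in the unlimited supply case the optimal social welfare is simply the sum of all valuations, $OSW = \mathbf{E}_{\vec{v} \sim \pi}\left[\sum_{i \in [n]} v_i\right] = \sum_{i \in [n]} \mathbf{E}_{\vec{v} \sim \pi}[v_i]$, because every buyer can be served simultaneously. As in the previous results, $OSW$ upper bounds the optimal dominant strategy IC, ex-post IR revenue, so it suffices to compare the mechanism's revenue against $OSW$.

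Next, for each buyer $i$ I would define a price set $S_i = \{v_{\pi,i}^{\min} 2^\ell \colon \ell \in [\lceil \log(r_{\pi,i}) - 1\rceil] \cup \{0\}\}$ based on buyer $i$'s own valuation range, of cardinality at most $\log(r_{\pi,i})$, and have the mechanism offer buyer $i$ an independently and uniformly drawn price from $S_i$. Conditioning on the realized valuation $v_i$ and repeating the single-buyer analysis, the probability that the offered price is the largest element of $S_i$ not exceeding $v_i$ is at least $1/\log(r_{\pi,i})$, and in that event the revenue collected from buyer $i$ is at least $v_i/2$. Taking the expectation over $v_i$ yields an expected revenue from buyer $i$ of at least $\mathbf{E}[v_i]/(2\log(r_{\pi,i}))$.

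Finally I would sum over all buyers; since the supply constraint never binds, these per-buyer contributions add, giving a total expected revenue of at least $\sum_{i \in [n]} \mathbf{E}[v_i]/(2\log(r_{\pi,i})) \geq (1/(2\log(\max_i r_{\pi,i}))) \sum_{i \in [n]} \mathbf{E}[v_i] = OSW/(2\log(\max_i r_{\pi,i}))$, which is the claimed $\Omega(1/\log(\max_i r_{\pi,i}))$ fraction. I do not anticipate a genuine obstacle here: the only points requiring care are the degenerate case $v_{\pi,i}^{\min} = 0$ (which makes $r_{\pi,i}$, and hence the per-buyer bound, vacuous) and the rounding implicit in the definition and size of $S_i$, both of which are routine. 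The entire conceptual content lies in recognizing that unlimited supply permits the per-buyer decoupling that fails in the unit-supply regime, where offers compete for the single item.
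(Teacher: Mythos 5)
Your proposal is correct and follows essentially the same route as the paper's own proof in Appendix~F: a per-buyer geometric (bucketing) price set $S_i$ of size at most $\log(r_{\pi,i})$, a uniformly random price offered to each buyer independently, and a sum over buyers using the fact that unlimited supply decouples the offers. The paper's argument is identical down to the final chain of inequalities bounding the revenue by $\tfrac{1}{2\log(\max_i r_{\pi,i})}\,\mathbf{E}_{\vec{v}\sim\pi}[\sum_i v_i]$.
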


The proof (found in Appendix~\ref{sec:logunlimited}) applies similar techniques as is done in the proof of Proposition \ref{ubsppunitsupply}.
The techniques used for proving these results can be applied to improve the approximation guarantees for the more general $k$-limited supply setting, for any $k \in [n]$, under special conditions. We give an example of such a result in Appendix~\ref{apx:bound_limited}.

Clearly, the stated bound of $O(1/\log(\max\{r_{\pi,i} \colon i \in [n]\}))$ is very crude.
For most practical settings we expect that it is possible to do a much sharper revenue analysis of the mechanisms in the proofs of the above propositions, by taking the particular valuation distribution into account.
Moreover, as suggested above, also for the unlimited supply case it is possible to tweak the mechanism 
in a straightforward way in order to achieve a good revenue in cases where the ratios $r_{\pi,i}$ are very large.
Finally, note that the mechanisms in the proofs of these two propositions do not take into account any dependence and correlation among the valuations of the buyers.
When provided with a particular valuation distribution,
a better revenue and sharper analysis may be obtained by taking such dependence into account,
and adapting the mechanisms accordingly.

\section{Enhanced sequential posted price mechanisms}\label{sec:generalization}
The negative results on sequential posted price mechanisms suggest that
it is necessary for a mechanism to have a means to retrieve the valuations of some of the buyers in order to improve the revenue performance.
We accordingly propose a generalization of sequential posted price mechanisms,
in such a way that they possess the ability to retrieve valuations of some buyers.
\begin{definition}
An \emph{enhanced sequential posted price mechanism} for an instance $(n, \pi,k)$ is a mechanism that can be implemented by iteratively selecting a buyer $i \in [n]$ that has not been selected in a previous iteration, and performing exactly one of the following actions on buyer $i$:
\begin{itemize}
\item Propose service at price $p_i$ to buyer $i$, which the buyer may accept or reject. If $i$ accepts, he gets the service and pays $p_i$, resulting in a utility of $v_i - p_i$ for $i$. If $i$ rejects, he pays nothing and does not get the service, resulting in a utility of $0$ for $i$.
\item Ask $i$ for his valuation (in which case the buyer pays nothing and does not get service).
\end{itemize}
Randomization is allowed.  
\end{definition}
This generalization is still dominant strategy IC and ex-post IR
(that is, the revelation principle allows us to convert them to dominant strategy IC, ex-post IR direct revelation mechanisms).
Indeed, for enhanced sequential posted price mechanisms, when a buyer gets asked his valuation,
he has no incentive to lie, because in this case he does not get service and he pays nothing\footnote{A problematic aspect is that while there is no incentive to a buyer to lie, there is also no incentive to tell the truth.
This problem is addressed in Appendix~\ref{apx:espp_fix}}.

Next we analyze the revenue performance of enhanced sequential posted price mechanisms.
For this class of mechanisms we prove that, it is unfortunately still the case that no constant fraction of the optimal revenue can be extracted.
Specifically, the next section establishes an $O(1/n)$ bound for enhanced sequential posted price mechanisms.
However, enhanced sequential posted price mechanisms turn out to be more powerful than the standard sequential posted price.
Indeed, contrary to what we had for the former ones,
the enhanced mechanisms can be shown to extract a fraction of the optimal revenue that is independent of the valuation distribution.
More precisely, the $O(1/n)$ bound turns out to be asymptotically tight.
Our main positive result for enhanced sequential posted price mechanisms is that when dependence of the valuation among the buyers is limited,
then a constant fraction of the optimal revenue can be extracted.
Specifically, in Section \ref{sec:generalizationpositive} we define the concept of $d$-dimensional dependence and  prove that for an instance $(n,\pi,k)$ where $\pi$ is $d$-dimensionally dependent,  
there exists an enhanced sequential posted price mechanism that extracts an $\Omega(1/d)$ fraction of the optimal revenue.
(This result implies the claimed $\Omega(1/n)$ bound by taking $d = (n-1)$.)

It is natural to identify the basic reason(s) why, in the case of
general correlated distributions, standard and enhanced sequential posted price
mechanisms fail to achieve a constant approximation of the optimum revenue. 
There are two main limitations of (enhanced) sequential 
posted price mechanisms; namely, 
1) such mechanisms do not solicit bids or values from all buyers, and
2) such mechanisms award items in a sequential
manner. Although it is crucial to retrieve the valuation of \emph{all} (but one of the) buyers, 
we now show, in contrast to previously known approximation results,  
that it is possible to achieve a constant fraction of the 
optimum revenue by a mechanism that allocates
items sequentially, and moreover, in an on-line manner. 
Specifically, we consider the following superclass of the enhanced sequential posted price mechanisms.
\begin{definition}
Let $(n,\pi,k)$ be an instance and let $M$ be a mechanism for that instance.
Mechanism $M$ is a \emph{blind offer mechanism} iff it can be implemented as follows. 
Let $\vec{b}$ be the submitted bid vector:
\begin{enumerate}
\item Terminate if $\vec{b} \not\in \text{supp}(\pi)$.
\item Either terminate or select a buyer $i$ from the set of buyers that have not yet been selected, such that the choice of $i$ does not depend on $\vec{b}$.
\item Propose buyer $i$ to offer service at price $p_i$, where $p_i$ is drawn from a probability distribution that depends only on $\pi_{i,\vec{b}_{-i}}$ (hence the distribution where $p_i$ is drawn from is determined by $\vec{b}_{-i}$ and in particular it does not depend on $b_i$).
\item Go to step $2$ if there is supply left, i.e., if the number of buyers who have accepted offers does not exceed $k$.
\end{enumerate}
Randomization is allowed.
\end{definition}
\begin{remark}\label{rem:blindofferic}
The price offered to a buyer is entirely determined by the valuations of the remaining buyers,
and is independent of what is reported by buyer $i$ himself. Also the iteration in which a buyer is picked cannot be influenced by his bid.
Nonetheless, blind offer mechanisms are in general not incentive compatible due to the fact that a bidder may be incentivized to misreport his bid in order to increase the probability of supply not running out before he is picked. However, blind offer mechanisms can easily be made incentive compatible as follows: let $M$ be a non-IC blind offer mechanism, let $\vec{b}$ be a bid vector and let $z_{i}(\vec{b})$ be the probability that $M$ picks bidder $i$ before supply has run out. When a bidder is picked, we adapt $M$ by \emph{skipping} that bidder with a probability $p_i(\vec{b})$ that is chosen in a way such that $z_i(\vec{b})p_i(\vec{b}) = \min\{z_i(b_i\vec{b}_{-i}) \colon b_i \in \text{supp}(\pi_i)\}$. This is a blind offer mechanism in which buyer $i$ has no incentive to lie, because now the probability that $i$ is made an offer is independent of his bid. Doing this iteratively for all buyers yields a dominant strategy IC mechanism $M'$. Note that the act of \emph{skipping} a bidder can be implemented by offering a price that is so high that a bidder will never accept it, thus $M'$ is still a blind offer mechanism. Moreover, if the probability that any bidder in $M$ is made an offer is lower bounded by a constant $c$, then in $M'$ the probability that any bidder is offered a price is at least $c$. We apply this principle in the proof of Theorem \ref{mainprop-blindoffer} below in order to obtain a dominant strategy IC mechanism with a constant factor revenue performance\footnote{More precisely, for the particular (non-truthful) blind offer mechanism that we propose and analyze in section \ref{sec:blind}, it will turn out that applying the transformation described here does not result in any additional loss in revenue.}.
\end{remark}

It is well known under the name of the \emph{taxation principle} that
a mechanism is dominant strategy IC if and only if
it can be implemented by an algorithm that works as follows:
(i) the buyers are \emph{simultaneously} presented with a payment that does not depend on their own bid;
(ii) the items are allocated to the buyers for which the profit, i.e., the difference between the bid and the price, is maximized.
This algorithm closely resembles the description of blind offer mechanisms.
However, we would like to emphasize that there are some significant differences.
First, using the taxation principle, prices are set in advance of offering 
service,
whereas in blind offer mechanisms prices are presented sequentially,
thus the price offered to the $i$-th buyer can depend on the decisions
taken by the previous buyers.
Second, using the taxation principle the winners are chosen at the end,
whereas in blind offer mechanisms the winners are decided on-line,
so that it is possible that items are allocated to buyers without maximizing the profit.
Hence,  there are dominant strategy mechanisms
cannot be implemented as blind offer mechanisms.

A blind offer mechanisms preserves the same conceptually simple structure as standard and enhanced posted price mechanisms,
but it bases its proposal to a buyer $i$ on the set of \emph{all} valuations other than that of $i$,
 whereas an ESPP mechanism bases its proposal
to a buyer $i$ only on the valuations that have been \emph{revealed} by buyers in \emph{previous} iterations.
This increases the power of blind offer mechanisms: for example,
it is not hard to see that the classical Myerson mechanism for the \emph{independent} single-item setting belongs to the class of blind offer mechanisms.
Thus blind offer mechanisms are optimal when buyers' valuations are independent.
We will prove next that, even when buyer valuations are \emph{correlated},
blind offer mechanisms can always extract a constant fraction of the optimal revenue,
even for the ex-post IC, ex-post IR solution concept. 
For
correlated valuation distributions, other mechanisms that achieve a constant approximation
to the optimal revenue have been defined by \cite{ronen}, and then by \cite{chawla2014approximate} and \cite{dobzinski2011}.
However, these mechanisms, as in the taxation principle setting, allocate the items to profit-maximizing buyers.
Thus, they are different from blind offer mechanisms in which the allocation is on-line.

\subsection{Limitations of enhanced sequential posted price mechanisms}\label{sec:generalizationnegative}
Here we show that enhanced sequential posted price mechanisms cannot extract a constant fraction
of the expected revenue of the optimal dominant strategy IC, ex-post IR mechanisms.
This is done by constructing a family of instances on which no enhanced sequential posted price mechanism can perform well.

\begin{theorem}\label{thm:enhancedlb}
For all $n \in \mathbb{N}_{\geq 2}$, there exists an valuation distribution $\pi$ such that for all $k \in [n]$ there does not exist a enhanced sequential posted price mechanism for instance $(n,\pi,k)$ that extracts more than a $O(1/n)$ fraction of the expected revenue of the optimal dominant strategy IC,
ex-post IR mechanism.
\end{theorem}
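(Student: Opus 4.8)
The plan is to construct an instance where the optimal mechanism extracts revenue roughly $n$ times larger than anything an enhanced sequential posted price (ESPP) mechanism can achieve, thereby pinning down the $O(1/n)$ bound. The natural starting point is to recycle the structure of the lower-bound instance from the first theorem: a distribution concentrated on the diagonal-type vectors $(1/a,\dots,1/a)$ with $\pi((1/a,\dots,1/a)) = 1/m$ for $a \in [m]$, so that all buyers share a common value drawn from $\{1/a : a \in [m]\}$. On such an instance the optimal dominant strategy IC, ex-post IR mechanism can charge each of the $n$ buyers a price equal to the common value (revealed through any other buyer's report), extracting $OR = OSW = \frac{n}{m}\sum_{a=1}^m 1/a = \frac{n}{m}H(m)$, which grows like $n \log m$.

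The crux is then to bound from above the revenue of an arbitrary ESPP mechanism on this instance. The key observation is what an ESPP mechanism can and cannot do: at each iteration it either \emph{asks} a buyer for his valuation (collecting no revenue from that buyer) or \emph{offers} a price (collecting at most that price if accepted). Once a single buyer has been asked, the common value is fully revealed, so the mechanism can subsequently charge every remaining buyer exactly the common value. But a buyer who is \emph{offered} a price before the value is known behaves as in the unit-demand analysis of Theorem~1: by the computation in equation~(\ref{eq:rm}), any single take-it-or-leave-it offer made before the value is revealed yields expected revenue at most $1/m$ (independent of the chosen price, since for a value uniform over reciprocals $\{1/a\}$ every posted price extracts the same $1/m$). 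Thus the first offer extracts only $O(1/m)$, whereas an ask extracts nothing but unlocks full-value pricing on all later buyers. I would argue that the mechanism's best strategy is captured by: at most $O(1)$ buyers contribute revenue before the value is known (each contributing $O(1/m)$), and \emph{crucially} an ask reveals the value but consumes an entire buyer who then yields no revenue. Since there are only $n$ buyers, and the mechanism must spend at least one buyer (either a zero-revenue ask, or a pre-revelation offer yielding only $1/m$) before it can exploit the revealed value, one shows $RM \le \frac{n-1}{m}H(m) + O(1/m)$ or similar — but this does \emph{not} immediately give the factor-$n$ gap, which is the real difficulty.

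The hard part will be forcing the factor-$n$ loss rather than merely a constant loss, and this requires a more carefully engineered distribution than the plain diagonal one; the diagonal instance only yields a constant gap because a single ask unlocks everything. I expect the correct construction must ensure that revealing one buyer's value is \emph{not} enough to price the others profitably — for instance by making the buyers' values only \emph{partially} correlated, so that each buyer's value depends on a private component that the mechanism can learn only by asking \emph{that specific} buyer (who then yields no revenue). A promising design: let there be a common coordinate that is cheap to charge plus, for each buyer, an independent reciprocal-distributed bonus term of the Theorem~1 type, so that extracting the large expected value of buyer $i$ requires knowing $i$'s own draw, which an ESPP mechanism can obtain only by asking $i$ (earning nothing) or by a blind offer (earning only $O(1/m)$ by the logarithmic-gap phenomenon). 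Then each buyer individually falls into the $\Theta(1/\log)$ trap, the mechanism can ``rescue'' at most a bounded number of buyers per revealed value, and summing over $n$ buyers yields $RM = O(OR/n)$. I would conclude by adding dummy zero-valued buyers to extend the bound from the two/$n$-buyer base case to all $k \in [n]$, exactly as in the proof of Theorem~1.
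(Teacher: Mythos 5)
You have correctly diagnosed that the diagonal instance from the first impossibility theorem only yields a constant gap and that the real difficulty is engineering a factor-$n$ loss, but your proposed fix does not work. If each buyer $i$'s large value contains an \emph{independent} private reciprocal-distributed component that can only be learned from $i$ himself, then the \emph{optimal} dominant strategy IC, ex-post IR mechanism is caught in exactly the same trap: it cannot use $i$'s own report to set $i$'s price (beyond a monotone threshold), so by the single-buyer computation in equation~(\ref{eq:rm}) it too extracts only $O(1/m)$ from that component. Your construction therefore collapses the benchmark $OR$ together with $RM$, and the ratio stays $\Theta(1)$ rather than becoming $\Theta(1/n)$. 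The full-surplus-extraction trick that makes $OR = OSW$ requires each buyer's value to be a \emph{deterministic function of the other buyers' values}, and your ``private bonus term'' destroys precisely that property.

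The paper's construction threads this needle differently: a uniformly random hidden buyer $i^\star$ receives the reciprocal-distributed jackpot value, and that value is set to $1/\bigl(\bigl(\sum_{j \neq i^\star} c_j\bigr)_{\text{mod } m} + 1\bigr)$, where the $c_j$ are the (negligibly small) values of all the other buyers. This keeps every coordinate determined by the others, so the optimal mechanism extracts the full social welfare $\approx H_m/m$; but the modular sum guarantees that as long as even one other buyer remains unqueried, the conditional distribution of $v_{i^\star}$ is still the full reciprocal distribution, so any blind offer to $i^\star$ earns only $1/m$. An ESPP mechanism can therefore profit from $i^\star$ only in the event that $i^\star$ is the \emph{last} buyer selected, and since $i^\star$ is uniformly random and the selection order cannot depend on it, this event has probability exactly $1/n$ --- which is where the $O(1/n)$ factor comes from. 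Your sketch is missing both of these essential ingredients: the ``need all $n-1$ others'' property of the value function, and the randomization of the special buyer's identity (without which the mechanism could simply save the jackpot buyer for last). Your closing claim that the mechanism can ``rescue at most a bounded number of buyers per revealed value'' is also left unjustified and is not the mechanism of the actual bound.
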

\begin{proof}
We prove this for the case of $k = n$. The proof is easy to adapt for different $k$.

Let $n \in \mathbb{N}$ and $m = 2^n$. We prove this claim by specifying an instance $I_n$ with $n$ buyers, and proving that $\lim_{n \rightarrow 0} RM(I_n)/OR(I_n) = 0$, where $RM(I_n)$ denotes the largest expected revenue achievable by any enhanced sequential posted price mechanism on $I_n$, and $OR(I_n)$ denotes the largest expected revenue achievable by a dominant strategy IC, ex-post IR mechanism. 

$I_n$ is defined as follows. Let $\epsilon \in \mathbb{R}_{> 0}$ be a number smaller than $1/nm^2$. The valuation distribution $\pi$ is the one induced by the following process: (i) Draw a buyer $i^\star$ from the set $[n]$ uniformly at random; (ii) Draw numbers $\{c_j \colon j \in [n]\setminus\{i^\star\}\}$ independently from $[m]$ uniformly at random; (iii) For all $j \in [n] \setminus \{i^\star\}$, set $v_j = c_j\epsilon$; (iv) Set 
\begin{equation}\label{modval}
v_{i^\star} = \frac{1}{\left(\sum_{j \in [n]\setminus\{i^\star\}} c_j \right)_{\text{mod } m} + 1}.
\end{equation}
Observe that for this distribution it holds that for all $i \in [n]$, the valuation $v_{i}$ is uniquely determined by the valuations  $(v_{j})_{j \in [n]\setminus \{i\}}$.
The optimal (direct revelation) mechanism can therefore extract the total optimal social welfare as its revenue, as follows:
it provides service to every buyer, and sets the payment as follows.
Let $b_i$ be the bid, i.e., the reported valuation, of buyer $i$. Then,
\begin{itemize}
\item if $b_{j} < 1/m$ for all $j \in [n]\setminus\{i\}$, charge $i$ a price of 
$
1/\left(\left(\sum_{j \in [n]\setminus\{i\}} b_j/\epsilon \right)_{\text{mod } m} + 1\right);
$
\item otherwise, if there is a buyer $j \in [n]\setminus\{i\}$ and a number $c_{i} \in [m]$ such that 
$
b_{j} = 1/\left(\left(c_i + \sum_{\ell \in [n]\setminus\{i,j\}} b_{\ell}/\epsilon\right)_{\text{mod } m} + 1\right),
$
then charge $i$ the price $c_{i}\epsilon$;
\item otherwise, the mechanism charges $i$ an arbitrary price.
\end{itemize}
This mechanism is dominant strategy IC because the mechanism's decision to provide service to a buyer does not depend on his bid, and the price that a buyer is charged is not dependent on his own bid. This mechanism is ex-post IR because bidding truthfully always gives the buyer a utility of $0$. This mechanism achieves a revenue equal to the optimal social welfare because (by definition of the pricing rule) the price that a buyer is charged is equal to the valuation of that buyer, if all buyers bid truthfully. Also, note that the third bullet in the above specification of the mechanism will not occur when the buyers bid truthfully, and is only included for the sake of completely specifying the mechanism.

We argue that $OR(I_n) = \mathbf{E}_{\vec{v} \sim \pi}\left[\sum_{i \in [n]} v_i\right] = \sum_{i \in [n]} \mathbf{E}_{\vec{v} \sim \pi}[v_i] = (n-1)m\epsilon/2 + H_m/m$, where the last equality follows because the expected valuation of each of the buyers is $((n-1)/n)(m\epsilon/2) + (1/n)(H_m/m)$. This in turn holds because a buyer is elected as buyer $i^\star$ with probability $1/n$, and buyer $i^\star$'s marginal distribution is the distribution $\pi'$ induced by drawing a value from the set $\{1/a \colon a \in [m]\}$ uniformly at random. The latter distribution has already been encountered in the beginning of Section \ref{sec:nonexistence}, where we concluded that its expected value is $H_m/m$.

We now proceed to prove an upper bound on $RM(I_n)$. Let $M$ be an arbitrary enhanced posted price mechanism. Because $M$ is randomized, running $M$ on $I_n$ can be viewed as a probability distribution on a sample space of deterministic enhanced posted price mechanisms that are run on $I_n$.
We analyze the revenue of the mechanism conditioned on three disjoint events that form a partition of this sample space. Consider first the event $E_1$ that buyer $i^\star$ gets asked for his valuation (when running $M$ on $I_n$). Conditioned on this event, the mechanism does not attain a revenue of more than $(n-1)m\epsilon$ because the revenue of each buyer in $[n] \setminus \{i^\star\}$ is at most $m\epsilon$. 

Consider next the event $E_2$ where buyer $i^\star$ does not get asked for his valuation and buyer $i^\star$ is not the last buyer that is selected. Then a price $p_{i^\star}$ is proposed to $i^\star$. Without loss of generality, $M$ draws $p_{i^\star}$ from a probability distribution $P_{i^\star}$ with finite support, and the choice of distribution $P_{i^\star}$ depends on the sequence $S$ of buyers queried prior to $i^\star$ together with the responses of the buyers in $S$. These responses take the form of a reported valuation in case a buyer in $S$ is asked to report his valuation, and the form of an accept/reject decision otherwise. Because $i^\star$ is not the last buyer selected, $[n]\setminus (S\cup\{i^\star\})$ is non-empty, and there exists a buyer $j \in [n]\setminus (S \cup \{i^\star\})$ such that the choice of $P_{i^\star}$ does not depend on $c_j$. By the fact that $c_j$ is drawn independently and uniformly at random from $[m]$ for all $j \in [n]\setminus (S \cup \{i^\star\})$ and by \eqref{modval}, the marginal probability distribution of the valuation of buyer $i^\star$ conditioned on $E_2$, is $\pi'$ (which we defined above). Therefore 
\begin{equation*}
\mathbf{E}_{p_{i^\star} \sim P_{i^\star}, \vec{v} \sim \pi}[p_{i^\star}\mathbf{Pr}[v_{i^\star} \leq p_{i^\star}]] = \mathbf{E}_{p_{i^\star} \sim P_{i^\star}, \vec{v} \sim \pi'}[p_{i^\star}\mathbf{Pr}[v_{i^\star} \leq p_{i^\star}]] = \frac{1}{m},
\end{equation*}
where the last equality follows from (\ref{eq:rm}). Thus, the expected revenue of $M$ conditioned on $E_2$ is at most $1/m + (n-1)m\epsilon$.

In the event $E_3$, the mechanism selects $i^\star$ last. The expected revenue of $M$ conditioned on this event is at most the expected maximum social welfare: $(n-1)m\epsilon/2 + H_m/m$. The probability of event $E_3$ occurring is $1/n$, because of the following. For $\ell \in [n]$, let $E_3^\ell$ be the event that $i^\star$ is not the $\ell$-th buyer selected by $M$, and let $E_3^{<\ell}$ be the event that $i^\star$ is not among the first $\ell-1$ buyers selected by $M$. Note that this means that $\mathbf{Pr}[E_3^{<1}] = 1$. Then,
$$
\mathbf{Pr}[E_3] = \mathbf{Pr}[E_3^{<n}] = \mathbf{Pr}[E_3^{n-1} \mid E_3^{<n-1}]\mathbf{Pr}[E_3^{<n-1}] = \prod_{\ell \in [n-1]} \mathbf{Pr}[E_3^\ell \mid E_3^{<\ell}].
$$
For every $\ell \in [n-1]$, and every set $S$ of $\ell-1$ buyers, it holds that if $i^\star \notin S$ and $M$ selects $S$ as the first $\ell-1$ buyers, the probability of selecting buyer $i^\star$ as the $\ell$-th buyer is $1/(n-(\ell-1))$, by the definition of $\pi$ (particularly because buyer $i^\star$ is a buyer picked uniformly at random). Therefore,
$$
\mathbf{Pr}[E_3] = \mathbf{Pr}[E_3^{<n}] = \prod_{\ell \in [n-1]} \left(1 - \frac{1}{n-(\ell-1)}\right) = \prod_{\ell \in [n-1]} \frac{n-\ell}{n-\ell+1} = \frac{1}{n}.
$$
Thus, we obtain the following upper bound on $RM(I_n)$:
\begin{align*}
RM(I_n) & \leq \mathbf{Pr}[E_1](n-1)m\epsilon + \mathbf{Pr}[E_2]\left(\frac{1}{m} + (n-1)m\epsilon\right) + \frac{1}{n}\left(\frac{(n-1)m\epsilon}{2} + \frac{H_m}{m}\right) \\
& \leq \frac{1}{m} + 2(n-1)m\epsilon + \frac{m\epsilon}{2} + \frac{H_m}{mn} \leq 3(n-1)m\epsilon + \frac{H_m}{mn} + \frac{1}{m}.
\end{align*}

This leads us to conclude that
\[
\frac{RM(I_n)}{OR(I_n)} \leq \frac{3nm\epsilon + H_m/mn + 1/m}{H_m/m} = \frac{3nm^2\epsilon + H_m/n + 1}{H_m} \leq \frac{H_m/n + 4}{H_m} = \frac{1}{n} + \frac{4}{H_{2^n}} \in O\left(\frac{1}{n}\right). \tag*{\qed}
\]
\let\qed\relax
\end{proof}

\subsection{Revenue guarantees for blind offer mechanisms}\label{sec:blind}
We prove in this section that blind offer mechanisms can always extract a constant fraction of the optimal revenue, without making any assumptions on the valuation distribution. Specifically we prove the following theorem.
\begin{theorem}\label{mainprop-blindoffer}
For every instance $(n,\pi,k)$, there is a dominant strategy IC blind offer mechanism for which the expected revenue is at least a $(2 - \sqrt{e})/4 \approx 0.088$ fraction of the maximum expected revenue that can be extracted by an ex-post IC, ex-post IR mechanism. Moreover, if $k = n$, then there is a blind offer mechanism for which the expected revenue equals the full maximum expected revenue that can be extracted by an ex-post IC, ex-post IR mechanism. 
\end{theorem}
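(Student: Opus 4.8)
The plan is to prove the theorem through the two ingredients flagged in the introduction: first an LP that upper bounds the maximum revenue of any ex-post IC, ex-post IR mechanism, and then a blind offer mechanism, built from an optimal LP solution, whose revenue is at least a $(2-\sqrt e)/4$ fraction of the LP value. The unlimited-supply claim ($k=n$) will fall out of the same LP as the special case in which the supply constraint is slack.

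For the upper bound I would proceed as follows. Fix any ex-post IC, ex-post IR mechanism. Conditioning on the reported values $\vec v_{-i}$ of the other buyers and applying the single-parameter (Myerson) characterization, buyer $i$ faces a monotone allocation rule in $v_i$, hence effectively a (randomized) take-it-or-leave-it price drawn from some distribution $P_i(\cdot\mid\vec v_{-i})$; writing $q_i(p,\vec v_{-i}) = \mathbf{Pr}_{v_i\sim\pi_{i,\vec v_{-i}}}[v_i\ge p]$, the revenue extracted from $i$ is $\mathbf{E}_{\vec v_{-i}}[\sum_p P_i(p\mid\vec v_{-i})\,p\,q_i(p,\vec v_{-i})]$ and its expected allocation is $x_i = \mathbf{E}_{\vec v_{-i}}[\sum_p P_i(p\mid\vec v_{-i})\,q_i(p,\vec v_{-i})]$. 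Since at most $k$ buyers are served in every realization, $\sum_i x_i\le k$. Relaxing this supply requirement from a per-realization constraint to the single expected constraint $\sum_i x_i\le k$ yields an LP (with the $P_i(\cdot\mid\vec v_{-i})$ as variables) whose optimum upper bounds the ex-post IC, ex-post IR optimal revenue; this generalizes the Gupta--Nagarajan characterization to the correlated setting. For $k=n$ the constraint is never binding, so the LP decouples into $n$ independent single-buyer monopoly-pricing problems, and the blind offer mechanism that offers each buyer $i$ the monopoly-optimal price of $\pi_{i,\vec b_{-i}}$ attains the LP value exactly, giving the ``moreover'' statement.

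For general $k$ I would take an optimal LP solution $\{P_i^\star\}$ and run the blind offer mechanism that, processing buyers in the fixed order, offers buyer $i$ (when reached with supply remaining) a price drawn from $P_i^\star(\cdot\mid\vec b_{-i})$, first scaled by a parameter $\beta\in(0,1)$ via independent random skipping (skipping is implementable as an unacceptably high price, so this remains a blind offer mechanism, and by Remark~\ref{rem:blindofferic} it can be made dominant strategy IC without revenue loss). The crucial observation is that, conditioning on the whole valuation vector $\vec v$, the event that supply survives until $i$ and buyer $i$'s own contribution depend on disjoint internal coin flips and are therefore conditionally independent; writing $s_i(\vec v)$ for the survival probability, $a_j(\vec v)$ for the probability (over the price draw) that buyer $j$ accepts given $\vec v$, and noting that survival is equivalent to fewer than $k$ of the conditionally independent earlier buyers accepting, the mechanism's revenue becomes $\beta\sum_i\mathbf{E}_{\vec v}[s_i(\vec v)\,G_i(\vec v)]$, where $\mathbf{E}_{\vec v}[G_i(\vec v)]$ equals buyer $i$'s contribution to the LP objective.

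The hard part will be lower bounding $\mathbf{E}_{\vec v}[s_i(\vec v)\,G_i(\vec v)]$. If survival were independent of $\vec v$ a Markov bound would give survival probability at least $1-\beta$ and an easy $\beta(1-\beta)=1/4$ factor; the genuine obstacle is that, under correlation, the pointwise expected demand $\sum_{j<i}\beta a_j(\vec v)$ preceding $i$ can be far larger than its average $\le\beta k$ precisely on the realizations where $G_i(\vec v)$ is large, so survival and contribution are adversarially coupled. I would control this by writing the revenue as $\beta\,\mathit{LP}-\beta\sum_i\mathbf{E}_{\vec v}[(1-s_i(\vec v))\,G_i(\vec v)]$ and bounding the loss term uniformly in $k$ with an exponential-moment (Chernoff/Poissonization) estimate of the supply-exhaustion probability, re-using the LP supply constraint to keep the total expected demand under control. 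Optimizing the scaling parameter $\beta$ in the resulting bound is what produces the constant $(2-\sqrt e)/4$; finally I would invoke Remark~\ref{rem:blindofferic} to conclude that the resulting mechanism is dominant strategy IC, completing the proof.
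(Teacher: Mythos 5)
Your overall architecture matches the paper's: an LP upper bound on the ex-post IC, ex-post IR optimum (generalizing Gupta--Nagarajan), a blind offer mechanism that draws prices from a scaled-down optimal LP solution, a Chernoff estimate of the probability that supply survives, and the transformation of Remark~\ref{rem:blindofferic} to restore dominant strategy IC. The $k=n$ case is also handled the same way. But there is one genuine gap, and it sits exactly at the point you yourself flag as ``the hard part.'' You relax the supply requirement to the single averaged constraint $\sum_i x_i \le k$, where $x_i$ is the unconditional expected allocation. With only that constraint, the adversarial coupling you describe is not merely a technical nuisance --- it is fatal. Consider a distribution with a rare ``hot'' mode of probability $\varepsilon = k/n$ in which all $n$ buyers would accept high prices with probability close to $1$, and a common mode in which nothing is sold. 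The averaged constraint is satisfied, the relaxed LP value is dominated by the hot mode, yet in that mode the expected demand ahead of a late buyer is $\Theta(\beta n) \gg k$, so supply exhausts with overwhelming probability precisely on the realizations carrying the LP value. The relaxed LP can exceed the true optimum by a factor of order $n/k$, so no mechanism --- and no choice of $\beta$ or exponential-moment argument --- can recover a constant fraction of it.

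The fix, which is what the paper does, is to keep the supply constraint \emph{per realization}: for every $\vec v \in \text{supp}(\pi)$, the cumulative increments of the monotone allocation satisfy $\sum_{i}\sum_{v_i' \le v_i} y_i(v_i',\vec v_{-i}) \le k$ (constraint~(\ref{lp33})). This is still valid for every ex-post IC, ex-post IR mechanism, because at most $k$ buyers are served in \emph{every} realization and the change of variables $y_i(\vec v) = x_i(\vec v) - x_i(\text{prec}(v_i),\vec v_{-i})$ telescopes pointwise. With the pointwise constraint, for each \emph{fixed} $\vec v$ the acceptance indicators of the buyers are independent Bernoulli variables whose means sum to at most $k/2$ after the $1/2$ scaling, so the Chernoff bound gives a survival probability of at least $1-(e/4)^{k/2} \ge (2-\sqrt e)/2$ \emph{uniformly over} $\vec v$; the coupling between survival and contribution never arises. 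Everything else in your sketch (conditional independence of the coin flips, skipping as an unacceptably high price, the Remark~\ref{rem:blindofferic} transformation costing no revenue because the survival bound is a uniform constant) then goes through as in the paper.
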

We need to establish some intermediate results in order to build up to a proof for the above theorem.
First, we derive an upper bound on the revenue of the optimal ex-post IC, ex-post IR mechanism.
For a given instance $(n,\pi,k)$, consider the linear program with variables $(y_i(\vec{v}))_{i \in [n], \vec{v} \in \text{supp}(\pi)}$ where the objective is 
\begin{equation}
\max \sum_{i \in [n]} \sum_{\vec{v}_{-i} \in \text{supp}(\pi_{-i})} \pi_{-i}(\vec{v}_{-i}) \sum_{v_i \in \text{supp}(\pi_{i,\vec{v}_{-i}})} \mathbf{Pr}_{v_i' \sim \pi_{i,\vec{v}_{-i}}}[v_i' \geq v_i] v_i y_i(v_i,\vec{v}_{-i}) \label{lp30}
\end{equation}
subject to the constraints 
\begin{align}
& \qquad \forall i \in [n], \vec{v}_{-i} \in \text{supp}(\pi_{-i}) \colon \sum_{v_i \in \text{supp}(\pi_{i,\vec{v}_{-i}})} y_i(\vec{v}) \leq 1, \label{lp31}\\
& \qquad \forall \vec{v} \in \text{supp}(\pi) \colon \sum_{i \in [n]} \sum_{v_i' \in \text{supp}(\pi_{i,\vec{v}_{-i}}) \colon v_i' \leq v_i} y_i(v_i',\vec{v}_{-i}) \leq k, \label{lp33} \\
& \qquad \forall i \in [n], \vec{v} \in \text{supp}(\pi) \colon y_i(\vec{v}) \geq 0. \label{lp32}
\end{align}
The next lemma states that the solution to this linear program forms an upper bound on the revenue of the optimal mechanism, and that the solution to the above linear program is integral in case $k = n$.

\begin{lemma}\label{thm:lp}
For any instance $(n,\pi,k)$, the linear program (\ref{lp30}-\ref{lp32}) upper bounds the maximum expected revenue achievable by an ex-post IC, ex-post IR mechanism. Moreover, when $k = n$ the optimal solution to (\ref{lp30}--\ref{lp32}) is to set $y_i(v_i,\vec{v}_{-i})$ to $1$ for the value $v_i$ that maximizes $v_i\mathbf{Pr}_{v_i' \sim \pi_{i,\vec{v}_{-i}}}[v_i' \geq v_i]$ (for all $i \in [n], \vec{v} \in \text{supp}(\pi)$).
\end{lemma}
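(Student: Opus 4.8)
The plan is to prove the two assertions by reducing the correlated problem to a family of independent single-parameter problems, one for each buyer $i$ and each realization $\vec{v}_{-i}$ of the other buyers' values. The guiding observation is that fixing $\vec{v}_{-i}$ turns buyer $i$'s part of the mechanism into an ordinary single-parameter ex-post IC, ex-post IR mechanism over the conditional distribution $\pi_{i,\vec{v}_{-i}}$, so that the standard characterization of truthful single-parameter mechanisms applies conditionally. Concretely, I would take an arbitrary ex-post IC, ex-post IR mechanism $M$ with expected allocations $x_i(\vec{v})$ and expected payments $p_i(\vec{v})$, construct from it a point of the linear program, verify feasibility, and show that the objective value at that point is at least the expected revenue of $M$; since $M$ is arbitrary, this establishes the upper bound.

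For the construction, fix $i$ and $\vec{v}_{-i}$ and enumerate $\text{supp}(\pi_{i,\vec{v}_{-i}})$ as $t_1 < \cdots < t_L$. Ex-post IC forces $x_i(\cdot,\vec{v}_{-i})$ to be non-decreasing in $v_i$ (the usual exchange argument combining the two IC inequalities for a pair of types), so I would set $y_i(t_\ell,\vec{v}_{-i}) = x_i(t_\ell,\vec{v}_{-i}) - x_i(t_{\ell-1},\vec{v}_{-i})$ with $x_i(t_0,\vec{v}_{-i}) := 0$, i.e. the ``jumps'' of the allocation curve. Monotonicity gives nonnegativity (\ref{lp32}); telescoping gives $\sum_{t_\ell \le v_i} y_i(t_\ell,\vec{v}_{-i}) = x_i(v_i,\vec{v}_{-i})$, so constraint (\ref{lp31}) is just $x_i$ at the top type, which is at most $1$, while constraint (\ref{lp33}) becomes $\sum_i x_i(\vec{v}) \le k$, which holds ex-post for any mechanism serving at most $k$ buyers. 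For the objective, I would invoke the discrete Myerson payment identity: given a monotone allocation, the largest IC, IR expected payment of type $v_i$ equals $\sum_{t_\ell \le v_i} t_\ell\, y_i(t_\ell,\vec{v}_{-i})$, so $\mathbf{E}_{v_i \sim \pi_{i,\vec{v}_{-i}}}[p_i(v_i,\vec{v}_{-i})] \le \sum_{v_i} y_i(v_i,\vec{v}_{-i})\, v_i\, \mathbf{Pr}_{v_i' \sim \pi_{i,\vec{v}_{-i}}}[v_i' \ge v_i]$ after swapping the order of summation. Weighting by $\pi_{-i}(\vec{v}_{-i})$ and summing over $i$ reassembles the expected revenue of $M$ on the left and the objective (\ref{lp30}) at $y$ on the right, proving the first claim.

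For the case $k = n$ I would first observe that constraint (\ref{lp33}) is redundant: each inner partial sum is at most $\sum_{v_i} y_i(v_i,\vec{v}_{-i}) \le 1$ by (\ref{lp31}), so its left-hand side never exceeds $n = k$. With (\ref{lp33}) removed, the program decouples completely across the pairs $(i,\vec{v}_{-i})$, leaving for each pair the elementary problem of maximizing $\sum_{v_i} c(v_i)\, y_i(v_i,\vec{v}_{-i})$ subject to $\sum_{v_i} y_i(v_i,\vec{v}_{-i}) \le 1$ and $y_i \ge 0$, where $c(v_i) = v_i\, \mathbf{Pr}_{v_i' \sim \pi_{i,\vec{v}_{-i}}}[v_i' \ge v_i]$. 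A fractional-knapsack argument shows the optimum places the entire unit budget on a maximizer of $c(\cdot)$, which is exactly the claimed integral solution.

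The step I expect to be the main obstacle is the discrete payment-to-objective translation: in the continuous setting the ``revenue equals expected posted-price revenue'' identity is routine, but in the discrete domain one must argue with the discrete Myerson identity and, crucially, verify that the inequality points the right way — revenue is \emph{at most} the stated quantity, with slack arising only from additional individual-rationality surplus granted to the bottom type, which is precisely what an upper bound requires. Everything else (monotonicity, the telescoping translation of the constraints into allocation quantities, and the decoupling for $k=n$) is bookkeeping once the conditional single-parameter reduction is in place.
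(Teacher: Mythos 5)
Your proposal is correct and follows essentially the same route as the paper's proof: monotonicity of the conditional allocation curve from the two ex-post IC inequalities, the discrete Myerson payment bound, the change of variables to the allocation ``jumps'' $y_i$, and the redundancy of constraint (\ref{lp33}) plus a decoupling argument for $k=n$. The only difference is presentational — the paper writes down the exact revenue-maximization LP over $(x,p)$ and relaxes it step by step, whereas you map an arbitrary mechanism directly to a feasible point of (\ref{lp30}--\ref{lp32}); the underlying lemmas and computations are the same.
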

\begin{proof}[Proof sketch.]
Integrality for $k = n$ is the easiest to prove among these two claims, so we do that first. Note that in case $k = n$, we can safely remove the constraints (\ref{lp33}) from the linear program, because when $k=n$ these constraints are implied by (\ref{lp31}) and (\ref{lp32}). The linear program that remains tells us how to optimize a sum of convex combinations of values. That is, it effectively tells us to pick for each $i \in [n]$ and $\vec{v}_{-i} \in \text{supp}(\pi_{-i})$ a convex combination of the values $\{v_i\mathbf{Pr}_{v_i' \sim \pi_{i,\vec{v}_{-i}}}[v_i' \geq v_i]\}_{v_i' \in \text{supp}(\pi_{i,\vec{v}_{-i}})}$. The optimal solution is therefore to put weight $1$ on the maximum values in these sets, i.e., to set $y_i(v_i,\vec{v}_{-i})$ to $1$ for the value $v_i$ that maximizes $v_i\mathbf{Pr}_{v_i' \sim \pi_{i,\vec{v}_{-i}}}[v_i' \geq v_i]$.

To prove the first of the two claims in the lemma, we first prove that a monotonicity constraint holds on the set of possible allocations that a ex-post IC, ex-post IR mechanism can output. Moreover, we show that the prices charged by the mechanism cannot exceed a certain upper bound given in terms of allocation probabilities. Then, we formulate a linear program whose optimal value equals the revenue of the optimal ex-post IC, ex-post IR mechanism. We finally rewrite the latter linear program into (\ref{lp30}--\ref{lp32}). This proof relies on the approach introduced in \cite[Section 4.2]{guptanagarajan}.
\end{proof}

We can now proceed to prove our main result about blind offer mechanisms. We first handle the case of unlimited supply. Consider the following blind offer mechanism.
\begin{definition}\label{optmechanism}
Consider an instance $(n,\pi,n)$. For $i \in [n]$ and $\vec{v}_{-i} \in \text{supp}(\pi_{-i})$, fix $\hat{p}_{i,\vec{v}_{-i}}$ to be any value in the set $\arg_p \max\{p\mathbf{Pr}_{v_i \sim \pi_{i,\vec{v}_{-i}}}[p\leq v_i] \colon p \in \mathbb{R}\}$. Define $M_{\pi}^n$ to be the following blind offer mechanism for allocating service to $n$ buyers when the valuations of these buyers are drawn from $\pi$. 

Let $\vec{b}$ be the submitted vector of bids. For $i \in [n]$, if $\vec{b}_{-i} \in \text{supp}(\pi_{-i})$ and $b_i \geq \hat{p}_{i,\vec{b}_{-i}}$, then $M_{\pi}^{n}$ gives service to $i$ and charges $i$ the price $\hat{p}_{i,\vec{b}_{-i}}$. If $b_i < \hat{p}_{i,\vec{b}_{-i}}$, then the price charged to $i$ is 0, and $i$ is not given service. Otherwise, if $\vec{b}_{-i} \not\in \text{supp}(\pi_{-i})$, the price charged to $i$ is $0$ and $i$ is not allocated service.
\end{definition}

\begin{lemma} \label{optimal-unlimited}
For instance $(n,\pi,n)$, mechanism $M_{\pi}^n$ extracts the maximum revenue among the class of ex-post IC, ex-post IR mechanisms.
\end{lemma}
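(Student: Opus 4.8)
The plan is to show that the expected revenue collected by $M_\pi^n$ under truthful reporting coincides exactly with the optimal value of the linear program (\ref{lp30}--\ref{lp32}) in the case $k = n$, and then to invoke Lemma~\ref{thm:lp} to conclude optimality. First I would observe that with unlimited supply ($k = n$) there is never a binding supply constraint, so $M_\pi^n$ does not suffer from the incentive issue described in Remark~\ref{rem:blindofferic}: each buyer $i$ is offered a price $\hat p_{i,\vec b_{-i}}$ that does not depend on $b_i$, and accepting iff $v_i \geq \hat p_{i,\vec b_{-i}}$ (equivalently, reporting truthfully) is a dominant strategy yielding nonnegative utility. Hence $M_\pi^n$ is itself a dominant strategy IC, ex-post IR mechanism, and in particular lies in the (larger) class of ex-post IC, ex-post IR mechanisms over which we maximize; it therefore suffices to match its revenue to the LP bound.

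Next I would compute the revenue. Writing the payment of buyer $i$ on profile $\vec v$ as $\hat p_{i,\vec v_{-i}}\mathbf{1}[v_i \geq \hat p_{i,\vec v_{-i}}]$ and using linearity of expectation together with conditioning on $\vec v_{-i}$ (so that $v_i \sim \pi_{i,\vec v_{-i}}$ while $\hat p_{i,\vec v_{-i}}$ is fixed), the expected revenue of $M_\pi^n$ becomes
\begin{equation*}
\sum_{i \in [n]} \sum_{\vec v_{-i} \in \text{supp}(\pi_{-i})} \pi_{-i}(\vec v_{-i})\, \hat p_{i,\vec v_{-i}}\, \mathbf{Pr}_{v_i \sim \pi_{i,\vec v_{-i}}}[v_i \geq \hat p_{i,\vec v_{-i}}].
\end{equation*}
By the defining property of $\hat p_{i,\vec v_{-i}}$ as an element of $\arg_p\max\{p\,\mathbf{Pr}_{v_i \sim \pi_{i,\vec v_{-i}}}[p \leq v_i] \colon p \in \mathbb{R}\}$, each inner product equals $\max_{p \in \mathbb{R}} p\,\mathbf{Pr}_{v_i \sim \pi_{i,\vec v_{-i}}}[v_i \geq p]$.

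The one identity that needs care is that this maximization over all real prices agrees with the maximization over support points appearing in the LP optimum of Lemma~\ref{thm:lp}. I would argue that $p \mapsto \mathbf{Pr}_{v_i \sim \pi_{i,\vec v_{-i}}}[v_i \geq p]$ is constant on each half-open interval between consecutive support points, so that $p\,\mathbf{Pr}[v_i \geq p]$ is linearly increasing there and its supremum on that interval is attained at the upper (support) endpoint; consequently the global maximum over $\mathbb{R}$ is attained at a point of $\text{supp}(\pi_{i,\vec v_{-i}})$ and equals $\max_{v_i \in \text{supp}(\pi_{i,\vec v_{-i}})} v_i\,\mathbf{Pr}_{v_i' \sim \pi_{i,\vec v_{-i}}}[v_i' \geq v_i]$. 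Substituting this back, the revenue of $M_\pi^n$ equals exactly the objective value of the integral LP solution identified in Lemma~\ref{thm:lp}, namely the one setting $y_i(v_i,\vec v_{-i}) = 1$ on the value maximizing $v_i\,\mathbf{Pr}_{v_i'}[v_i' \geq v_i]$. Since Lemma~\ref{thm:lp} guarantees both that this is the optimal LP value and that the LP value upper-bounds the revenue of every ex-post IC, ex-post IR mechanism, $M_\pi^n$ attains this upper bound and is therefore revenue-optimal within that class.

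I expect the revenue computation and the invocation of Lemma~\ref{thm:lp} to be routine; the only genuinely delicate point is the reduction from the continuous price maximization used to define $\hat p_{i,\vec v_{-i}}$ to the discrete support maximization used in the LP. This requires the step-function structure of the tail probability and some care that the maximum is actually attained (rather than merely approached). I would note that even if $\hat p_{i,\vec v_{-i}}$ happens to be chosen off the support, the equality $\hat p_{i,\vec v_{-i}}\,\mathbf{Pr}[v_i \geq \hat p_{i,\vec v_{-i}}] = \max_{v_i \in \text{supp}(\pi_{i,\vec v_{-i}})} v_i\,\mathbf{Pr}[v_i' \geq v_i]$ still holds by the same argument, so the conclusion is unaffected by the freedom in the choice of $\hat p_{i,\vec v_{-i}}$.
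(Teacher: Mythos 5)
Your proposal is correct and follows essentially the same route as the paper: compute the expected revenue of $M_\pi^n$ by linearity of expectation and conditioning on $\vec v_{-i}$, identify the resulting expression with the integral optimal LP solution from Lemma~\ref{thm:lp}, and invoke the LP upper bound. You are in fact slightly more careful than the paper on one point it leaves implicit, namely that the maximization over all real prices in the definition of $\hat p_{i,\vec v_{-i}}$ is attained on $\text{supp}(\pi_{i,\vec v_{-i}})$ and agrees with the support-restricted maximization in the LP optimum.
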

\begin{proof}
Denote by $p_i(\vec{v})$ the price charged to buyer $i \in [n]$ when the buyers have valuation vector $\vec{v} \in \text{supp}(\pi)$. We can write the expected revenue of $M_{\pi}^n$ as follows:
\begin{align*}
\mathbf{E}_{\vec{v} \sim \pi}\left[\sum_{i\in [n]} p_i(\vec{v})\right] & = \sum_{i \in [n]} \mathbf{E}_{\vec{v} \sim \pi}[p_i(\vec{v})] = \sum_{i \in [n]} \sum_{\vec{v} \in \text{supp}(\pi)} \pi(\vec{v}) p_i(\vec{v}) \\
& = \sum_{i \in [n]} \sum_{\vec{v}_{-i} \in \text{supp}(\pi_{-i})} \pi_{-i}(\vec{v}_{-i}) \mathbf{Pr}_{v_i \sim \pi_{i,\vec{v}_{-i}}}[v_i \geq \hat{p}_{i,\vec{v}_{-i}}] \hat{p}_{i,\vec{v}_{-i}}.
\end{align*}
Lastly, by Lemma \ref{thm:lp} and the objective function (\ref{lp30}) of the linear program, we conclude that the latter expression is equal to the solution of the linear program, which is an upper bound on the optimal revenue among all ex-post IC, ex-post IR mechanisms by Theorem \ref{thm:lp}. 
\end{proof}

For the case of $k$-limited supply where $k < n$, things are somewhat more complicated. Indeed, there does not seem to exist a blind offer mechanism as simple and elegant as $M_{\pi}^n$. However, we are still able to define a blind offer mechanism that extracts a constant fraction of the optimal revenue.
\begin{definition}\label{optmechanismunit}
Let $(n,\pi,k)$ be an arbitrary instance. Let $(y_i^*(\vec{v}))_{i \in [n]}$ be the optimal solution to the linear program (\ref{lp30}--\ref{lp32}) corresponding to this instance. 

Let $M_{\pi}^k$ be the blind offer mechanism that does the following: let $\vec{v}$ be the vector of submitted valuations. Iterate over the set of buyers such that in iteration $i$, buyer $i$ is picked. In iteration $i$, select one of the following options: offer service to buyer $i$ at a price $p$ for which it holds that $y_i^*(p,\vec{b}_{-i}) > 0$, or skip buyer $i$. The probabilities with which these options are chosen are as follows: price $p$ is offered with probability $y_i^*(p,\vec{b}_{-i})/2$, and buyer $i$ is skipped with probability $1 - \sum_{p' \in \text{supp}(\pi_{i,\vec{b}_{-i}})} y_i^*(p,\vec{b}_{-i})/2$. The mechanism terminates once $k$ buyers have accepted an offer, or when iteration $n + 1$ is reached.
\end{definition}
\begin{lemma} 
\label{constantapx-klimited}
On instance $(n,\pi,k)$, the expected revenue of blind offer mechanism $M_{\pi}^k$ is at least a $(2 - \sqrt{e})/4 \approx 0.088$ fraction of the expected revenue of the optimal ex-post IC, ex-post IR mechanism. Moreover, there exists a \emph{dominant strategy IC} blind offer mechanism of which the expected revenue is at least a $(2 - \sqrt{e})/4 \approx 0.088$ fraction of the expected revenue of the optimal ex-post IC, ex-post IR mechanism.
\end{lemma}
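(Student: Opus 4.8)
The plan is to invoke Lemma~\ref{thm:lp} to replace the optimal ex-post IC, ex-post IR revenue by the value of the linear program (\ref{lp30}--\ref{lp32}), and then to show that the blind offer mechanism $M_\pi^k$ of Definition~\ref{optmechanismunit} recovers a constant fraction of that value. Writing $(y_i^*)$ for the optimal LP solution, I would first rewrite the objective (\ref{lp30}): expanding $\mathbf{Pr}_{v_i'\sim\pi_{i,\vec v_{-i}}}[v_i'\ge v_i]$ as a sum over $v_i'\ge v_i$ and folding the conditional marginals back into the joint distribution (using $\pi_{-i}(\vec v_{-i})\pi_{i,\vec v_{-i}}(v_i)=\pi(\vec v)$) gives
\[
\mathrm{LP}=\mathbf{E}_{\vec v\sim\pi}\Bigl[\sum_{i\in[n]}\sum_{p\in\text{supp}(\pi_{i,\vec v_{-i}}):\,p\le v_i} p\,y_i^*(p,\vec v_{-i})\Bigr].
\]
It then suffices to show that $M_\pi^k$ collects at least a constant times the quantity inside this expectation.

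The step I expect to be the main obstacle is that the event ``supply is still available when buyer $i$ is reached'' is correlated with buyer $i$'s own valuation: the price offered to an earlier buyer $j<i$ is drawn from a distribution depending on $\vec b_{-j}$, which contains coordinate $i$, so $v_i$ influences whether earlier buyers accept and hence whether supply has run out. I would dissolve this correlation by conditioning on the entire realized valuation vector $\vec v$. Once $\vec v$ is fixed, the only remaining randomness is the per-iteration internal coin, and by Definition~\ref{optmechanismunit} these coins are independent across iterations (each depends only on $\vec b_{-i}$ and fresh randomness, never on the accept/reject history). Consequently the event that supply is available at iteration $i$ is a function of the coins of iterations $1,\dots,i-1$ only, hence independent of the coin used in iteration $i$. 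Therefore, given $\vec v$, the expected payment from $i$ factorizes as $\mathbf{Pr}[\text{supply available at }i]\cdot\sum_{p\le v_i} p\,y_i^*(p,\vec v_{-i})/2$.

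It remains to lower bound $\mathbf{Pr}[\text{supply available at }i]$ uniformly in $\vec v$. For fixed $\vec v$, let $B_j$ be the indicator that iteration $j$ offers $j$ a price at most $v_j$ (so $j$ would accept if reached); the $B_j$ are independent with $\mathbf{E}[B_j]=\tfrac12\sum_{p\le v_j}y_j^*(p,\vec v_{-j})$. The number of buyers among $1,\dots,i-1$ that actually accept is at most $\sum_{j<i}B_j$, so supply is available at $i$ whenever $\sum_{j<i}B_j<k$. Evaluating constraint (\ref{lp33}) at $\vec v$ gives $\sum_j\sum_{p\le v_j}y_j^*(p,\vec v_{-j})\le k$, hence $\mathbf{E}[\sum_{j<i}B_j]\le k/2$, and Markov's inequality yields $\mathbf{Pr}[\sum_{j<i}B_j\ge k]\le\tfrac12$, i.e. $\mathbf{Pr}[\text{supply available at }i]\ge\tfrac12$. (The $\tfrac12$ scaling of the offer probabilities in Definition~\ref{optmechanismunit} is exactly what makes this Markov bound nontrivial.) Substituting into the factorization, summing over $i$, and taking the expectation over $\vec v$ against the rewritten $\mathrm{LP}$ shows that $M_\pi^k$ extracts at least a $\tfrac12\cdot\tfrac12=\tfrac14$ fraction of $\mathrm{LP}$, which comfortably exceeds the claimed $(2-\sqrt e)/4\approx0.088$.

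Finally, for the dominant strategy IC statement I would apply the transformation of Remark~\ref{rem:blindofferic}, which skips buyer $i$ with the probability that makes the chance of offering him a price independent of his bid $b_i$ by equalizing it to its minimum over $b_i$. The key observation is that the supply lower bound above is uniform over $i$'s report: for any $b_i$ with $(b_i,\vec v_{-i})\in\text{supp}(\pi)$, applying (\ref{lp33}) at $(b_i,\vec v_{-i})$ again gives $\mathbf{E}[\sum_{j<i}B_j]\le k/2$ (with the $B_j$ recomputed under the report $b_i$, while earlier buyers still accept according to their true values), so $\mathbf{Pr}[\text{supply available at }i]\ge\tfrac12$ survives the equalization. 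Hence the resulting incentive-compatible blind offer mechanism retains the factor $\tfrac14$, proving the second claim. I expect the only delicate points to be the careful justification that the iteration coins are genuinely independent, and the bookkeeping that converts (\ref{lp33}) into the bound on $\mathbf{E}[\sum_{j<i}B_j]$ uniformly over reports; the remaining manipulations are routine.
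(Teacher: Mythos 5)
Your proposal is correct and follows the same overall architecture as the paper's proof: bound the optimal ex-post IC, ex-post IR revenue by the LP of Lemma~\ref{thm:lp}, rewrite the objective (\ref{lp30}) as $\mathbf{E}_{\vec v\sim\pi}[\sum_i\sum_{p\le v_i}p\,y_i^*(p,\vec v_{-i})]$, factor the expected payment of buyer $i$ into (price term) $\times$ (probability supply survives to iteration $i$) using the conditional independence of the per-iteration coins given $\vec v$, bound that survival probability via constraint (\ref{lp33}), and finally invoke Remark~\ref{rem:blindofferic} for incentive compatibility. The one place you genuinely diverge is the tail bound: the paper applies a Chernoff bound to $\sum_{j\in[n-1]}X_j^{\vec v}$ with mean $\mu\le k/2$, obtaining $\mathbf{Pr}[\text{supply available}]\ge 1-(e/4)^{k/2}\ge(2-\sqrt e)/2\approx 0.176$, whereas you apply Markov's inequality to get $\mathbf{Pr}[\sum_{j<i}B_j\ge k]\le\mu/k\le 1/2$. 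Your bound is both more elementary (it needs only linearity of expectation plus nonnegativity, not independence --- though independence is still needed, as you note, for the factorization itself) and uniformly stronger in the worst case over $k$: it yields a $1/4$ approximation for every $k$, whereas the paper's Chernoff route degrades to $(2-\sqrt e)/4\approx 0.088$ at $k=1$ (and only beats $1/2$ for larger $k$). The paper itself remarks after Theorem~\ref{mainprop-blindoffer} that the constant is not tight and that $1/4$ is achievable for $k=1$; your argument shows $1/4$ for all $k$ with less machinery, and of course $1/4\ge(2-\sqrt e)/4$, so the stated lemma follows a fortiori. Your treatment of the IC transformation --- checking that the survival bound from (\ref{lp33}) holds uniformly over $i$'s reports $b_i$ with $(b_i,\vec v_{-i})\in\text{supp}(\pi)$, so the equalization of Remark~\ref{rem:blindofferic} costs nothing --- matches what the paper does implicitly and is, if anything, spelled out more carefully.
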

\begin{proof}[Proof sketch.]
 We will show that the expected revenue of $M_{\pi}^k$ is at least
\begin{equation*}
\frac{2 - \sqrt{e}}{4} \sum_{i \in [n]} \sum_{\vec{v}_{-i} \in \text{supp}(\pi_{-i})} \pi_{-i}(\vec{v}_{-i}) \sum_{v_i \in \text{supp}(\pi_{i,\vec{v}_{-i}})} \mathbf{Pr}_{v_i' \sim \pi_{i,\vec{v}_{-i}}}[v_i' \geq v_i] v_i y_i^*(v_i,\vec{v}_{-i}),
\end{equation*}
which, by Lemma \ref{thm:lp} and the LP objective function (\ref{lp30}), is a $(2 - \sqrt{e})/4$ fraction of the expected revenue of the optimal ex-post IC, ex-post IR mechanism.

For a vector of valuations $\vec{v} \in \text{supp}(\pi)$ and a buyer $i \in [n]$, denote by $D_{i,\vec{v}_{-i}}$ the probability distribution from which mechanism $M_{\pi}^k(\vec{v})$ draws a price that is offered to buyer $i$, in case iteration $i \in [n]$ is reached (as described in Definition \ref{optmechanismunit}). We let $V$ be a number that exceeds $\max\{v_i \colon i \in [n], \vec{v} \in \text{supp}(\pi)\}$ and represent by $V$ the option where $M_{\pi}^k(\vec{v})$ chooses to skip buyer $i$, so that $D_{i,\vec{v}_{-i}}$ is a probability distribution on the set $\{V\}\cup\{v_i \colon y_i^*(v_i,\vec{v}_{-i}) > 0\}$.
Then,
\begin{equation}
\label{eq:expec_rev}
 \mathbf{E}_{\vec{v} \sim \pi}[\text{revenue of } M_{\pi}^k(\vec{v}) ]
 \geq \sum_{i \in [n]} \sum_{\vec{v} \in \text{supp}(\pi)} \pi(\vec{v}) \sum_{\substack{p_i \in \text{supp}(D_{i,\vec{v}_{-i}}) \\ \colon p_i \leq v_i }} \frac{p_i y_i^*(p_i,\vec{v}_{-i})}{2} \mathbf{Pr}_{\forall_i:p_i \sim D_{i,\vec{v}_{-i}}}[|\{j \in [n-1] \colon p_j \leq v_j\}| < k].
\end{equation}
Then, by applying a Chernoff bound, we can prove that
\begin{equation}
\label{eq:bound_chernoff}
 \mathbf{Pr}_{\forall_i:p_i \sim D_{i,\vec{v}_{-i}}}[|\{j \in [n-1] \colon p_j \leq v_j\}| < k] \geq 1 - \left(\frac{e}{4}\right)^{k/2} \geq 1 - \left(\frac{e}{4}\right)^{1/2} = \frac{2 - \sqrt{e}}{2}.
\end{equation}
The first of the two claims of Theorem \ref{constantapx-klimited} follows by combining \eqref{eq:expec_rev} and \eqref{eq:bound_chernoff}. For the second claim,  note that \eqref{eq:bound_chernoff} gives a lower bound of $(2 - \sqrt{e})/2$ on the probability that all players get selected by the mechanism. Therefore, we can combine \eqref{eq:bound_chernoff} with the principle explained in Remark \ref{rem:blindofferic} that allows us to transform $M_{\pi}^k$ into a dominant strategy IC blind offer mechanism. The second claim follows by observing that \eqref{eq:expec_rev} still holds for the transformed mechanism.
\end{proof}
Theorem~\ref{mainprop-blindoffer} now follows by combining Lemmas \ref{optimal-unlimited} and \ref{constantapx-klimited}.
We note that the approximation factor of the theorem is certainly not tight and can be improved with additional work. For example, it is possible to show that for $k=1$ the revenue of $M_{\pi}^k$ is in fact at least $1/4$ of the optimal revenue. Moreover, recall that mechanism $M_{\pi}^k$ works by scaling the probabilities $y_i(\vec{v})$ down by $1/2$. By making this scaling factor dependent on $k$ and choosing it appropriately, we can improve the approximation factor further. We emphasize that the focus and purpose of the above result is merely to show that a constant factor of the optimal revenue (independent of the supply $k$) is achievable.

\subsection{Revenue guarantees for enhanced sequential posted price mechanisms}\label{sec:generalizationpositive}
Finally, in this section we evaluate the revenue guarantees of the enhanced sequential posted price mechanisms in the presence of a form of limited dependence that we will call \emph{$d$-dimensional dependence}, for $d \in \mathbb{N}$.
These are probability distributions for which it holds that the valuation distribution of a buyer conditioned on the valuations of the rest of the buyers can be retrieved by only looking at the valuations of a certain subset of $d$ buyers.
Formally, we have the following definition.
\begin{definition}
A probability distribution $\pi$ on $\mathbb{R}^n$ is \emph{$d$-dimensionally dependent} iff for all $i \in [n]$ there is a subset $S_i \subseteq [n]\setminus \{i\}, |S_i| = d$, such that for all $\vec{v}_{-i} \in \text{supp}(\pi_{-i})$ it holds that $\pi_{i,\vec{v}_{S_i}} = \pi_{i,\vec{v}_{-i}}$.
\end{definition}
Note that if $d = 0$, then $\pi$ is a product of $n$ independent probability distributions on $\mathbb{R}$.
On the other hand, the set of $(n-1)$-dimensionally dependent probability distributions on $\mathbb{R}^n$
equals the set of all probability distributions on $\mathbb{R}^n$.
This notion is useful in practice for settings where it is expected that a buyer's valuation distribution
has a reasonably close relationship with the valuation of a few other buyers.
As an example of one of these practical settings consider the case that
there is a true valuation $v$ for the item,
an expert buyer that keeps this valuation,
and remaining buyers  whose valuation is influenced by independent noise.
It is then sufficient to know the valuation of a single buyer, namely the expert one,
in order to retrieve the exact conditional distribution of any other buyer.
We would like to stress that in order to make this distribution $1$-dimensional dependent,
it is sufficient that such an expert buyer exists,
even if auctioneer does not know which buyer is the expert. 
Moreover, our
definition of dimensional dependence is quite inclusive; for example, in the
example above, even if each bidder picks another bidder as her own expert and adds noise to 
the valuation of her expert, the distribution would remain $1$-dimensionally 
dependent.   

In general, $d$-dimensional dependence is relevant to many practical settings in which it is not necessary to have complete information about the valuations of all the other buyers in order to say something useful about the valuation of a particular buyer.
This rules out the extreme kind of dependence defined in the proof of Theorem \ref{thm:enhancedlb};
there the distributions are not $(n-2)$-dimensionally dependent,
because for each buyer $i$ it holds that the valuations of all buyers $[n]\setminus\{i\}$
are necessary in order to extract the valuation distribution of $i$ conditioned on the others' valuations.

It is important to realize that the class of $d$-dimensional dependent distributions is a strict superset of the class of \emph{Markov random fields of degree $d$}. A Markov random field of degree $d$ is a popular model to capture the notion of limited dependence, and for that model a more straightforward procedure than the one in the proof below exists for obtaining the same revenue guarantee (see Appendix~\ref{apx:simpleFields}). However, the notion of $d$-dimensional dependence is both more natural (for our setting) and much more general. In fact, we show in Appendix \ref{apx:limdependence} that there exist distributions on $\mathbb{R}^n$ that are $1$-dimensionally dependent, but are not a Markov random field of degree less than $n/2$.

In a sense, our definition of $d$-dimensional dependence resembles the limited dependence condition
under which the \emph{Lovasz Local Lemma} holds.

The main result that we will prove in this section is the following.
\begin{theorem}\label{mainthm-klimited}
For every instance $(n,\pi,k)$ where $\pi$ is $d$ dimensionally dependent, there exists an enhanced sequential posted price mechanism of which the expected revenue is at least a $(2 - \sqrt{e})/(16d) \geq 1/(46d) \in \Omega(1/d)$ fraction of the maximum expected revenue that can be extracted by an ex-post IC, ex-post IR mechanism. Moreover, if $k = n$, then there exists an enhanced posted price mechanism of which the expected revenue is at least a $1/(4d)$ fraction of the maximum expected revenue that can be extracted by an ex-post IC, ex-post IR mechanism.
\end{theorem}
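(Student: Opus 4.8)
The plan is to assemble the three ingredients advertised in the introduction: the linear-programming upper bound on the optimal ex-post IC, ex-post IR revenue (Lemma~\ref{thm:lp}), the blind offer mechanisms that extract a constant fraction of that bound (Theorem~\ref{mainprop-blindoffer}, via Lemmas~\ref{optimal-unlimited} and~\ref{constantapx-klimited}), and a \emph{conversion lemma} that turns any such blind offer mechanism into an enhanced sequential posted price (ESPP) mechanism at the cost of a factor $\tfrac{1}{4d}$. Since $M_\pi^n$ is exactly optimal when $k=n$ and $M_\pi^k$ achieves $\tfrac{2-\sqrt e}{4}$ of the optimum for general $k$, the conversion immediately yields the two claimed bounds $\tfrac{1}{4d}$ and $\tfrac{2-\sqrt e}{4}\cdot\tfrac{1}{4d}=\tfrac{2-\sqrt e}{16d}$. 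All the genuinely new work therefore lies in the conversion, and the remainder of the plan concerns it.

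The crucial structural observation is that $d$-dimensional dependence makes the price that a blind offer mechanism charges buyer $i$ computable from the valuations of the $d$ buyers in $S_i$ alone: for $M_\pi^n$ the offer $\hat p_{i,\vec v_{-i}}$ equals $\hat p_{i,\vec v_{S_i}}$ because $\pi_{i,\vec v_{-i}}=\pi_{i,\vec v_{S_i}}$, and for $M_\pi^k$ I would first replace the optimal LP solution $y_i^*(v_i,\vec v_{-i})$ by its conditional average $\hat y_i(v_i,\vec v_{S_i})=\sum_{\vec v_{-i}}\pi_{-i}(\vec v_{-i})\,y_i^*(v_i,\vec v_{-i})/\pi_{S_i}(\vec v_{S_i})$ over the coordinates outside $S_i$. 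Because the LP objective coefficient $\mathbf{Pr}_{v_i'\sim\pi_{i,\vec v_{-i}}}[v_i'\geq v_i]\,v_i$ depends on $\vec v_{-i}$ only through $\vec v_{S_i}$, this averaging preserves the objective value, while convexity preserves the per-buyer constraint~(\ref{lp31}); these are the only two properties of $y_i^*$ that the revenue and Chernoff analysis of Lemma~\ref{constantapx-klimited} actually uses. The converted mechanism $N$ then processes the buyers in a uniformly random order, designating the first $\lceil qn\rceil$ of them (with $q=\tfrac{d}{d+1}$) as \emph{queriers}, whose valuations it elicits without offering service, and the remaining buyers as \emph{pricees}; upon reaching a pricee $i$ it simulates the blind offer step for $i$ using the already-revealed vector $\vec v_{S_i}$ when $S_i$ is contained in the querier prefix, and skips $i$ otherwise. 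Since the queriers form a prefix of the order, $S_i\subseteq$ queriers automatically guarantees that $\vec v_{S_i}$ has been revealed before $i$ is processed.

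For the revenue analysis I would condition on the random partition into queriers and pricees. A fixed buyer $i$ is priced through the blind offer rule precisely when $i$ is a pricee and all of $S_i$ falls in the querier prefix, an event whose probability a hypergeometric estimate bounds below by $(1-q)q^d$, which for $q=\tfrac{d}{d+1}$ is $\tfrac{1}{d+1}\bigl(\tfrac{d}{d+1}\bigr)^d\geq\tfrac{1}{4d}$ for every $d\geq1$. Conditioned on this event the contribution of $i$ matches its contribution in the corresponding blind offer mechanism, except that fewer buyers are now offered prices, so the supply constraint can only become easier to satisfy; hence the same Chernoff estimate~(\ref{eq:bound_chernoff}) lower-bounds by $\tfrac{2-\sqrt e}{2}$ the probability that supply remains when $i$ is reached. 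Summing over $i$ by linearity and invoking Lemma~\ref{thm:lp} yields the stated fraction of the optimum. Incentive compatibility and ex-post individual rationality come for free: $N$ is an ESPP mechanism, a querier is indifferent because he is never served and pays nothing, and a pricee can influence neither the price he is offered (it depends only on the truthful reports of the queriers in $S_i$) nor, since he submits no report, the supply available to him.

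The main obstacle I anticipate is making the last step rigorous, that is, cleanly decoupling the random partition from the supply/Chernoff argument. One must verify that conditioning on the event ``$i$ is priced'' does not degrade the per-buyer revenue contribution below its blind offer value, and that the Chernoff bound of Lemma~\ref{constantapx-klimited} still applies to the randomly sub-sampled set of pricees, where both the number of potential acceptances and their acceptance probabilities have changed. Confirming that the averaged solution $\hat y$ retains exactly the two properties the Chernoff computation relies on, and that the hypergeometric bound $(1-q)q^d\geq\tfrac1{4d}$ holds uniformly in $n$ and $d$ rather than only asymptotically, are the remaining delicate points; everything else is bookkeeping built on top of the already-established Lemmas~\ref{thm:lp},~\ref{optimal-unlimited}, and~\ref{constantapx-klimited}.
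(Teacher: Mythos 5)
Your overall architecture is exactly the paper's: combine the LP upper bound (Lemma~\ref{thm:lp}) and the blind offer guarantees (Lemmas~\ref{optimal-unlimited} and~\ref{constantapx-klimited}) with a conversion lemma that partitions the buyers into a query set $A$ and an offer set $B$, prices buyer $i$ only when $S_i\subseteq A$, and pays a factor $(1-q)q^d\geq 1/(4d)$. The paper's Lemma~\ref{blindoffer-to-eppm} does this with each buyer placed in $A$ \emph{independently} with probability $q$ (with $q=1-1/d$ for $d\ge 2$), which makes the $(1-q)q^d$ computation exact and sidesteps the hypergeometric/ceiling issues you correctly worry about in your random-prefix variant; that difference is cosmetic.

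There is, however, one genuine gap in your treatment of $M_\pi^k$. You propose to replace $y_i^*(v_i,\vec v_{-i})$ by its conditional average $\hat y_i(v_i,\vec v_{S_i})$ and assert that the objective value and the per-buyer constraint~(\ref{lp31}) ``are the only two properties of $y_i^*$ that the revenue and Chernoff analysis of Lemma~\ref{constantapx-klimited} actually uses.'' That is false: the Chernoff step hinges on the pointwise supply constraint~(\ref{lp33}), which is what guarantees $\mu=\sum_i z_i^{\vec v}\le k/2$ for every fixed $\vec v$ and hence the bound $1-(e/4)^{k/2}\ge(2-\sqrt e)/2$ on the probability that supply survives. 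Constraint~(\ref{lp33}) couples \emph{different} buyers at the \emph{same} valuation vector, while your averaging is performed buyer-by-buyer against \emph{different} conditional distributions (each conditioned on its own $\vec v_{S_i}$); the averaged solution therefore need not satisfy $\sum_i\sum_{v_i'\le v_i}\hat y_i(v_i',\vec v_{S_i})\le k$ at a given $\vec v$, only in expectation, and the Chernoff argument collapses. The fix is to drop the averaging entirely, as the paper does: by the definition of a blind offer mechanism the price distribution for buyer $i$ is a function of $\pi_{i,\vec v_{-i}}$ alone, and $d$-dimensional dependence gives $\pi_{i,\vec v_{-i}}=\pi_{i,\vec v_{S_i}}=\pi_{i,\vec v_A}$ whenever $S_i\subseteq A$, so the ESPP mechanism can simply \emph{simulate} the blind offer to $i$ from the elicited values (skipping $i$ otherwise); skipping only frees up supply, so every surviving buyer's conditional expected payment is at least its blind offer value and the original Chernoff bound applies unchanged.
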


A corollary of this theorem is that the bound of Theorem \ref{thm:enhancedlb} is asymptotically tight.
\begin{corollary}\label{cor:enhancedub}
For every instance $(n,\pi,k)$ there exists an enhanced sequential posted price mechanism of which the expected revenue is at least a $\Omega(1/n)$ fraction of the maximum expected revenue that can be extracted by an ex-post IC, ex-post IR mechanism.
\end{corollary}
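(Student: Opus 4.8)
The plan is to derive this corollary as an immediate instantiation of Theorem~\ref{mainthm-klimited}. The only thing one needs to observe is that \emph{every} valuation distribution $\pi$ on $\mathbb{R}^n$ is $(n-1)$-dimensionally dependent. This is essentially immediate from the definition: for each buyer $i \in [n]$ one takes $S_i = [n] \setminus \{i\}$, which has cardinality $d = n-1$, and then the required identity $\pi_{i,\vec{v}_{S_i}} = \pi_{i,\vec{v}_{-i}}$ holds trivially because $\vec{v}_{S_i} = \vec{v}_{-i}$ and hence the two conditional marginal distributions are literally the same object. So no structural assumption on $\pi$ is being used here; the point is simply that the most permissive level $d = n-1$ of the dependence hierarchy contains all distributions, as was already noted when $d$-dimensional dependence was introduced.

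The key step I would then carry out is to invoke Theorem~\ref{mainthm-klimited} with this choice $d = n-1$. For an arbitrary instance $(n,\pi,k)$, since $\pi$ is $(n-1)$-dimensionally dependent, the theorem guarantees an enhanced sequential posted price mechanism whose expected revenue is at least a
\[
\frac{2 - \sqrt{e}}{16(n-1)}
\]
fraction of the maximum expected revenue extractable by an ex-post IC, ex-post IR mechanism. Since $(2-\sqrt{e})/16$ is a positive absolute constant, this fraction is $\Omega(1/(n-1)) = \Omega(1/n)$, which is exactly the claimed bound.

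There is essentially no obstacle to overcome: all of the work has already been done in establishing Theorem~\ref{mainthm-klimited} (via the linear programming upper bound of Lemma~\ref{thm:lp}, the blind offer mechanism analysis of Lemma~\ref{constantapx-klimited}, and the conversion lemma from blind offer to enhanced sequential posted price mechanisms). The corollary is purely a matter of recognizing that the hypothesis of $d$-dimensional dependence is vacuous at $d = n-1$, so that the theorem's guarantee applies unconditionally to all instances. If anything, the only subtlety worth flagging is that the $\Omega(1/n)$ bound here matches — up to the hidden constant — the $O(1/n)$ upper bound of Theorem~\ref{thm:enhancedlb}, which is precisely why this shows that the earlier negative result is asymptotically tight for enhanced sequential posted price mechanisms.
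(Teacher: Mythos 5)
Your proposal is correct and is exactly the paper's own argument: the authors note when introducing $d$-dimensional dependence that every distribution on $\mathbb{R}^n$ is $(n-1)$-dimensionally dependent, and they obtain the corollary by instantiating Theorem~\ref{mainthm-klimited} with $d = n-1$. Nothing further is needed.
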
 

For proving our main result on enhanced sequential posted price mechanisms, we make use of our insights about blind offer mechanisms.
The next lemma shows how we can convert blind offer mechanisms into enhanced sequential posted price mechanisms while losing only a factor of $1/4d$ of the revenue, if the valuation distribution is $d$-dimensionally dependent.
\begin{lemma}\label{blindoffer-to-eppm}
Let $\alpha \in [0,1]$ and let $(n,\pi,k)$ be an instance where $\pi$ is $d$-dimensionally dependent. If there exists a blind offer mechanism that extracts in expectation at least an $\alpha$ fraction of the expected revenue of the optimal dominant strategy IC, ex-post IR mechanism, then there exists an enhanced sequential posted price mechanism that extracts in expectation at least a $\alpha/\max\{4d,1\}$ fraction of the expected revenue of the optimal ex-post IC, ex-post IR mechanism.
\end{lemma}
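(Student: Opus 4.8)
The plan is to simulate the given blind offer mechanism $M$ by an enhanced sequential posted price mechanism that \emph{first} elicits the valuations of a random subset of buyers and then replays $M$'s offers on the remaining buyers whose ``experts'' have all been queried. The crucial observation is that, by $d$-dimensional dependence, the price $M$ offers to buyer $i$ is drawn from a distribution depending on $\vec{v}_{-i}$ only through $\vec{v}_{S_i}$ (since $\pi_{i,\vec{v}_{S_i}} = \pi_{i,\vec{v}_{-i}}$); hence knowing the valuations of the $d$ buyers in $S_i$ already suffices to reproduce $M$'s offer to $i$. This is what will let us avoid the extra ordering loss one would incur by using a uniformly random order and hoping that $S_i$ precedes $i$.

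Concretely, I would fix a parameter $q \in [0,1]$ and construct $M'$ as follows: include each buyer in a set $R$ independently with probability $q$; in a first phase, ask every buyer in $R$ to report its valuation (these buyers receive no service and pay nothing); in a second phase, process the buyers in $[n]\setminus R$ in the order used by $M$, and for each such $i$ with $S_i \subseteq R$ offer exactly the price $M$ would offer (which is computable, since all of $\vec{v}_{S_i}$ was revealed in the first phase), while skipping every $i$ with $S_i \not\subseteq R$ (a skip being realized as an offer at a prohibitively high price, a legal enhanced-SPP action). Then $M'$ is a valid enhanced sequential posted price mechanism and is dominant strategy IC: a queried buyer is never served and is therefore indifferent to its report, while an offered buyer is never queried and is offered a price independent of its own (absent) report, so accepting exactly when the price is at most its value is dominant.

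For the analysis, let $G_i$ be the event $\{i \notin R\}\cap\{S_i \subseteq R\}$, so that $M'$ replays $M$'s offer to $i$ precisely when $G_i$ occurs. Since $i \notin S_i$, these events are independent and $\mathbf{Pr}[G_i] = (1-q)q^d$; moreover, because $R$ is drawn independently of the valuations and of $M$'s internal randomness, $G_i$ is independent of everything that determines $M$'s revenue from $i$. The key step is a coupling argument: couple the order, the price draws, and the terminate/skip decisions of $M$ and $M'$ so that $M'$ makes offers to a subset of the buyers $M$ would (namely those in $G$). Then, in the common order, $M'$ can only have consumed weakly fewer units of supply by the time $i$ is reached, so $i$ is reached with remaining supply at least as often under $M'$ as under $M$; since the offered price and the acceptance rule are identical, the revenue collected from $i$ conditioned on $G_i$ is at least $M$'s expected revenue from $i$. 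Using the independence of $G_i$ to drop the conditioning and summing over $i$ gives $\mathbf{E}[\mathrm{rev}(M')] \geq (1-q)q^d\,\mathbf{E}[\mathrm{rev}(M)] \geq (1-q)q^d\,\alpha\cdot OPT$.

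It remains to optimize $q$. For $d \geq 1$ I would take $q = d/(d+1)$, so that $(1-q)q^d = d^d/(d+1)^{d+1}$, and then verify the clean bound $(1-q)q^d \geq 1/(4d)$, which is equivalent to $(1+1/d)^{d+1} \leq 4$; this holds because the left-hand side equals $4$ at $d=1$ and decreases monotonically to $e$ as $d\to\infty$. This yields the factor $\alpha/(4d)$. For $d=0$ each $S_i$ is empty, so taking $q=0$ (whence $R=\emptyset$ and $S_i\subseteq R$ trivially) makes $M'$ coincide with $M$, which is already an enhanced (indeed standard) sequential posted price mechanism, giving the factor $\alpha/\max\{4d,1\}=\alpha$. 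I expect the main obstacle to be making the supply-monotonicity coupling fully rigorous---coupling the randomness of $M$ and $M'$ so that, conditioned on $G_i$, buyer $i$ is provably served at least as often under $M'$---together with confirming that implementing ``skip'' as a prohibitive offer keeps $M'$ within the enhanced-SPP framework and dominant strategy IC.
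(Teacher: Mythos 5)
Your proposal is correct and follows essentially the same route as the paper's proof: randomly query each buyer with probability $q$, replay the blind offer mechanism's price on each unqueried buyer $i$ whose set $S_i$ was fully queried (an event of probability $(1-q)q^d$, independent of the valuations), invoke supply-monotonicity to argue the per-buyer revenue conditioned on that event is at least that of $M$, and optimize $q$. The only differences are cosmetic --- you choose $q=d/(d+1)$ where the paper uses $q=1/2$ for $d=1$ and $q=1-1/d$ for $d\geq 2$ (both yield $(1-q)q^d\geq 1/(4d)$), and you make explicit the coupling behind the inequality $p_i(B,\vec{v})\geq p_i^M(\vec{v})$ that the paper simply asserts.
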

\begin{proof}
Let $M$ be a blind offer mechanism that extracts in expectation at least an $\alpha$ fraction of the expected revenue of the optimal dominant-strategy IC, ex-post IR mechanism.
Let $p_i^M(\vec{v})$ be the expected price paid to mechanism $M$ by buyer $i \in [n]$ when $\vec{v}$ are the reported valuations. Let $q \in [0,1]$ and consider the following enhanced sequential posted price mechanism $M_q$: the mechanism $M_q$ first partitions $[n]$ into two sets $A$ and $B = [n] \setminus A$. It does so by placing each buyer independently with probability $q$ in set $A$, and placing him in set $B$ otherwise. Then, the mechanism retrieves the vector $\vec{v}_A$ by asking the buyers in $A$ for their valuations. The existence of $M$ implies the existence of a blind offer mechanism $M_B$ that only makes offers to buyers in $B$ such that the expected price $p_i^{M_B}(\vec{v})$ paid to $M_B$ by a buyer in $B$ is at least $p_i^M(\vec{v})$ (this can be achieved by doing the same as $M$, but refraining from offering to buyers in $A$). Mechanism $M_q$ offers each buyer $i \in B$ a price that is determined by simulating $M_B$ as follows: make the same decisions as $M_B$ would, except for that an offer of $M_B$ is skipped if $\pi_{i,\vec{v}_{-i}} \not= \pi_{i,\vec{v}_A}$. 

Let $P$ be the distribution (induced by mechanism $M_q$) on the set of 
partitions of $[n]$ into $2$ sets. For $i \in [n]$, let $S_i \subseteq [n]\setminus\{i\}$ be the set of $d$ buyers such that $\pi_{i,\vec{v}_{S_i}} = \pi_{i,\vec{v}_{-i}}$ for all $\vec{v} \in \mathbb{R}^n$. For $T \subseteq [n]$, let $p_i(T,\vec{v})$ be the expected price paid to $M_q$ by a buyer $i \in T$, conditioned on the event that $B=T$ and $S_i \subseteq A$. Note that $p_i(T,\vec{v}) \geq p_i^{M_B}(\vec{v}) \geq p_i^M(\vec{v})$. Therefore, the expected revenue of $M_q$ is
\begin{align*}
\sum_{\vec{v} \in \text{supp}(\pi)} \pi(\vec{v}) \sum_{i \in [n]} \mathbf{Pr}_{\{A,B\} \sim P}[i \in B \cap S_i \subseteq A]p_i(B,\vec{v}) & = \sum_{\vec{v} \in \text{supp}(\pi)} \pi(\vec{v}) \sum_{i \in [n]} (1-q)q^d p_i(B,\vec{v}) \\
& \geq (1-q)q^d \sum_{\vec{v} \in \text{supp}(\pi)} \pi(\vec{v}) \sum_{i \in [n]}  p_i^M(\vec{v})
\end{align*}
The last (double) summation is at least $\alpha$ times the expected revenue of the optimal dominant strategy IC, ex post IR mechanism, by definition of $M$. Therefore, this mechanism extracts at least a $(1-q)q^d\alpha$ fraction of the optimal revenue. For $d=0$ it is optimal to set $q = 0$, which results in an enhanced sequential posted price mechanism whose revenue is $\alpha$-approximately optimal. For $d = 1$ it is optimal to set $q = 1/2$, which results in a $(\alpha/4)$-approximately optimal enhanced sequential posted price mechanism. For $d \geq 2$ setting $q = 1-1/d$ will achieve the desired approximation ratio, since $\lim_{d \rightarrow \infty}(1-1/d)^d = 1/e$. Moreover, $(1-1/d)^d$ is increasing in $d$, and equals $1/4$ for $d = 2$. 
\end{proof}
We note that in the above proof it is easy to see that we can decrease the fraction $q$ of buyers being probed for their valuation at the cost of worsening the approximation guarantee. 

Theorem \ref{mainthm-klimited} directly follows by combining the lemmas above. 

\section{Open problems}
Besides improving any of the approximation bounds that we established in the present paper, there are many other interesting further research directions.
For example, it would be interesting to investigate the revenue guarantees under the additional constraint that the sequential posted price mechanism be \emph{order-oblivious}: i.e., the mechanism has no control over which buyers to pick, and should perform well for any possible ordering of the buyers.
We are also interested in resolving some questions regarding the use of randomization in our enhanced posted price mechanism that extracts $O(1/d)$ of the optimal revenue: in the current proof it is necessary to pick buyers uniformly at random. Does there exist a deterministic enhanced sequential posted price mechanism that attains the same revenue guarantee, or is randomness a necessity? 

An obvious and interesting research direction is to investigate more general 
auction problems. In particular, to what extent can extended SPP mechanisms 
be applied to auctions having non-identical items? Additionally, can such 
mechanisms be applied to more complex allocation constraints or specific valuation functions for the buyers? The agents may have, for example, a demand of more than one item, or there may be a matroid feasibility constraint on the set of buyers or on the set of items that may be allocated.

\section*{Acknowledgments}
We thank Joanna Drummond, Brendan Lucier, Tim Roughgarden  and anonymous 
referees for their 
constructive comments which we have incorporated. 
This work is supported by the EU FET project MULTIPLEX no. 317532, the ERC StG Project PAAl 259515, the Google Research Award for Economics and Market Algorithms, and the Italian MIUR PRIN 2010-2011 project ARS TechnoMedia -- Algorithmics for Social Technological Networks.

\bibliographystyle{abbrvnat}
\bibliography{thoughts}

\newpage
\appendix
\section*{APPENDIX}

\section{Continuous distribution properties}
\label{apx:cont_prop}
Let $\pi$ be a valuation distribution on $[0,a_i]^n$, with $a_i \in \mathbb{R}_{\geq 0}$ for $i \in [n]$,
with density $f$ that is continuous and nowhere zero.
Distribution $\pi$ is said to satisfy \emph{affiliation} iff for every two valuation vectors $\vec{v},\vec{w} \in \text{supp}(\pi)$
it holds that $f(\vec{v} \wedge \vec{w})f(\vec{v} \vee \vec{w}) \geq f(\vec{v})f(\vec{w})$, where $\vec{v}\wedge \vec{w}$
is the component-wise minimum and $\vec{v} \vee \vec{w}$ is the component-wise maximum.
For $i \in [n]$ and $\vec{v}_{-i} \in \text{supp}(\pi_{-i})$ the \emph{conditional marginal density function} $f_i(\cdot | \vec{v}_{-i})$ is defined as 
\begin{equation*}
f_i(v_i \mid \vec{v}_{-i}) = \frac{f(v_i,\vec{v}_{-i})}{\int_0^a f(t,\vec{v}_{-i})dt},
\end{equation*}
the \emph{conditional revenue curve} $B_i(\cdot \mid \vec{v}_{-i})$ is defined as 
\begin{equation*}
B_i(v_i \mid \vec{v}_{-i}) = v_i\int_{v_i}^a f_i(t \mid \vec{v}_{-i})dt,
\end{equation*}
and the \emph{conditional virtual value} $\phi_i(\cdot \mid \vec{v}_{-i})$ is defined as
\begin{equation*}
\phi_i(v_i \mid \vec{v}_{-i}) = - \frac{\frac{d}{dv_i}B_i(v_i \mid \vec{v}_{-i})}{f_i(v_i \mid \vec{v}_{-i})}.
\end{equation*}
Denote by $F_i(\cdot \mid \vec{v}_{-i})$ the cumulative distribution function corresponding to $f_i(\cdot \mid \vec{v}_{-i})$.
Distribution $\pi$ satisfies \emph{regularity} if $\phi_i(\cdot \mid \vec{v}_{-i})$ is non-decreasing for all $i \in [n]$ and $\vec{v}_{-i} \in \text{supp}(\pi_{-i})$
and it satisfies the \emph{monotone hazard rate} condition if $\frac{1 - F_i(v_i \mid \vec{v}_{-i})}{f_i(v_i \mid \vec{v}_{-i})}$ is non-increasing in $v_i$ for all $i \in [n]$ and $\vec{v}_{-i} \in \text{supp}(\pi_{-i})$.

A discussion and justification for the above notions is outside of the scope of this paper, and we refer the interested reader to \citep{roughgarden-talgamcohen}.

Note that \citet{roughgarden-talgamcohen} proved that for any distribution $\pi$ that satisfies regularity and affiliation
the Myerson mechanism is ex-post IC, ex-post IR and optimal among all ex-post IC and ex-post IR mechanisms.

\section{A sequential posted price mechanism for the $k$-limited setting}
\label{apx:bound_limited}

\begin{definition}
For a valuation distribution $\pi$ on $\mathbb{R}^n$, let $v_{\pi}^{\max(k)}$ and $v_{\pi}^{\min(k)}$ be respectively the maximum $k$th largest and minimum $k$th smallest valuation among the valuation vectors in $\text{supp}(\pi)$. Let $r_{\pi}^{(k)} = v_{\pi}^{\max(k)}/v_{\pi}^{\min(k)}$ be the ratio between these values.
\end{definition}

\begin{theorem}
Let $n \in \mathbb{N}_{\geq 1}$, and let $\pi$ be a probability distribution on $\mathbb{R}^n$.
For any $k \in [n]$, there exists a sequential posted price mechanism that, when run on instance $(n, \pi, k)$,
extracts in expectation at least an $\Omega\left(\frac{1}{\log(r_{\pi}^{(k)})} \cdot \frac{v_\pi^{\max(k)}}{v_\pi^{\max(1)}}\right)$
fraction of the expected revenue of the expected optimal social welfare
(and therefore also of the expected revenue of the optimal dominant strategy IC, ex-post IR mechanism).
\end{theorem}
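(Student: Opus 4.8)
The plan is to generalize the bucketing mechanism from the proof of Proposition~\ref{ubsppunitsupply} to the $k$-limited setting by offering a single uniformly random price from a geometric grid and selling up to $k$ units at that price. Concretely, I would let $S = \{v_\pi^{\min(k)} 2^j \colon j \in \{0,1,\dots,\lfloor \log_2 r_\pi^{(k)} \rfloor\}\}$, so that $|S| = O(\log r_\pi^{(k)})$, the grid covers $[v_\pi^{\min(k)}, v_\pi^{\max(k)}]$ up to a factor $2$, and its largest element $p_{\max}$ satisfies $v_\pi^{\max(k)}/2 < p_{\max} \le v_\pi^{\max(k)}$. The mechanism draws $p$ uniformly from $S$ and offers $p$ to the buyers in an arbitrary order, serving each buyer that accepts until $k$ buyers have been served. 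For a realized valuation vector with order statistics $w_1 \ge \dots \ge w_n$, each buyer accepts iff its value is at least $p$, so the revenue at price $p$ equals $p \cdot \min\{k, |\{i \colon w_i \ge p\}|\}$, while the realized optimal welfare is $OSW(\vec v) = \sum_{j=1}^k w_j$.

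Next I would fix a realization and split the top-$k$ order statistics at $p_{\max}$ into a high part $H = \{j \le k \colon w_j \ge p_{\max}\}$ and a low part $L = \{j \le k \colon w_j < p_{\max}\}$, and bound the two contributions by \emph{disjoint} grid prices. For the high part, every such buyer accepts $p_{\max}$ and there are at most $k$ of them, so the revenue at the single price $p_{\max}$ is at least $|H| p_{\max} > \tfrac{1}{2} v_\pi^{\max(k)} |H|$; since $\sum_{j \in H} w_j \le |H| v_\pi^{\max(1)}$, this recovers a $v_\pi^{\max(k)}/(2 v_\pi^{\max(1)})$ fraction of the high welfare. This single step is exactly where the factor $v_\pi^{\max(k)}/v_\pi^{\max(1)}$ enters: the high-value buyers are served at the ceiling price $\approx v_\pi^{\max(k)}$ even though each of their valuations may be as large as $v_\pi^{\max(1)}$.

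For the low part I would reuse the bucketing idea of Proposition~\ref{ubsppunitsupply}, but \emph{summing} the contributions of all grid prices strictly below $p_{\max}$ rather than picking a single one. Each low buyer $j$ (which satisfies $v_\pi^{\min(k)} \le w_j < p_{\max}$, since $w_j \ge w_k \ge v_\pi^{\min(k)}$) maps to the grid point $p(j) \in (w_j/2, w_j]$ lying strictly below $p_{\max}$; the low buyers mapping to a common grid point $p$ form a contiguous block of ranks $[a,b]$ with $b \le k$, and since buyers $1,\dots,b$ all exceed $p$ the mechanism serves at least $b$ units at $p$, giving revenue at $p$ at least $pb > \tfrac{1}{2}\sum_{j=a}^{b} w_j$. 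Summing over these grid points shows that $\sum_{p \in S,\, p < p_{\max}}(\text{revenue at } p) \ge \tfrac{1}{2}\sum_{j \in L} w_j$, with no additional logarithmic loss. Because high and low are charged to disjoint grid points, the expected revenue over the uniform choice of $p$ is at least $\tfrac{1}{|S|}\big(\tfrac{v_\pi^{\max(k)}}{2 v_\pi^{\max(1)}}\sum_{j\in H}w_j + \tfrac{1}{2}\sum_{j \in L} w_j\big) \ge \tfrac{v_\pi^{\max(k)}}{2 v_\pi^{\max(1)}|S|}\, OSW(\vec v)$, using $v_\pi^{\max(k)} \le v_\pi^{\max(1)}$. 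Taking the expectation over $\vec v \sim \pi$ and recalling $|S| = O(\log r_\pi^{(k)})$ and that $OSW$ upper bounds the optimal dominant strategy IC, ex-post IR revenue then yields the claim.

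The step I expect to be the main obstacle is the low-part analysis: in contrast to the unit-supply case a low price is accepted by many buyers simultaneously and the supply cap of $k$ couples the order statistics, so a naive charging that picks the single best grid price would lose an extra $\Theta(\log k)$ factor. The key to avoiding this is the observation that the buyers assigned to a fixed grid price occupy a contiguous range of ranks, which lets the served quantity at that price be lower bounded by the top rank in the block and makes the per-price contributions telescope into the full low welfare. Getting the high and low contributions to use disjoint grid points (so that they can simply be added rather than combined through a lossy maximum) is the other delicate bookkeeping point.
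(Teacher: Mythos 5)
Your proof is correct, and it deploys exactly the same mechanism as the paper (a single price drawn uniformly from the geometric grid $S$ between $v_\pi^{\min(k)}$ and $v_\pi^{\max(k)}$, offered to the buyers in arbitrary order until $k$ units are sold), but the analysis is genuinely different. The paper's proof conditions on the single event that the random price is the largest grid point below the \emph{realized} $k$-th order statistic $w_k(\vec{v})$ (probability $1/|S|$), observes that then $k$ units sell at price at least $w_k(\vec{v})/2$, and introduces the factor $v_\pi^{\max(k)}/v_\pi^{\max(1)}$ through a per-winner comparison of the price against the per-winner welfare cap $v_\pi^{\max(1)}$. You instead fix the realization, split the top-$k$ welfare at the top grid point $p_{\max}\in(v_\pi^{\max(k)}/2,\,v_\pi^{\max(k)}]$, charge the high part to the single price $p_{\max}$ (this is where your factor $v_\pi^{\max(k)}/v_\pi^{\max(1)}$ enters), and charge the low part to the strictly smaller grid points via the contiguous-block observation, losing only a factor $2$ there. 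This buys two things the paper's sketch does not deliver as cleanly: a pointwise guarantee $\mathbf{E}_p[\text{revenue}]\geq \frac{v_\pi^{\max(k)}}{2 v_\pi^{\max(1)}|S|}\,OSW(\vec{v})$ valid for every realization, and a correct accounting of realizations in which $w_k(\vec{v})$ is far below $v_\pi^{\max(k)}$ --- there the price in the paper's good event is only about $w_k(\vec{v})/2$, which is not $\Omega(v_\pi^{\max(k)})$ per winner, whereas your low-part charging simply recovers half of that welfare from the lower grid prices. The two delicate points you flagged are indeed the crux and are handled correctly: the block $[a,b]$ assigned to a grid price $p<p_{\max}$ has $b\leq k$, so the number of units sold at $p$ is $\min\{k,|\{i: v_i\geq p\}|\}\geq b$ and the revenue at $p$ is at least $pb>\frac{1}{2}\sum_{j=a}^{b}w_j$, and the disjointness of the high and low grid points lets the contributions add rather than pass through a lossy maximum.
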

\begin{proof}
Let $M$ be the sequential posted price mechanism that draws a value $p$ uniformly at random from the set $S = \{v_{\pi}^{\min(k)}2^j \colon j \in [\lceil \log(r_{\pi}^{(k)}) - 1 \rceil] \cup \{0\}\}$. $M$ offers price $p$ to all the buyers in an arbitrary order, until $k$ buyers accept.

Let $\pi_{\max(k)}$ be the probability distribution of the $k$-th highest value of $\pi$.
Note that $|S|$ does not exceed $\log(r_{\pi}^{(k)})$. Therefore the probability that $p$ is the highest possible value (among the values in $S$) that does not exceed the value drawn from $\pi_{\max(k)}$, is equal to $1/\log(r_{\pi}^{(k)})$. More formally, let $\pi_S$ be the probability distribution from which $p$ is drawn; then
\begin{equation*}
\mathbf{Pr}_{v^{\max(k)} \sim \pi_{\max(k)}, p \sim \pi_S }[p \leq v^{\max(k)} \cap (\not\exists p' \in S \colon p' > p \wedge p' \leq v^{\max(k)})] \leq \frac{1}{\log(r_{\pi}^{(k)})}.
\end{equation*}
Thus, with probability $1/\log(r_{\pi}^{(k)})$, the mechanism extracts from each winner a revenue of exactly $v_{\pi}^{\min(k)}2^j$, where $j$ is the number such that the value drawn from $\pi_{\max(k)}$ lies in between $v_{\pi}^{\min(k)}2^j$ and $v_{\pi}^{\min(k)}2^{j+1}$.
This implies that with probability $1/\log(r_{\pi}^{(k)})$ the mechanism extracts from buyer $i$ a revenue that lies a factor of $O\left(\frac{v_\pi^{\max(k)}}{v_\pi^{\max(1)}}\right)$ away from $v_\pi^{\max(1)}$.
This leads to the conclusion that
$$
\mathbf{E}_{\vec{v} \sim \pi}[\text{revenue of }M(\vec{v})] \geq \Omega\left(\frac{1}{\log(r_{\pi}^{(k)})} \cdot \frac{v_\pi^{\max(k)}}{v_\pi^{\max(1)}}\right) \sum_{i \in W_M} v_\pi^{\max(1)},$$
where $W_M$ denotes the set of buyers for which the mechanism $M$ allocates the service.
The theorem then follows since $\sum_{i \in W_M} v_\pi^{\max(1)} = \sum_{i \in W_{OPT}} v_\pi^{\max(1)} \geq OPT = \sum_{i \in W_{OPT}} v_i$,
where $W_{OPT}$ denotes the set of buyers at which the optimal mechanism allocates the service,
and $OPT$ is the social welfare achieved by the optimal mechanism.
\end{proof}

We say that an instance $(n,\pi,k)$ is \emph{$k$-well-separated} if for any $\vec{v} \in \text{supp}(\pi)$
the $k$-th coordinate-wise maximum $v_{\pi}^{\max(k)}$ is achieved only by a single buyer,
i.e., the set $\{ i \colon v_i = v_{\pi}^{\max(k)}, \vec{v} \in \text{supp}(\pi)\}$ is a singleton.
Then we can prove the following proposition.
\begin{proposition}
Let $n \in \mathbb{N}_{\geq 1}$, and let $\pi$ be a discrete probability distribution on $\mathbb{R}^n$.
For any $k \in [n]$, if the instance $(n, \pi, k)$ is \emph{$k$-well-separated},
then there exists a sequential posted price mechanism that, when run on instance $(n, \pi, k)$,
extracts in expectation at least an $\Omega\left(\frac{1}{\log(r_{\pi}^{(k)})} \cdot \max_{i \in [n]} \log \frac{v_\pi^{\max(k)}}{v_{\pi,i}^{\max}}\right)$
fraction of the expected optimal social welfare
(and therefore also of the expected revenue of the optimal dominant strategy IC and ex-post IR mechanism).
\end{proposition}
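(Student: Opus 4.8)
The plan is to adapt the randomized bucketing mechanism from the preceding theorem so that the price grid is tailored to the individual valuation ranges of the buyers rather than to the single global window, and to use the $k$-well-separation hypothesis to guarantee that the $k$ buyers we target can be served simultaneously without tie-breaking ambiguities consuming the supply. Concretely, I would have the mechanism first draw a level $j$ uniformly from a geometric grid of size $O(\log r_\pi^{(k)})$ spanning the window of possible $k$-th largest valuations, exactly as in the preceding theorem; this fixes, with probability $1/\log(r_\pi^{(k)})$, a price that lies within a factor $2$ below the realized $k$-th largest valuation. Conditioned on hitting this level, the mechanism then refines the offer on a per-buyer basis, as in the technique of Proposition~\ref{prop:logunlimited}: to each buyer $i$ it offers a price from a buyer-specific sub-grid that accounts for $v_{\pi,i}^{\max}$, so that the loss incurred on buyer $i$ is governed by $\log(v_\pi^{\max(k)}/v_{\pi,i}^{\max})$ rather than by the crude global ratio used previously.

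First I would record, as in Proposition~\ref{ubsppunitsupply} and the preceding theorem, the elementary fact that a single uniformly random level of the geometric grid lands in the highest bucket not exceeding the realized value with probability exactly $1/\lceil\log r_\pi^{(k)}\rceil$, and that conditioned on this the revenue extracted from a served winner is at least half of $v_\pi^{\max(k)}$. Second I would invoke $k$-well-separation: since for every $\vec v \in \text{supp}(\pi)$ the $k$-th largest coordinate is attained by a single buyer, the set of top $k$ buyers is unambiguous, so the mechanism can present the offers in decreasing order of the grid and be certain that exactly the intended winners accept before the supply of $k$ is exhausted; this removes the possibility that a tie at the threshold causes a buyer to be served in place of a more valuable one. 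Third I would bound the optimal social welfare from above buyer-by-buyer, using $v_i \le v_{\pi,i}^{\max}$ for each winner rather than the uniform bound $v_i \le v_\pi^{\max(1)}$, and assemble the per-buyer contributions, each of which survives with probability $\Omega(1/\log r_\pi^{(k)})$, into the claimed $\Omega\!\left(\tfrac{1}{\log r_\pi^{(k)}}\max_{i}\log\tfrac{v_\pi^{\max(k)}}{v_{\pi,i}^{\max}}\right)$ fraction of the expected optimal social welfare.

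The main obstacle I anticipate is the interaction between the per-buyer price refinement and the hard supply constraint of $k$: offering each buyer a price drawn from its own sub-grid makes the number of acceptances a sum of dependent indicator variables, and one must argue that with the promised probability at least one targeted winner still accepts while no more than $k$ do, so that supply is neither wasted nor overrun. This is precisely where $k$-well-separation does the work, since it pins down the identity of the marginal ($k$-th) winner deterministically as a function of $\vec v$ and thereby decouples the ``hit the right bucket'' event for that buyer from the acceptance behaviour of the others, which can only contribute additional revenue and never displace the marginal winner. A secondary technical point is to verify that the per-buyer sub-grids can be chosen so that their combined size does not reintroduce an extra $\log$ factor; this follows because each sub-grid is nested inside the single global grid already accounted for in the $1/\log r_\pi^{(k)}$ term.
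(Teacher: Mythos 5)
Your proposal follows essentially the same route as the paper's proof: a uniformly random level from a geometric grid over the window of the $k$-th largest valuation, a per-buyer sub-grid refinement whose loss on buyer $i$ is governed by $\log\bigl(v_{\pi,i}^{\max}/v_\pi^{\max(k)}\bigr)$, and $k$-well-separation to guarantee that only the top-$k$ buyers can accept so that the supply constraint never displaces an intended winner, followed by the same buyer-by-buyer comparison against the optimal social welfare. The one detail the paper makes explicit that you leave implicit is that the grid base $\epsilon$ must be chosen at most the minimal ratio $\delta>1$ between distinct valuations in the (discrete) support, which is what actually ensures the chosen global price level separates the $k$-th largest valuation from the $(k+1)$-th; your appeal to well-separation alone gestures at the same point but needs this granularity choice to go through.
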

\begin{proof}
Since $\pi$ is discrete, let $\delta$ be the smallest ratio larger than 1 between two valuation in $\vec{v} \in \text{supp}(\pi)$,
i.e., $\delta = \min_{i,j} \{ v_i/v_j > 1 \colon \vec{v} \in \text{supp}(\pi)\}$.
Consider $\epsilon \leq \delta$ and let $M$ be the sequential posted price mechanism that draws a value $p$
uniformly at random from the set $S = \{v_{\pi}^{\min(k)}\epsilon^j \colon j \in [\lceil \log_\epsilon(r_{\pi}^{(k)}) - 1 \rceil] \cup \{0\}\}$.
Moreover, $M$ draws for each $i \in [n]$ a value $p_i$ uniformly at random from the set $S_i = \{v_{\pi,i}^{\min}\epsilon^\ell \colon \ell \in [\lfloor \log_\epsilon \frac{p}{v_{\pi,i}^{\min}} \rfloor, \lceil \log_\epsilon \frac{p}{v_{\pi,i}^{\min}} -1 \rceil]\}$. $M$ proposes prices to the buyers in an arbitrary order, and offers price $p_i$ to buyer $i$.

Let $\pi_{\max(k)}$ be the probability distribution of the $k$-th coordinate-wise maximum of $\pi$.
Note that $|S|$ does not exceed $\log_\epsilon(r_{\pi}^{(k)})$. Therefore the probability that $p$ is the highest possible value (among the values in $S$) that does not exceed the value drawn from $\pi_{\max(k)}$, is equal to $1/\log_\epsilon(r_{\pi}^{(k)})$. More formally, let $\pi_S$ be the probability distribution from which $p$ is drawn; then
\begin{equation*}
\mathbf{Pr}_{v^{\max(k)} \sim \pi_{\max(k)}, p \sim \pi_S }[p \leq v^{\max(k)} \cap (\not\exists p' \in S \colon p' > p \wedge p' \leq v^{\max(k)})] \leq \frac{1}{\log_\epsilon(r_{\pi}^{(k)})}.
\end{equation*}
Thus, with probability $1/\log_\epsilon(r_{\pi}^{(k)})$, the mechanism selects $p = v_{\pi}^{\min(k)}\epsilon^j$, where $j$ is the number such that the value drawn from $\pi_{\max(k)}$ lies in between $v_{\pi}^{\min(k)}\epsilon^j$ and $v_{\pi}^{\min(k)}\epsilon^{j+1}$.
When this event occurs, since the instance is $k$-well separated and by our choice of $\epsilon$,
the set $W_M$ of buyers whose valuation is at least $p$ has size exactly $k$
and corresponds of the set $W_{OPT}$ of buyers with the $k$ highest valuation in $\vec{v} \in \text{supp}(\pi)$.
Hence, with probability $1/\log_\epsilon(r_{\pi}^{(k)})$ the mechanism $M$ extracts revenue only from buyers in $W_{OPT}$.

Now, for any $i \in W_{OPT}$, let $\pi_{i}$ be the probability distribution of the $i$-th coordinate of $\pi$.
Note that $|S_i|$ does not exceed $\log_\epsilon \frac{v_{\pi,i}^{\max}}{v_\pi^{\max(k)}}$. Therefore the probability that $p_i$ is the highest possible value (among the values in $S_i$) that does not exceed the value drawn from $\pi_{i}$, is at least $\log_\epsilon \frac{v_\pi^{\max(k)}}{v_{\pi,i}^{\max}}$. More formally, let $\pi_{S_i}$ be the uniform distribution on $S$; then
\begin{equation*}
\mathbf{Pr}_{v_{i} \sim \pi_{i}, p_i \sim \pi_{S_i} }[p_i \leq v_{i} \cap (\not\exists p'_i \in S_i \colon p'_i > p_i \wedge p'_i \leq v_{i})] \leq \log_\epsilon \frac{v_\pi^{\max(k)}}{v_{\pi,i}^{\max}}.
\end{equation*}
Thus, with probability $\frac{1}{\log_\epsilon(r_{\pi}^{(k)})} \cdot \log_{\epsilon} \frac{v_\pi^{\max(k)}}{v_{\pi,i}^{\max}}$, the mechanism extracts from buyer $i \in W_{OPT}$ a revenue of exactly $v_{i,\min}2^{\ell}$, where $\ell$ is the number such that the value drawn from $\pi_{i}$ lies in between $v_{i,\min}2^\ell$ and $v_{i,\min}2^{\ell+1}$. This implies that with probability $\frac{1}{\log_\epsilon(r_{\pi}^{(k)})} \cdot \log_{\epsilon} \frac{v_\pi^{\max(k)}}{v_{\pi,i}^{\max}}$ the mechanism extracts from buyer $i \in W_{OPT}$ a revenue that lies a factor of at most $1/2$ away from $v_i$. This leads to the conclusion that 
\begin{align*}
\mathbf{E}_{\vec{v} \sim \pi}[\text{revenue of }M(\vec{v})] & \geq \sum_{i \in W_{OPT}} \frac{1}{\log_\epsilon(r_{\pi}^{(k)})} \cdot \log_{\epsilon} \frac{v_\pi^{\max(k)}}{v_{\pi,i}^{\max}} \frac{1}{2} \mathbf{E}_{v_i \sim \pi_i}[v_i] \\
 & = \Omega\left(\frac{1}{\log(r_{\pi}^{(k)})} \cdot \max_{i \in [n]} \log \frac{v_\pi^{\max(k)}}{v_{\pi,i}^{\max}}\right) \sum_{i \in [n]}\mathbf{E}_{v_i \sim \pi_i}[v_i] \\
 & = \Omega\left(\frac{1}{\log(r_{\pi}^{(k)})} \cdot \max_{i \in [n]} \log \frac{v_\pi^{\max(k)}}{v_{\pi,i}^{\max}}\right) \mathbf{E}_{\vec{v} \sim \pi}[OPT(\vec{v})],
\end{align*} 
where $OPT(\vec{v}) = \sum_{i \in W_{OPT}} v_i$ denotes the optimal social welfare when the buyers have valuation vector $v$.
\end{proof}

\section{Addressing some practical problems of enhanced sequential posted price mechanisms}
\label{apx:espp_fix}
The first problematic aspect is that while there is no incentive for a buyer to lie, there is also no incentive to tell the truth.
Therefore, incentive compatibility is only achieved in weakly dominant strategies.
We note that in the literature many (or perhaps most)  truthful mechanisms 
are only incentive compatible in the weak sense.  
Such mechanisms are of  theoretical interest, and  may 
possibly be turned into more practically satisfactory mechanisms. 

In the case of enhanced SPP mechanisms, the lack of a strong incentive
to be truthful only applies to those buyers who are asked for their 
values, knowing they will not be allocated the item. Such a buyer may not 
cooperate at all, or in stating their value may not be truthful. 
The first problem can be resolved by compensating the buyer with some fixed small amount of money that the auctioneer obtains from the buyers who pay for the service. Having insured some level of cooperation, how do we incentivize
these buyers to be truthful? 

Here is an example of such an adaptation of our enhanced SPP mechanisms that creates the proper strong incentive. 
Suppose now that we have provided an incentive for
 every buyer to
reveal a valuation.
At the start of the auction, using a
cryptographic protocol (or just a normal sealed
envelope),  we ask each of the buyers for
a sealed commitment of their value. 
Furthermore, for buyers
being offered a price, with some
(say small) probability, the buyer
must reveal their private valuation in order to be
allowed the item.
If the revealed value is larger than the offered price,
then the buyer is punished and the item is not allocated to him.
Now this is strongly incentive compatible if
we  assume  buyers are risk
averse so that they will not over-bid their
valuation. There is clearly
no monetary reason for a buyer to under-bid.

\section{On $d$-dimensional dependence versus Markov random fields of degree $d$}
\label{apx:limdependence}
This section is intended for readers who are interested in the relative generality of $d$-dimensionally dependence compared to Markov random fields of degree $d$. We assume that reader is familiar with the definition of Markov random fields.
For convenience we will state a weaker notion here. 
\begin{definition}
Given a undirected graph $G = ([n],E)$, a probability distribution $\pi$ on $\mathbb{R}^n$ is a \emph{local Markov random field with respect to $G$} if the following property, named \emph{local Markov property}, holds: for all $i\in [n]$, $\pi_i$ is independent of $\pi_{[n]\setminus (\{i\} \cup \Gamma(i))}$ when conditioning on all coordinates in $\Gamma(i)$. Here, $\Gamma(i)$ denotes the neighborhood of $i$ in $G$.
\end{definition}
(In a true Markov random field, two additional technical conditions needs to be satisfied, called the \emph{pairwise Markov property} and the \emph{global Markov property}.)
We will give an example of a $1$-dimensionally dependent distribution that is not a local Markov random field with respect to any graph $G$ in which all vertices have a degree less than $(n-2)/2$.

Consider a distribution $\pi$ on $\{0,1\}^{n+2}$. A vector $v$ drawn from $\pi$ is formed according to the following random process: we are given $2n$ distinct probability distributions on $\{0,1\}$. We name these distributions $\pi^{i,0}$ and $\pi^{i,1}$, for $i\in[n]$. These distributions are such that both $0$ and $1$ occur with positive probability. Let $v'$ be a value drawn from yet another distribution $\pi'$ on $\{0,1\}$ where again both $0$ and $1$ have positive probability. The final generated vector is then $(v^{1,v'}, v^{2,v'}, \ldots, v^{n,v'}, v',v')$, where $v^{i,v'}$ is drawn from $\pi^{i,v'}$.

$\pi$ is clearly $1$-dimensionally dependent, since for $i \in [n]$ the conditional marginal distribution $\pi_{i,\vec{v}_{-i}}$ is determined by only the value $v'$, which is the value of the $(n+1)$-th coordinate. Also, the value of the $(n+1)$-th coordinate is entirely determined by the $(n+2)$-th coordinate, and vice versa.

We can also easily see that $\pi$ is not a Markov random field with respect to any graph in which all vertices have a degree less than $n/2$. Let $G$ be a graph such that $\pi$ is a Markov random field. Suppose for contradiction that there exists an $i \in [n]$ for which it holds that $\Gamma(i) \subseteq [n]$. Then the local Markov property would be violated. Therefore, each vertex in $[n]$ is connected to either vertex $n+1$ or $n+2$. Hence, we conclude that either vertex $n+1$ or $n+2$ has at least $n/2$ vertices attached to it. 

\section{A blind-offer mechanism that achieves $O(1/d)$-approximated revenue for Markov random fields of degree $d$}
\label{apx:simpleFields}
Consider the following marking procedure: given Markov random fields of degree $d$ (in which nodes represent buyers),
at each round, mark a uniformly random node, then remove the node and all its neighbors and repeat until no node is left.
Note that, each node is marked with probability at least $1/(d+1)$.
Consider now the following blind offer mechanism:
do value queries to the unmarked buyers, and run posted pricing on the marked buyers.
The marked buyers' values are independent conditioned on the other values of the others.
Then, one can use the results of \cite{chawla2014approximate} and \cite{yan}
to extract a fraction $1-1/e$ of the revenue extractable from them\footnote{We acknowledge an anonymous reviewer for pointing out this simple algorithm}.

\section{Missing proofs}\label{sec:missingproofs}
\subsection{Proof of Proposition~\ref{prop:logunlimited}}
\label{sec:logunlimited}
\begin{proof}
Let $M$ be the sequential posted price mechanism that draws for each $i \in [n]$ a value $p_i$ uniformly at random from the set $S_i = \{v_{\pi,i}^{\min}2^k \colon k \in [\lceil \log(r_{\pi,i})-1 \rceil] \cup \{0\}\}$. $M$ proposes prices to the buyers in an arbitrary order, and offers price $p_i$ to buyer $i$.

For $i \in [n]$, let $\pi_{i}$ be the probability distribution of the $i$th coordinate of $\pi$.
Note that $|S_i|$ does not exceed $\log(r_{\pi,i})$. Therefore the probability that $p_i$ is the highest possible value (among the values in $S_i$) that does not exceed the value drawn from $\pi_{i}$, is at least $1/\log(r_{\pi,i})$. More formally, let $\pi_{S_i}$ be the uniform distribution on $S$; then
\begin{equation*}
\mathbf{Pr}_{v_{i} \sim \pi_{i}, p_i \sim \pi_{S_i} }[p_i \leq v_{i} \cap (\not\exists p'_i \in S_i \colon p'_i > p_i \wedge p'_i \leq v_{i})] \leq \frac{1}{\log(r_{\pi,i})}.
\end{equation*}
Thus, with probability $1/\log(r_{\pi,i})$, the mechanism extracts from buyer $i$ a revenue of exactly $v_{i,\min}2^{k}$, where $k$ is the number such that the value drawn from $\pi_{i}$ lies in between $v_{i,\min}2^i$ and $v_{i,\min}2^{i+1}$. This implies that with probability $1/\log(r_{\pi,i})$ the mechanism extracts from buyer $i$ a revenue that lies a factor of at most $1/2$ away from $v_i$. This leads to the conclusion that 
\begin{align*}
\mathbf{E}_{\vec{v} \sim \pi}[\text{revenue of }M(\vec{v})] & \geq \sum_{i \in [n]} \frac{1}{\log(r_{\pi,i})}\frac{1}{2}\mathbf{E}_{v_i \sim \pi_i}[v_i] \\
 & \geq \frac{1}{2\log(\max\{r_{\pi,i} \colon i \in [n]\})}\sum_{i \in [n]}\mathbf{E}_{v_i \sim \pi_i}[v_i] \\
 & = \frac{1}{2\log(\max\{r_{\pi,i} \colon i \in [n]\})}\mathbf{E}_{\vec{v} \sim \pi}[OPT(\vec{v})],
\end{align*} 
where $OPT(\vec{v}) = \sum_{i \in [n]} v_i$ denotes the optimal social welfare when the buyers have valuation vector $\vec{v}$.
\end{proof}

\subsection{Proof of Lemma~\ref{thm:lp}}
\begin{proof}
The second claim has been already proved in the proof sketch.
It remains to prove the first claim.
To this aim, let us first introduce some specialized notation:
Let $\sigma$ now be a probability distribution on a finite subset of $\mathbb{R}_{\geq 0}$
and $x \in \text{supp}(\sigma)$,
we write $\text{prec}_{\sigma}(x)$ to denote $\max \text{supp}(\sigma) \cap [0, x)$ if $\text{supp}(\sigma) \cap [0, x)$ is non-empty.
Otherwise, if $\text{supp}(\sigma) \cap [0, x) = \varnothing$, we define $\text{prec}_{\sigma}(x) = 0$.
Similarly, we write $\text{succ}_{\sigma}(x)$ to denote $\min \text{supp}(\sigma) \cap (x, \infty]$.
(We leave $\text{succ}_{\sigma}(x)$ undefined if $\text{supp}(\sigma) \cap (x, \infty]$ is empty.)

Suppose now that $A$ is an optimal dominant strategy IC, ex-post IR mechanism. For $\vec{v} \in \text{supp}(\pi)$, denote by $x(\vec{v})$ the expected allocation vector output by $A$ when the buyers report valuation vector $\vec{v}$ (so that for $i \in [n]$, the value $x_i(\vec{v})$ is the probability that $i$ gets allocated service, when the buyers report $\vec{v}$) and denote by $p(\vec{v})$ the vector of expected prices charged by $A$ when the buyers report $\vec{v}$. 
Ex-post incentive compatibility states that 
\begin{equation*}
\forall i \in [n], \vec{v}_{-i} \in \text{supp}(\pi_{-i}), (v_i,v_i') \in \text{supp}(\pi_i)^2 \colon v_ix_i(v_i,\vec{v}_{-i}) - p_i(v_i,\vec{v}_{-i}) \geq v_ix_i(v_i',\vec{v}_{-i}) - p_i(v_i',\vec{v}_{-i}),
\end{equation*}
and ex-post individual rationality states that
\begin{equation*}
\forall i \in [n], \vec{v} \in \text{supp}(\pi) \colon v_ix_i(v_i,\vec{v}_{-i}) - p_i(v_i,\vec{v}_{-i}) \geq 0.
\end{equation*}

The next lemma states that, in $A$, the allocation probability for a buyer is non-decreasing in his reported valuation.
\begin{lemma}
\label{monotonicity}
For all $i \in [n]$, all $\vec{v}_{-i} \in \pi_{-i}$, and all $v_i, v_i' \in \text{supp}(\pi_{i,\vec{v}_{-i}})$, with $v_i < v_i'$, it holds that 
\begin{equation*}
x_i(v_i, \vec{v}_{-i}) \leq x_i(v_i', \vec{v}_{-i}).
\end{equation*}
\end{lemma}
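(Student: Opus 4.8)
The plan is to use the classical two-inequality exchange argument (Myerson-style monotonicity), instantiating the ex-post incentive compatibility constraint of $A$ at the two reports $v_i$ and $v_i'$ while holding $\vec{v}_{-i}$ fixed. The entire proof reduces to a single algebraic manipulation once the two relevant incentive constraints are on the table, so I do not anticipate any genuine difficulty; the only point deserving care is confirming that both deviations are legitimately covered by the ex-post IC condition.

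First I would record the two constraints. Since $v_i, v_i' \in \text{supp}(\pi_{i,\vec{v}_{-i}})$, both profiles $(v_i,\vec{v}_{-i})$ and $(v_i',\vec{v}_{-i})$ lie in $\text{supp}(\pi)$, so the ex-post IC inequality applies with each of them playing the role of the true type. Taking true type $v_i$ and deviation $v_i'$ gives
\begin{equation*}
v_i x_i(v_i,\vec{v}_{-i}) - p_i(v_i,\vec{v}_{-i}) \geq v_i x_i(v_i',\vec{v}_{-i}) - p_i(v_i',\vec{v}_{-i}),
\end{equation*}
and taking true type $v_i'$ and deviation $v_i$ gives
\begin{equation*}
v_i' x_i(v_i',\vec{v}_{-i}) - p_i(v_i',\vec{v}_{-i}) \geq v_i' x_i(v_i,\vec{v}_{-i}) - p_i(v_i,\vec{v}_{-i}).
\end{equation*}

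Next I would add these two inequalities. The price terms $p_i(v_i,\vec{v}_{-i})$ and $p_i(v_i',\vec{v}_{-i})$ occur with opposite signs on the two sides and cancel, leaving
\begin{equation*}
v_i x_i(v_i,\vec{v}_{-i}) + v_i' x_i(v_i',\vec{v}_{-i}) \geq v_i x_i(v_i',\vec{v}_{-i}) + v_i' x_i(v_i,\vec{v}_{-i}),
\end{equation*}
which rearranges to $(v_i' - v_i)\bigl(x_i(v_i',\vec{v}_{-i}) - x_i(v_i,\vec{v}_{-i})\bigr) \geq 0$. Since $v_i < v_i'$ the first factor is strictly positive, so the second factor must be nonnegative, i.e. $x_i(v_i,\vec{v}_{-i}) \leq x_i(v_i',\vec{v}_{-i})$, as claimed.

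The only subtlety — and hence the ``hard part,'' though it is really just a bookkeeping check — is ensuring that both incentive constraints invoked above are legitimate instances of ex-post IC. This is precisely why the hypothesis restricts $v_i, v_i'$ to $\text{supp}(\pi_{i,\vec{v}_{-i}})$ rather than to the larger marginal support $\text{supp}(\pi_i)$: conditioning on $\vec{v}_{-i}$ guarantees that each of $(v_i,\vec{v}_{-i})$ and $(v_i',\vec{v}_{-i})$ is a profile occurring with positive probability, so both serve as valid ``true'' profiles in the ex-post IC condition and their mutual deviations are admissible. I would also remark that dominant strategy IC yields the same two inequalities \emph{a fortiori}, so the argument covers the stronger solution concept used in the specification of $A$ as well.
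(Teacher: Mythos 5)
Your proof is correct and uses the same two ex-post IC constraints as the paper; the only cosmetic difference is that you add the inequalities directly to obtain $(v_i'-v_i)\bigl(x_i(v_i',\vec{v}_{-i})-x_i(v_i,\vec{v}_{-i})\bigr)\geq 0$, whereas the paper phrases the identical manipulation as a proof by contradiction. Your remark about why both profiles are legitimate instances of the IC condition matches the role of the hypothesis $v_i,v_i'\in\text{supp}(\pi_{i,\vec{v}_{-i}})$ in the paper.
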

\begin{proof}
By way of contradiction, we assume that $\epsilon = x_i(v_i,\vec{v}_{-i}) - x_i(v_i',\vec{v}_{-i}) > 0$. 
By ex-post incentive compatibility it holds that
\begin{align*}
v_ix_i(v_i,\vec{v}_{-i}) - p_i(v_i,\vec{v}_{-i}) & \geq v_ix_i(v_i',\vec{v}_{-i}) - p_i(v_i',\vec{v}_{-i}), \\
v_i'x_i(v_i',\vec{v}_{-i}) - p_i(v_i',\vec{v}_{-i}) & \geq v_i'x_i(v_i,\vec{v}_{-i}) - p_i(v_i,\vec{v}_{-i}).
\end{align*}
We now rewrite these inequalities as
\begin{align*}
v_ix_i(v_i,\vec{v}_{-i}) - v_ix_i(v_i',\vec{v}_{-i}) & \geq p_i(v_i,\vec{v}_{-i}) - p_i(v_i', \vec{v}_{-i}), \\
v_i'x_i(v_i,\vec{v}_{-i}) - v_i'x_i(v_i',\vec{v}_{-i}) & \leq p_i(v_i,\vec{v}_{-i}) - p_i(v_i', \vec{v}_{-i}).
\end{align*}
This results in the following pair of inequalities.
\begin{align*}
v_i\epsilon & \geq p_i(v_i,\vec{v}_{-i}) - p_i(v_i', \vec{v}_{-i}), \\
v_i'\epsilon & \leq p_i(v_i,\vec{v}_{-i}) - p_i(v_i', \vec{v}_{-i}).
\end{align*}
The two inequalities contradict each other, because $v_i' > v_i$ and we assumed $\epsilon > 0$. 
\end{proof}

The next lemma upper bounds the prices charged by $A$.
\begin{lemma}\label{lem:priceub}
For all $i \in [n]$, all $\vec{v}_{-i} \in \text{supp}(\vec{v}_{-i})$ and all $v_i \in \text{supp}(\pi_{i,\vec{v}_{-i}})$, it holds that
\begin{equation}\label{priceub}
p_{i}(v_i, \vec{v}_{-i}) \leq v_i x_i(v_i,\vec{v}_{-i})-\sum_{v_i' \in \text{supp}(\pi_{i,\vec{v}_{-i}}) \colon v_i' < v_i} (\text{succ}_{\pi_{i,\vec{v}_{-i}}}(v_i) - v_i)x_i(v_i',\vec{v}_{-i}).
\end{equation}
\end{lemma}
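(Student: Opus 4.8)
The plan is to prove the bound by a discrete analogue of Myerson's payment characterization, assembling the bound on $p_i(v_i,\vec{v}_{-i})$ from the smallest relevant value of the support up to $v_i$ by a telescoping sum. Fix $i$ and $\vec{v}_{-i}$ throughout, and abbreviate $x_i(w,\vec{v}_{-i})$ and $p_i(w,\vec{v}_{-i})$ by $x(w)$ and $p(w)$. Let $w_0 < w_1 < \cdots < w_\ell = v_i$ enumerate the elements of $\text{supp}(\pi_{i,\vec{v}_{-i}})$ that are at most $v_i$, so that $w_{j} = \text{succ}_{\pi_{i,\vec{v}_{-i}}}(w_{j-1})$ for each $j \in [\ell]$, and $\{w_0,\ldots,w_{\ell-1}\}$ is precisely the set of $v_i' \in \text{supp}(\pi_{i,\vec{v}_{-i}})$ with $v_i' < v_i$.

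First I would record the base case and the step inequalities. Ex-post individual rationality applied to the report $w_0$ gives $p(w_0) \leq w_0 x(w_0)$. For the telescoping step, I apply ex-post incentive compatibility to the pair $(w_j, w_{j-1})$: the constraint that type $w_j$ does not prefer to report $w_{j-1}$ reads $w_j x(w_j) - p(w_j) \geq w_j x(w_{j-1}) - p(w_{j-1})$, which rearranges to
\begin{equation*}
p(w_j) - p(w_{j-1}) \leq w_j\,\bigl(x(w_j) - x(w_{j-1})\bigr) \qquad \text{for every } j \in [\ell].
\end{equation*}

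Next I would assemble these into a bound on $p(w_\ell)$ by writing $p(w_\ell) = p(w_0) + \sum_{j=1}^{\ell}\bigl(p(w_j) - p(w_{j-1})\bigr)$ and substituting, obtaining $p(w_\ell) \leq w_0 x(w_0) + \sum_{j=1}^{\ell} w_j\,(x(w_j) - x(w_{j-1}))$. The final step is a summation-by-parts (Abel summation) manipulation of this right-hand side. Writing $x_j = x(w_j)$ and regrouping so that each allocation value is collected once,
\begin{equation*}
w_0 x_0 + \sum_{j=1}^{\ell} w_j x_j - \sum_{j=0}^{\ell-1} w_{j+1} x_j = w_\ell x_\ell - \sum_{j=0}^{\ell-1} (w_{j+1} - w_j)\, x_j,
\end{equation*}
where the base term $w_0 x_0$ is absorbed cleanly in the regrouping. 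Since $w_{j+1} - w_j = \text{succ}_{\pi_{i,\vec{v}_{-i}}}(w_j) - w_j$ and the values $w_0,\ldots,w_{\ell-1}$ are exactly the support values below $v_i = w_\ell$, this is precisely the claimed bound \eqref{priceub}.

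The argument is essentially routine; the only point requiring care is the index bookkeeping in the summation-by-parts step and the verification that the boundary contribution from the individual-rationality base case combines correctly to yield the clean telescoped form. The monotonicity established in Lemma~\ref{monotonicity} is not needed for the inequality itself, but it confirms that each increment $x(w_j) - x(w_{j-1})$ is non-negative, so the right-hand side has the expected discrete-integral interpretation.
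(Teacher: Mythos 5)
Your proof is correct and follows essentially the same route as the paper's: both sum the ``local'' downward ex-post IC constraints (each support value against its predecessor) together with the IR base case at the smallest support value, and telescope to obtain \eqref{priceub}. Your write-up merely makes explicit the Abel-summation bookkeeping that the paper compresses into ``summing the above yields (\ref{priceub})'', and your observation that Lemma~\ref{monotonicity} is not needed here is accurate.
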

\begin{proof}
For $v_i' \in \text{supp}(\pi_{i,\vec{v}_{-i}})$ the ex-post IC constraint for $\vec{v}_{-i}, v_i', \text{prec}_{\pi_{i,\vec{v}_{-i}}}(v_i)$ can be written as
\begin{equation*}
v_i'(x_i(v_i',\vec{v}_{-i}) - x_i(v_i'',\vec{v}_{-i})) \geq p_i(v_i',\vec{v}_{-i}) - p_i(v_i'',\vec{v}_{-i}),
\end{equation*}
where $v_i'' = \text{prec}_{\pi_{i,\vec{v}_{-i}}}(v_i)$.
Summing the above over all $v_i' \in \text{supp}(\pi_{i,\vec{v}_{-i}}), v_i' < v_i$ yields (\ref{priceub}).
\end{proof}

The optimal revenue among all ex-post IC, ex-post IR mechanisms (and thus the expected revenue of $A$) can be written as the following linear program, where $(x(\vec{v}))_{\vec{v} \in \text{supp}(\pi)}$ and $(p(\vec{v}))_{\vec{v} \in \text{supp}(\pi)}$ are the variables:
\begin{align}
& \max \Bigg\{ \sum_{i \in [n]} \sum_{\vec{v} \in \text{supp}(\pi)} p_i(\vec{v}) \qquad \Bigg| \\
& \qquad \forall i \in [n], \vec{v} \in \text{supp}(\pi) \colon v_ix_i(\vec{v}) - p_i(\vec{v}) \geq 0 \label{eq:lpopt1} \\
& \qquad \forall i \in [n], (v_i,v_i') \in \text{supp}(\pi_i)^2, \vec{v}_{-i} \in \text{supp}(\pi_{-i,v_i}) \colon v_ix_i(v_i,\vec{v}_{-i}) - p_i(v_i,\vec{v}_{-i}) \geq v_ix_i(v_i',\vec{v}_{-i}) - p_i(v_i',\vec{v}_{-i}) \label{eq:lpopt2} \\
& \qquad \forall \vec{v} \in \text{supp}(\pi) \colon \sum_{i} x_i(\vec{v}) \leq k \label{eq:lpopt3} \\
& \qquad \forall i \in [n], \vec{v} \in \text{supp}(\pi) \colon 0 \leq x_i(\vec{v}) \leq 1 \Bigg\} \label{eq:lpopt4}
\end{align}
In the above linear program, (\ref{eq:lpopt1}) are the ex-post IR constraints, (\ref{eq:lpopt2}) are the ex-post IC constraints, and (\ref{eq:lpopt3}) expresses that the service cannot be provided to more than $k$ buyers.

By Lemma \ref{monotonicity}, it is possible to add to the above linear program the constraints $x_i(v_i,\vec{v}_{-i}) \geq x_i(\text{prec}_{\pi_{i,\vec{v}_{-i}}}(v_i))$ for $i \in [n], \vec{v}_{-i} \in \text{supp}(\pi_{-i}), v_i \in \text{supp}(\pi_{i,\vec{v}_{-i}})$. Moreover, by Lemma \ref{lem:priceub}, replacing the objective function by 
\begin{equation*}
\sum_{i \in [n]} \sum_{\vec{v} \in \text{supp}(\pi)} \pi(\vec{v}) \left(v_i x_i(v_i,\vec{v}_{-i})-\sum_{v_i' \in \text{supp}(\pi_{i,\vec{v}_{-i}}) \colon v_i' < v_i} (\text{succ}_{\pi_{i,\vec{v}_{-i}}}(v_i) - v_i)x_i(v_i',\vec{v}_{-i})\right)
\end{equation*}
and removing the constraints (\ref{eq:lpopt1}) and (\ref{eq:lpopt2}) results in the following linear program that upper bounds the optimal revenue among the ex-post IC, ex-post IR mechanisms:
\begin{align}
& \max \Bigg\{ \sum_{i \in [n]} \sum_{\vec{v} \in \text{supp}(\pi)} \pi(\vec{v}) \left(v_i x_i(v_i,\vec{v}_{-i})-\sum_{v_i' \in \text{supp}(\pi_{i,\vec{v}_{-i}}) \colon v_i' < v_i} (\text{succ}_{\pi_{i,\vec{v}_{-i}}}(v_i) - v_i)x_i(v_i',\vec{v}_{-i})\right) \qquad \Bigg| \\
& \qquad \forall i \in [n], \vec{v} \in \text{supp}(\pi) \colon x_i(\vec{v}) \geq x_i(\text{prec}_{\pi_{i,\vec{v}_{-i}}}(v_i),\vec{v}_{-i}) \label{lp21} \\
& \qquad \forall \vec{v} \in \text{supp}(\pi) \colon \sum_{i} x_i(\vec{v}) \leq k \label{lp22} \\
& \qquad \forall i \in [n], \vec{v} \in \text{supp}(\pi) \colon 0 \leq x_i(\vec{v}) \leq 1 \Bigg\} \label{lp23}
\end{align}

We will show that the linear program (\ref{lp30}--\ref{lp32}) is equivalent to the above.
Set $y_i(\vec{v}) = x_i(\vec{v}) - x_i(\text{prec}_{\pi_{i,\vec{v}_{-i}}}(v_i),\vec{v}_{-i})$ for all $i \in [n], \vec{v} \in \text{supp}(\pi)$, and observe that the constraints (\ref{lp21}), (\ref{lp22}) and (\ref{lp23}) are then equivalent to (\ref{lp31}), (\ref{lp33}) and (\ref{lp32}) respectively.
Moreover, with this correspondence between $y$ and $x$, we can rewrite the objective function as follows:
\begin{align*}
 &  \sum_{i \in [n]} \sum_{\vec{v} \in \text{supp}(\pi)} \pi(\vec{v}) \left(v_i x_i(v_i,\vec{v}_{-i})-\sum_{v_i' \in \text{supp}(\pi_{i,\vec{v}_{-i}}) \colon v_i' < v_i} (\text{succ}_{\pi_{i,\vec{v}_{-i}}}(v_i) - v_i)x_i(v_i',\vec{v}_{-i})\right)  \\
 & = \sum_{i \in [n]} \sum_{\vec{v}_{-i} \in \text{supp}(\pi_{-i})} \pi_{-i}(\vec{v}_{-i}) \sum_{v_i \in \text{supp}(\pi_{i,\vec{v}_{-i}})} \pi_{i,\vec{v}_{-i}}(v_i) v_i x_i(v_i,\vec{v}_{-i}) \\
 & \quad - \sum_{i \in [n]} \sum_{\vec{v}_{-i} \in \text{supp}(\pi_{-i})} \pi_{-i}(\vec{v}_{-i}) \sum_{v_i \in \text{supp}(\pi_{i,\vec{v}_{-i}})} \pi_{i,\vec{v}_{-i}}(v_i) \sum_{v_i' \in \text{supp}(\pi_{i,\vec{v}_{-i}}) \colon v_i' < v_i} (\text{succ}_{\pi_{i,\vec{v}_{-i}}}(v_i) - v_i)x_i(v_i',\vec{v}_{-i})\\
 & = \sum_{i \in [n]} \sum_{\vec{v}_{-i} \in \text{supp}(\pi_{-i})} \pi_{-i}(\vec{v}_{-i}) \sum_{v_i \in \text{supp}(\pi_{i,\vec{v}_{-i}})} \pi_{i,\vec{v}_{-i}}(v_i) v_i x_i(v_i,\vec{v}_{-i}) \\
 & \quad - \sum_{i \in [n]} \sum_{\vec{v}_{-i} \in \text{supp}(\pi_{-i})} \pi_{-i}(\vec{v}_{-i}) \sum_{v_i \in \text{supp}(\pi_{i,\vec{v}_{-i}})} x_i(v_i,\vec{v}_{-i})(\text{succ}_{\pi_{i,\vec{v}_{-i}}}(v_i) - v_i)\mathbf{Pr}_{v_i' \sim \pi_{i,\vec{v}_{-i}}}[v_i' > v_i] \\
 & = \sum_{i \in [n]} \sum_{\vec{v}_{-i} \in \text{supp}(\pi_{-i})} \pi_{-i}(\vec{v}_{-i}) \sum_{v_i \in \text{supp}(\pi_{i,\vec{v}_{-i}})} x_i(v_i,\vec{v}_{-i}) ( \pi_{i,\vec{v}_{-i}}(v_i) v_i - (\text{succ}_{\pi_{i,\vec{v}_{-i}}}(v_i) - v_i)\mathbf{Pr}_{v_i' \sim \pi_{i,\vec{v}_{-i}}}[v_i' > v_i]) \\
 & = \sum_{i \in [n]} \sum_{\vec{v}_{-i} \in \text{supp}(\pi_{-i})} \pi_{-i}(\vec{v}_{-i}) \sum_{v_i \in \text{supp}(\pi_{i,\vec{v}_{-i}})} x_i(v_i,\vec{v}_{-i}) ( \mathbf{Pr}_{v_i' \sim \pi_{i,\vec{v}_{-i}}}[v_i' \geq v_i] v_i - (\text{succ}_{\pi_{i,\vec{v}_{-i}}}(v_i))\mathbf{Pr}_{v_i' \sim \pi_{i,\vec{v}_{-i}}}[v_i' > v_i]) \\
 & = \sum_{i \in [n]} \sum_{\vec{v}_{-i} \in \text{supp}(\pi_{-i})} \pi_{-i}(\vec{v}_{-i}) \sum_{v_i \in \text{supp}(\pi_{i,\vec{v}_{-i}})} (x_i(v_i,\vec{v}_{-i}) - x_i(\text{prec}_{\pi_{i,\vec{v}_{-i}}}(v_i),\vec{v}_{-i}) ) v_i \mathbf{Pr}_{v_i' \sim \pi_{i,\vec{v}_{-i}}}[v_i' \geq v_i] \\ 
 & = \sum_{i \in [n]} \sum_{\vec{v}_{-i} \in \text{supp}(\pi_{-i})} \pi_{-i}(\vec{v}_{-i}) \sum_{v_i \in \text{supp}(\pi_{i,\vec{v}_{-i}})} \mathbf{Pr}_{v_i' \sim \pi_{i,\vec{v}_{-i}}}[v_i' \geq v_i] v_i y_i(v_i,\vec{v}_{-i}). 
\end{align*}
This completes the proof.
\end{proof}

\subsection{Proof of Lemma~\ref{constantapx-klimited}}
\begin{proof}
We will show that the expected revenue of $M_{\pi}^k$ is at least
\begin{equation*}
\frac{2 - \sqrt{e}}{4} \sum_{i \in [n]} \sum_{\vec{v}_{-i} \in \text{supp}(\pi_{-i})} \pi_{-i}(\vec{v}_{-i}) \sum_{v_i \in \text{supp}(\pi_{i,\vec{v}_{-i}})} \mathbf{Pr}_{v_i' \sim \pi_{i,\vec{v}_{-i}}}[v_i' \geq v_i] v_i y_i^*(v_i,\vec{v}_{-i}),
\end{equation*}
which is by Theorem \ref{thm:lp} and the objective function (\ref{lp30}) a $(2 - \sqrt{e})/4$ fraction of the expected revenue of the optimal ex-post IC, ex-post IR mechanism.

For a vector of valuations $\vec{v} \in \text{supp}(\pi)$ and a buyer $i \in [n]$, denote by $D_{i,\vec{v}_{-i}}$ the probability distribution from which mechanism $M_{\pi}^k(\vec{v})$ draws a price that is offered to buyer $i$, in case iteration $i \in [n]$ is reached (as described in Definition \ref{optmechanismunit}). We let $V$ be a number that exceeds $\max\{v_i \colon i \in [n], \vec{v} \in \text{supp}(\pi)\}$ and represent by $V$ the option where $M_{\pi}^k(\vec{v})$ chooses the ``do nothing''-option during iteration $i$, so that $D_{i,\vec{v}_{-i}}$ is a probability distribution on the set $\{V\}\cup\{v_i \colon y_i^*(v_i,\vec{v}_{-i}) > 0\}$.
Let us formulate an initial lower bound on the expected revenue of $M_{\pi}^k$.
\begin{equation}
 \label{finalsummation}
\begin{aligned}
& \mathbf{E}_{\vec{v} \sim \pi}[\text{revenue of } M_{\pi}^k(\vec{v}) ]\\
& = \mathbf{E}_{\substack{ \vec{v} \sim \pi, \\ p_1 \sim D_{1,\vec{v}_{-1}}, \\ \vdots \\ p_n \sim D_{n,\vec{v}_{-n}}}}\left[\sum_{i \in [n]} p_i\mathbf{1}[p_i \leq v_i]\mathbf{1}[|\{j \in [i-1] \colon p_j \leq v_j\}| < k]\right]\\
& = \sum_{i \in [n]} \mathbf{E}_{\substack{ \vec{v} \sim \pi, \\ p_1 \sim D_{1,\vec{v}_{-1}}, \\ \vdots \\ p_n \sim D_{n,\vec{v}_{-n}}}}\left[p_i\mathbf{1}[p_i \leq v_i]\mathbf{1}[|\{j \in [i-1] \colon p_j \leq v_j\}| < k]\right]\\
& = \sum_{i \in [n]} \sum_{\vec{v} \in \text{supp}(\pi)} \pi(\vec{v}) \sum_{\substack{p_1 \in \text{supp}(D_{1,\vec{v}_{-1}}) \\ \vdots \\ p_i \in \text{supp}(D_{i,\vec{v}_{-i}})}} p_i \mathbf{1}[p_i \leq v_i]\mathbf{1}[|\{j \in [i-1] \colon p_j \leq v_j\}| < k] \prod_{j \in [i]} D_{j,\vec{v}_{-j}}(p_j)\\
& = \sum_{i \in [n]} \sum_{\vec{v} \in \text{supp}(\pi)} \pi(\vec{v}) \sum_{\substack{p_i \in \text{supp}(D_{i,\vec{v}_{-i}}) \\ \colon p_i \leq v_i }} p_i D_{i,\vec{v}_{-i}}(p_i) \sum_{\substack{p_1 \in \text{supp}(D_{1,\vec{v}_{-1}}) \\ \vdots \\ p_{i-1} \in \text{supp}(D_{i-1,\vec{v}_{-(i-1)}}) \\ \colon |\{j \in [i-1] | p_j \leq v_j\}| < k}} \prod_{j \in [i-1]} D_{j,\vec{v}_{-j}}(p_j)\\
& = \sum_{i \in [n]} \sum_{\vec{v} \in \text{supp}(\pi)} \pi(\vec{v}) \sum_{\substack{p_i \in \text{supp}(D_{i,\vec{v}_{-i}}) \\ \colon p_i \leq v_i }}  \frac{p_i y_i^*(p_i,\vec{v}_{-i})}{2} \mathbf{Pr}_{\substack{p_1 \sim D_{1,\vec{v}_{-1}} \\ \vdots \\ p_{i-1} \sim D_{i-1,\vec{v}_{-(i-1)}}}}[|\{j \in [i-1] \colon p_j \leq v_j\}| < k]\\
& \geq \sum_{i \in [n]} \sum_{\vec{v} \in \text{supp}(\pi)} \pi(\vec{v}) \sum_{\substack{p_i \in \text{supp}(D_{i,\vec{v}_{-i}}) \\ \colon p_i \leq v_i }} \frac{p_i y_i^*(p_i,\vec{v}_{-i})}{2} \mathbf{Pr}_{\substack{p_1 \sim D_{1,\vec{v}_{-1}} \\ \vdots \\ p_{n-1} \sim D_{n-1,\vec{v}_{-(n-1)}}}}[|\{j \in [n-1] \colon p_j \leq v_j\}| < k].
\end{aligned}
\end{equation}
For the second equality, we applied linearity of expectation; the third equality follows from the definition of expected value; to obtain the fourth inequality we eliminate the indicator functions by removing the appropriate terms from the summation; in the fifth inequality we substitute $D_{i,\vec{v}_{-i}}(p_i,\vec{v}_{-i})$ and $D_{j,\vec{v}_{-j}}(p_j',\vec{v}_{-j})$ by concrete probabilities. For the last inequality we lower bounded the last probability in the expression by replacing $i$ by $n$.

For $\vec{v} \in \text{supp}(\pi)$ and $i \in [n-1]$, let us denote by $z_i^{\vec{v}}$ the probability that a price drawn from $D_{i,\vec{v}_{-i}}$ does not exceed $v_i$, i.e.,
\begin{equation*}
z_i^{\vec{v}} = \sum_{p_i \in \text{supp}(D_{i,\vec{v}_{-i}}) \colon p_i \leq v_i} D_{i,\vec{v}_{-i}}(p_i) = \sum_{v_i' \in \text{supp}(\pi_{i,\vec{v}_{-i}}) \colon v_i' \leq v_i} \frac{y_i^*(v_i', \vec{v}_{-i})}{2},
\end{equation*} 
 and let $X_i^{\vec{v}}$ denote the random variable that takes the value $1$ with probability $z_i^{\vec{v}}$ and the value $0$ with probability $1- z_i^{\vec{v}}$. Then the final probability in the derivation above, i.e.,
\begin{equation*}
\mathbf{Pr}_{\substack{p_1 \sim D_{1,\vec{v}_{-1}} \\ \vdots \\ p_{n-1} \sim D_{i-1,\vec{v}_{-(n-1)}}}}[|\{j \in [n-1] \colon p_j \leq v_j\}| < k]
\end{equation*}
 can be written as
\begin{equation*}
1 - \mathbf{Pr}\left[\sum_{i \in [n-1]} X_i^{\vec{v}} \geq k \right].
\end{equation*}
Let $\mu = \mathbf{E}\left[\sum_{i \in [n-1]} X_i^{\vec{v}}\right]$. Next, we use a Chernoff bound:
\begin{theorem}[Chernoff bound (as in \citep{motwaniraghavan}).]
Let $X_1, \ldots, X_n$ be independent random $(0,1)$-variables such that, for $i \in [n]$, $\mathbf{Pr}[X_i = 1] = p_i$ where $p_i \in [0,1]$. Then, for $X = \sum_{i \in [n]} X_i$, $\mu = \mathbf{E}[X] = \sum_{i \in [n]} p_i$ and any $\delta > 0$,
\begin{equation*}
\mathbf{Pr}[X \geq (1 + \delta)\mu] \leq \left(\frac{e^{\delta}}{(1+\delta)^{1 + \delta}}\right)^{\mu}.
\end{equation*}
\end{theorem}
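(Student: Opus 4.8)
The plan is to prove this via the exponential moment (Bernstein--Chernoff) method, i.e., by applying Markov's inequality to the random variable $e^{tX}$ for a well-chosen parameter $t > 0$. The starting point is the observation that for any $t > 0$, the event $X \geq (1+\delta)\mu$ is identical to the event $e^{tX} \geq e^{t(1+\delta)\mu}$, since $z \mapsto e^{tz}$ is strictly increasing. Markov's inequality applied to the nonnegative random variable $e^{tX}$ then yields
\begin{equation*}
\mathbf{Pr}[X \geq (1+\delta)\mu] \leq e^{-t(1+\delta)\mu}\,\mathbf{E}[e^{tX}].
\end{equation*}
The entire argument reduces to controlling the moment generating function $\mathbf{E}[e^{tX}]$ and then choosing $t$ optimally.

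The next step exploits independence. Since $X = \sum_{i \in [n]} X_i$ with the $X_i$ mutually independent, the moment generating function factorizes as $\mathbf{E}[e^{tX}] = \prod_{i \in [n]} \mathbf{E}[e^{tX_i}]$. For each $(0,1)$-variable $X_i$ one computes directly $\mathbf{E}[e^{tX_i}] = (1 - p_i) + p_i e^t = 1 + p_i(e^t - 1)$. I would then apply the elementary inequality $1 + x \leq e^x$ (valid for all real $x$) with $x = p_i(e^t - 1)$ to obtain $\mathbf{E}[e^{tX_i}] \leq \exp(p_i(e^t - 1))$. Multiplying these bounds over all $i$ and using $\sum_{i \in [n]} p_i = \mu$ gives the clean estimate $\mathbf{E}[e^{tX}] \leq \exp\bigl((e^t - 1)\mu\bigr)$, which holds uniformly in $t > 0$.

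Combining the two displays gives $\mathbf{Pr}[X \geq (1+\delta)\mu] \leq \exp\bigl((e^t - 1)\mu - t(1+\delta)\mu\bigr)$ for every $t > 0$, so it remains to minimize the exponent over $t$. Differentiating $(e^t - 1) - t(1+\delta)$ with respect to $t$ and setting the derivative to zero yields $e^t = 1 + \delta$, i.e.\ the optimal choice $t = \ln(1+\delta)$; note that this is a legitimate (positive) value precisely because $\delta > 0$, which is where the hypothesis on $\delta$ is used. Substituting $e^t = 1+\delta$ and $t = \ln(1+\delta)$ back into the exponent gives $\bigl(\delta - (1+\delta)\ln(1+\delta)\bigr)\mu$, and exponentiating this expression produces exactly $\bigl(e^{\delta}/(1+\delta)^{1+\delta}\bigr)^{\mu}$, as claimed.

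I do not anticipate a genuine obstacle here, since this is the classical multiplicative Chernoff tail bound; the only point requiring a small amount of care is the optimization over $t$, where one must verify that the stationary point $t = \ln(1+\delta)$ indeed lies in the admissible range $t > 0$ (guaranteed by $\delta > 0$) and that it is a minimum rather than a maximum of the exponent. Everything else is a routine computation with moment generating functions and the inequality $1 + x \leq e^x$.
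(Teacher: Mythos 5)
Your proof is correct and complete: the application of Markov's inequality to $e^{tX}$, the factorization of the moment generating function via independence, the bound $1+x \leq e^x$ yielding $\mathbf{E}[e^{tX}] \leq \exp((e^t-1)\mu)$, and the optimization at $t = \ln(1+\delta)$ are all carried out properly (indeed, since the bound holds for every $t>0$, simply substituting this value of $t$ suffices even without verifying it is a minimizer). The paper itself does not prove this statement but imports it as a known result from the cited reference of Motwani and Raghavan, and your argument is precisely the classical exponential-moment proof given there, so the two approaches coincide.
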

This implies that the expression above is bounded as follows. 
$$
1 - \mathbf{Pr}\left[\sum_{i \in [n-1]} X_i^{\vec{v}} \geq \left(1 + \left(\frac{k}{\mu} - 1\right)\right)\mu \right] \geq 1 - \left(\frac{e^{k/\mu - 1}}{(k/\mu)^{k/\mu}}\right)^{\mu}.
$$
By the definition of $z_i^{\vec{v}}$ and constraint (\ref{lp33}) of the linear program, it holds that $\mu = \sum_{i \in [n-1]} z_i^{\vec{v}} \leq k/2$. We can lower bound the expression above by replacing $\mu$ by $k/2$. To see this, rewrite it first into the following:
\begin{equation*}
1 - \left(\frac{e^{k/\mu - 1}}{(k/\mu)^{k/\mu}}\right)^{\mu} = 1 - \frac{e^{k - \mu + k \ln(\mu)}}{k^k}.
\end{equation*}
The derivative of the exponent of $e$ (with respect to $\mu$) is positive for $\mu \in [0,k]$, which means that the exponent of $e$ is increasing in $\mu$ on $[0,k]$. Thus, replacing $\mu$ by its upper bound $k/2$ increases the exponent and therefore decreases the expression above. Therefore:
\begin{equation}\label{eq:bound_chernoff2}
1 - \mathbf{Pr}\left[\sum_{i \in [n-1]} X_i^{\vec{v}} \geq k\right] \geq 1 - \left(\frac{e}{4}\right)^{k/2} \geq 1 - \left(\frac{e}{4}\right)^{1/2} = \frac{2 - \sqrt{e}}{2}.
\end{equation}

Continuing from \eqref{finalsummation}, we obtain
\begin{align*}
& \mathbf{E}_{\vec{v} \sim \pi}[\text{revenue of } M_{\pi}^k(\vec{v}) ] \geq \sum_{i \in [n]} \sum_{\vec{v} \in \text{supp}(\pi)} \pi(\vec{v}) \sum_{p_i \in \text{supp}(D_{i,\vec{v}_{-i}}) \colon p_i \leq v_i } p_i \frac{y_i^*(p_i,\vec{v}_{-i})}{2} \frac{2 - \sqrt{e}}{2} \\
& \qquad = \frac{2 - \sqrt{e}}{4} \sum_{i \in [n]} \sum_{\vec{v} \in \text{supp}(\pi)} \pi(\vec{v}) \sum_{p_i \in \text{supp}(D_{i,\vec{v}_{-i}}) \colon p_i \leq v_i } p_i y_i^*(p_i,\vec{v}_{-i}) \\
& \qquad = \frac{2 - \sqrt{e}}{4} \sum_{i \in [n]} \sum_{\vec{v}_{-i} \in \text{supp}(\pi_{-i})} \pi_{-i}(\vec{v}_{-i}) \sum_{v_i, p_i \in \text{supp}(\pi_{i,\vec{v}_{-i}}) \colon p_i \leq v_i } \pi_{i,\vec{v}_{-i}}(v_i) p_i y_i^*(p_i,\vec{v}_{-i}) \\
& \qquad = \frac{2 - \sqrt{e}}{4} \sum_{i \in [n]} \sum_{\vec{v}_{-i} \in \text{supp}(\pi_{-i})} \pi_{-i}(\vec{v}_{-i}) \sum_{p_i \in \text{supp}(\pi_{i,\vec{v}_{-i}}) \colon p_i \leq v_i } \mathbf{Pr}_{\vec{v}_{-i} \sim \pi_{i,\vec{v}_{-i}}}[v_i \geq p_i] p_i y_i^*(p_i,\vec{v}_{-i}),
\end{align*}
which proves the first of the two claims. For the second claim, observe that \eqref{eq:bound_chernoff2} states a lower bound of $(2 - \sqrt{e})/2$ on the probability that all players get selected by the mechanism. Therefore, we can combine \eqref{eq:bound_chernoff2} with the principle explained in Remark \ref{rem:blindofferic}, which allows us to transform $M_{\pi}^k$ into a dominant strategy IC blind offer mechanism $\hat{M}_{\pi}^k$. Observe now that \eqref{finalsummation} is still a lower bound on the revenue of $\hat{M}_{\pi}^k$ so that the revenue analysis of $M_{\pi}^k$ also holds for $\hat{M}_{\pi}^k$.
\end{proof}

\end{document}